\newtheorem{theorem}{Theorem}[section]
\newtheorem{lemma}[theorem]{Lemma}
\newtheorem{corollary}[theorem]{Corollary}
\theoremstyle{definition}
\newtheorem{example}{Example}[section]
\theoremstyle{remark}
\newcommand{\Nd}{\mathcal{N}}
\newcommand{\R}{\mathbb{R}}
\newcommand{\G}{\mathcal{G}}
\newcommand{\e}{\varepsilon}
\newcommand{\E}{\mathbb{E}}
\newcommand{\Cov}{\operatorname{Cov}}
\newcommand{\Var}{\operatorname{Var}}
\newcommand*\diff{\mathop{}\!\mathrm{d}}
\newcommand{\1}{\mathds{1}}
\DeclareMathOperator{\eff}{eff}
\newcommand{\Or}{\mathcal{O}}
\DeclareMathOperator{\diag}{diag}
\newcommand{\x}{\boldsymbol{x}}
\newcommand{\z}{\boldsymbol{z}}
\newcommand{\tht}{\boldsymbol{\theta}}
\newcommand{\Th}{\boldsymbol{\Theta}}
\newcommand{\X}{\boldsymbol{X}}
\newcommand{\Z}{\boldsymbol{Z}}
\newcommand{\mx}{\bm{\mu}}
\newcommand{\B}{\bm{\beta}}
\newcommand{\f}{\mathbf{f}}
\newcommand{\M}{\mathbf{M}}
\newcommand{\m}{\boldsymbol{m}}
\newcommand{\Q}{\mathbf{Q}}
\newcommand{\Rho}{\mathbf{P}}
\newcommand{\g}{\boldsymbol{g}}
\renewcommand{\E}{\operatorname{E}}
\newcommand{\Prob}{\operatorname{P}}
\newcommand{\MSE}{\operatorname{MSE}}
\newcommand{\Id}{\bm{\mathbb{I}}}
\numberwithin{equation}{section}
\begin{document}
	\title{$D$-Optimal Subsampling Design for Multiple Linear Regression on Massive Data}
	
	
	\author{Torsten Glemser}
	\thanks{Corresponding author: Torsten Glemser. \textit{E-mail address}: \texttt{torsten.reuter@ovgu.de}}
	\address{Torsten Glemser. Otto von Guericke University Magdeburg. Universitätsplatz 2, 39106 Magdeburg,
		Germany}
	\curraddr{}
	\email{torsten.reuter@ovgu.de}
	
	\author{Rainer Schwabe}
	\address{Rainer Schwabe. Otto von Guericke University Magdeburg. Universitätsplatz 2, 39106 Magdeburg,
		Germany}
	\curraddr{}
	\email{rainer.schwabe@ovgu.de}

	\subjclass[2020]{Primary: 62K05. Secondary: 62R07}
	\keywords{Subdata, $D$-optimality, massive data, multiple linear regression.}
	\date{}
		
\maketitle

\begin{abstract}
	Data reduction is a fundamental challenge of modern technology, where classical statistical methods are not applicable because of computational limitations.  
	We consider multiple linear regression for an extraordinarily large number of observations, but only a few covariates. 
	Subsampling aims at the selection of a given proportion of the existing original data. 
	Under distributional assumptions on the covariates, we derive $D$-optimal subsampling designs and study their theoretical properties. 
	We make use of fundamental concepts of optimal design theory and an equivalence theorem from constrained convex optimization. 
	The thus obtained subsampling designs provide simple rules for whether to accept or reject a data point, allowing for an easy algorithmic implementation.
	In addition, we propose a simplified subsampling method with lower computational complexity that deviates from the $D$-optimal design.
	We present a simulation study, comparing both subsampling schemes with the IBOSS method in the case of a fixed size of the subsample. 
\end{abstract}

\section{Introduction}
\label{sec:intro}
Data reduction is a fundamental challenge of modern technology, which allows us to collect huge amounts of data. 
Often, technological advances in computing power do not keep pace with the amount of data, creating a need for data reduction.  
We speak of big data whenever the full data size is too large to be handled by traditional statistical methods. 
In this paper, we consider the case of so-called massive data where the number of units is extremely large, while the number of covariates is relatively small. 
Subsampling for high-dimensional data is studied e.g.~in the work of \cite{singh2023subdata}, which combines LASSO and subsampling.
To deal with huge amounts of units one of two methods is used: 
one strategy is to divide the data into several smaller data sets and compute them separately, known as divide-and-conquer, see \cite{lin2011aggregated}. 
Alternatively, one can find an informative subsample of the full data. 
This can be done in a probabilistic way, where units are sampled according to
some sampling distribution. 
\cite{ma2014statistical} present subsampling methods for linear regression models called algorithmic leveraging. There, the sampling distribution is based on the normalized statistical leverage scores of the covariate matrix.
Volume sampling, where subsamples are chosen proportional to the squared volume of the parallelepiped spanned by its units, is studied by \cite{derezinski2018reverse}.
On the other hand, subdata can be selected using a deterministic method.
\cite{shi2021model} present a space-filling subsampling method that is deterministic. 
There, the minimal distance between two units in the subdata is maximized. 
Most prominently, \cite{wang2019information} have introduced the information-based optimal subdata selection (IBOSS) to tackle big data linear regression in a deterministic fashion based on $D$-optimality. 
The IBOSS approach selects the outer-most data points of each covariate successively. 
Other subsampling methods for linear regression include the works by
\cite{wang2021orthogonal}, who have introduced orthogonal subsampling inspired by orthogonal arrays, which selects units in the corners of the design space and the optimal design based subsampling scheme by \cite{deldossi2021optimal}.
Subsampling becomes increasingly popular, leading to more work outside linear models.
\cite{cheng2020information} extend the idea of the IBOSS method from the linear model to logistic regression and other work on generalized linear regression includes the papers by \cite{zhang2021optimal} and \cite{ul2019optimal}.
Recently, \cite{su2022two-stage}  consider subsampling for missing data, whereas \cite{joseph2021supervised} focus on nonparametric models and make use of the information in the dependent variables. 
Various works consider subsampling when the full data is distributed over several data sources, among them \cite{yu2022optimal} and \cite{zhang2021distributed}
For a more thorough recent review on design inspired subsampling methods see the work by \cite{yu2023review}.

In this paper, we assume that both the regression model and the shape of the joint distribution of the covariates are known.
We search for $D$-optimal continuous subsampling designs of total measure $\alpha$ that are bounded from above by the distribution of the covariates.  
\cite{wynn1977optimum} and \cite{fedorov1989optimal} were the first to study such directly bounded designs. 
More recent work includes the paper by \cite{pronzato2004minimax} and more recently \cite{pronzato2021sequential} in the context of sequential subsampling. 
In \cite{reuter2023optimal} we study bounded $D$-optimal subsampling designs for polynomial regression in one covariate, using similar ideas as we use here.

In the present work, we extend results in \cite{reuter2023optimal} to the situation of multiple covariates.
In contrast to other work, we stay with the unstandardized version of the design emphasizing the subsampling character of the design. 
For the characterization of the optimal subsampling design, we make use of an equivalence theorem given in  \cite{sahm2001note}. 
This equivalence theorem allows us to construct subsampling designs for different settings of the distributional assumptions on the covariates. 
Based on this, we propose a simple subsampling scheme for selecting units. 
The resulting selection method includes all data points in the support of the optimal subsampling design and rejects all other units. 
Although this approach is basically probabilistic, as it allows probabilities for selection, the resulting optimal subsampling design is purely deterministic since it depends only on the acceptance region defined by the optimal subsampling design. 
We comment on the asymptotic behavior of the ordinary least squares estimator based on the $D$-optimal subsampling design that selects the data points with the largest Mahalanobis distance from the mean of the data.

Since the proposed algorithm requires computational complexity of the same magnitude as calculating the least squares estimator on the full data, we also propose a simplified version with lower computational complexity, that takes the variance of the covariates into account while disregarding the covariance between them.  

The rest of this paper is organized as follows. 
After introducing the model in Section~\ref{sec:model}, 
we present the setup and establish necessary concepts and notations in Section~\ref{sec:design}.
There, we first illustrate our methodology by 
the example of ordinary linear regression in one covariate. 
Then we construct optimal subsampling designs 
for multiple linear regression in more than one covariate. 
Algorithms are given in Section~\ref{sec:algorithms}
for generating subsamples from a full data set.
In Section~\ref{sec:method}, we consider the case of a fixed subsample size 
and examine the performance of our method in simulation studies. 
All simulations were done using R Statistical Software 
\cite[v4.2.2]{RCoreTeam2020}
and the pseudo-random variates implemented therein.
Finally, we make some concluding remarks 
and discuss some extensions in Section~\ref{sec:discussion}.
Technical details and proofs are deferred to an Appendix.

\section{Model Specification}
\label{sec:model}

We consider the situation of massive data, 
or, more precisely,
of multivariate data $(y_{i}, \x_{i})$, 
when the number $n$ of units $i = 1, \ldots, n$ is very large.
Here,
the response $y_{i}$ is the outcome of a response variable $Y_{i}$ 
and the vector $\x_{i} = (x_{i1}, \ldots, x_{id})^{\top}$
is the realization of the corresponding 
$d$-dimensional random vector $\X_{i}$ of covariates. 
We suppose that the relation between the covariates $\X_{i} = (X_{i1},\ldots,X_{id})^{\top}$
and the response $Y_{i}$
is given by the multiple linear regression model
\begin{equation}
	\label{eq:model-multiple-regression}
	Y_{i} = \beta_{0} + \beta_{1} X_{i1} + \beta_{2}X_{i2} + \ldots + \beta_{d}X_{id} + \e_{i} \, .
\end{equation}
Here $\beta_{0}$ denotes the intercept 
and $\beta_{j}$ is the slope parameter for the $j$th covariate $x_{j}$ in 
the covariates vector $\x = (x_{1}, \ldots, x_{d})^{\top}$, $j = 1, \ldots, d$. 
Our aim is to provide an approach 
which allows to estimate the vector 
$\B = (\beta_{0}, \ldots, \beta_{d})^{\top}$
of regression parameters
as precisely as possible with least possible efforts.

For notational convenience, we write the multiple linear regression model 
\eqref{eq:model-multiple-regression} as a general linear model 
\begin{equation*}
	\label{eq:general-linear-model}
	Y_{i} = \f(\X_{i})^{\top} \B + \e_{i} \, ,
\end{equation*}
$i=1, \ldots, n$,
where $\f(\x) = (1,\x^{\top})^{\top} = (1,x_{1},\ldots,x_{d})^{\top}$
is the $(d + 1)$-dimensional vector of regression functions. 
The observational errors $\e_{i}$ 
are assumed to be uncorrelated and homoscedastic 
with zero mean, $\E[\e_{i}] = 0$, and 
finite variance, $\Var[\e_{i}] = \sigma_{\e}^{2} > 0$.

Further, we assume that the covariates $\X_{i}$ 
are independent and identically distributed 
and have a common continuous multivariate distribution 
with probability density function $f_{\X}$.
The error terms $\e_{1}, \ldots, \e_{n}$
and the covariates $\X_{1}, \ldots, \X_{n}$ are assumed to be independent of each other.

\section{Continuous Subsampling Design}
\label{sec:design}

We consider a scenario where the response $y_{i}$ are expensive to obtain.
Therefore, only a proportion $\alpha$ ($0 < \alpha < 1$) of the $y_{i}$ will be observed
while all values $\x_{i}$ of the covariates are available. 
Alternatively, we may consider that all data $(y_{i}, \x_{i})$ are available, 
but parameter estimation is only computationally feasible 
on a smaller proportion $\alpha$ of the data. 
Either setup leads to the question 
which subsample of the data $(y_{i} \x_{i})$ 
yields the best estimate of the parameter vector $\B$ 
or essential parts of it. 
\vspace{3mm}

\subsection{General Case}
\label{subsec:continuous-subsample-general-case}
\mbox{}

Throughout this section, we assume 
that the distribution of $\X_{i}$ and, hence, its density $f_{\X}$ 
are known in advance
in order to derive theoretical results. 
For given proportion $\alpha$, 
we define a (continuous) subsampling design $\xi$
as a measure on $\R^{d}$ with total mass $\alpha$
which is uniformly bounded by the distribution of $\X$,
i.\,e.\ $\xi(B) \leq \Prob(X_{i} \in B)$ for all measurable sets $B$ on $\R^{d}$.
In particular, for $X_{i}$ with continuous distribution,
also the subsampling design $\xi$ is continuous
and has density function $f_{\xi}$ 
satisfying $\int f_{\xi}(\x)\diff \x = \alpha$ 
and 
$f_{\xi}(\x) \leq f_{\X}(\x)$ for all $\x \in \R^{d}$.

To evaluate the quality of a subsampling design $\xi$,
we use its (unstandardized) information matrix
\begin{equation*}
	\label{eq:information-matrix}
	\M(\xi) = \int \f(\x) \f(\x)^{\top} f_\xi(\x) \diff \x \, .
\end{equation*}
To ensure a meaningful information matrix 
$\M(\xi)$ with finite entries
for any subsampling design $\xi$,
we have to require the existence of
finite second moments ($\E[X_{ij}^{2}] < \infty$)
of the covariates $\X_{i}$.
Note
that, for any continuous subsampling design $\xi$,
the information matrix $\M(\xi)$
is nonsingular (almost surely) 
because $\M(\xi)$ is based on a density $f_{\xi}$ 
which cannot be concentrated on a proper affine subspace of $\R^{d}$.

According to a given subsampling design $\xi$,
a real subsample can be generated from the full data 
by selecting any unit $i$ with probability $f_{\xi}(\x_{i})/f_{\X}(\x_{i})$.
By the Law of Large Numbers, the effective proportion of accepted items 
will tend to $\alpha$ as the size $n$ of the data tends to infinity.
In so far, a subsampling design $\xi$ provides a probabilistic method
for generating a subsample with approximately the prescribed proportion $\alpha$.
In particular, 
approximate uniform random sampling 
can be achieved by a subsampling design $\xi_{\mathrm{unif}}$ 
with density $f_{\xi_{\mathrm{unif}}}(\x) = \alpha f_{X}(\x)$.

For a sampling procedure according to a subsampling design $\xi$,
the least squares estimator $\hat{\B}$ based on the sample 
is asymptotically normal with asymptotic covariance matrix 
proportional to the inverse $\M(\xi)^{-1}$ 
of the information matrix $\M(\xi)$
when the size $n$ of the full data tends to infinity, 
$\sqrt{n}(\hat{\B}_{n} - \B) \stackrel{\mathcal{D}}{\to} 
\Nd_{d+1}\left(\mathbf{0}, \sigma_{\e}^{2} \M(\xi)^{-1}\right)$. 
For details see Lemma~\ref{lemma:asymptotic-normality} in the Appendix.
The information matrix
$\M(\xi)$ thus measures the quality of 
a subsampling design $\xi$
in the sense that the asymptotic covariance becomes smaller
when the information gets larger.
Hence, we aim at maximizing the information
in order to minimize the covariance.

As, in general, the information matrix cannot be maximized 
in the Loewner sense of nonnegative-definiteness,
we adopt here the most popular $D$-criterion 
which aims at maximizing the determinant $\det(\M(\xi))$ 
of the information matrix
or, equivalently, to minimize the determinant of the
asymptotic covariance matrix.
Thus $D$-optimality may be interpreted as minimization 
of the volume of the asymptotic confidence ellipsoid 
of the parameter vector $\B$ based on the least squares estimator $\hat{\B}$. 
The $D$-optimal subsampling design 
of proportion $\alpha$ will be denoted by $\xi_{\alpha}^{*}$. 

The logarithmic version 
$\Phi_D(\xi) = -\ln(\det(\M(\xi)))$
of the $D$-criterion is convex.
Thus, methods from convex optimization may be employed
to characterize a $D$-optimal subsampling design $\xi_{\alpha}^{*}$
\citep[see e.\,g.\ ][Chapter~3]{silvey1980optimal}.
In particular, 
we apply a constrained equivalence theorem
under Kuhn-Tucker conditions 
\citep[][Corollary~1~(c)]{sahm2001note},
see Theorem~\ref{theorem:opt-design} in the Appendix.
For any subsampling design $\xi$, 
we define measures of location and dispersion
\begin{eqnarray}
	\label{eq:concentration ellipsoid-mean-scale}
	\m(\xi) & = & \frac{1}{\alpha} \int \x f_{\xi}(\x) \diff \x
		\qquad \mbox{and}
	\nonumber
	\\
	\mathbf{S}(\xi) & = & \int \x \x^{\top} f_{\xi}(\x) \diff \x - \alpha \m(\xi) \m(\xi)^{\top} \, .
\end{eqnarray}
Then we can characterize 
a $D$-optimal subsampling design $\xi_{\alpha}^{*}$
as follows.

\begin{theorem}
	\label{theorem:opt-design-general-distribution}
	The subsampling design $\xi_{\alpha}^{*}$ is $D$-optimal 
	if and only if
	$\xi_{\alpha}^{*}$ has density 
	\begin{equation*}
		\label{eq:opt-design-general-distribution}
		f_{\xi_{\alpha}^{*}}(\x) 
		= \left\{
				\begin{array}{cl}
					f_{\X}(\x) & \mbox{ for } (\x - \m(\xi_{\alpha}^{*}))^{\top} \mathbf{S}(\xi_{\alpha}^{*})^{-1} (\x - \m(\xi_{\alpha}^{*})) \geq c \, ,
				\\
				0 & \mbox{ otherwise,}
			\end{array} 
		\right.  
	\end{equation*}
	where $c$ satisfies 
	$\Prob\left((\X_{i} - \m(\xi_{\alpha}^{*}))^{\top} 
							\mathbf{S}(\xi_{\alpha}^{*})^{-1} 
							(\X_{i} - \m(\xi_{\alpha}^{*})) 
					\geq c
				\right) 
		= \alpha$.
\end{theorem}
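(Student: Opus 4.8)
The plan is to invoke the constrained equivalence theorem of \citet{sahm2001note} (Theorem~\ref{theorem:opt-design} in the Appendix), which for the $D$-criterion characterizes optimality through a threshold condition on the sensitivity function $\psi(\x,\xi) = \f(\x)^{\top}\M(\xi)^{-1}\f(\x)$. Under the upper-bound constraint $f_{\xi}\le f_{\X}$, the Kuhn--Tucker conditions take the bang-bang form: $\xi_{\alpha}^{*}$ is $D$-optimal if and only if there is a constant $c^{*}$ with $f_{\xi_{\alpha}^{*}}(\x)=f_{\X}(\x)$ wherever $\psi(\x,\xi_{\alpha}^{*})>c^{*}$ and $f_{\xi_{\alpha}^{*}}(\x)=0$ wherever $\psi(\x,\xi_{\alpha}^{*})<c^{*}$. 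Intuitively, minimizing $\Phi_{D}=-\ln\det\M$ forces the admissible mass to saturate the density $f_{\X}$ exactly where the sensitivity is large. The whole argument then reduces to identifying the superlevel sets of $\psi(\cdot,\xi_{\alpha}^{*})$ with the exterior of a concentration ellipsoid, so that this threshold condition becomes the stated Mahalanobis criterion.

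The heart of the proof is an explicit evaluation of $\psi(\x,\xi)$ for the regression vector $\f(\x)=(1,\x^{\top})^{\top}$. Using the definitions of $\m(\xi)$ and $\mathbf{S}(\xi)$ to identify the blocks, the information matrix reads
\begin{equation*}
  \M(\xi) = \begin{pmatrix} \alpha & \alpha\,\m(\xi)^{\top} \\ \alpha\,\m(\xi) & \mathbf{S}(\xi)+\alpha\,\m(\xi)\m(\xi)^{\top} \end{pmatrix}.
\end{equation*}
A block inversion via the Schur complement with respect to the $(1,1)$ entry $\alpha$ shows that this complement is precisely $\mathbf{S}(\xi)$, so the lower-right block of $\M(\xi)^{-1}$ equals $\mathbf{S}(\xi)^{-1}$ and the remaining blocks are expressed through $\mathbf{S}(\xi)^{-1}\m(\xi)$. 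Substituting $\f(\x)=(1,\x^{\top})^{\top}$, the cross terms collapse into a single quadratic form and yield the clean identity
\begin{equation*}
  \psi(\x,\xi) = \frac{1}{\alpha} + \bigl(\x-\m(\xi)\bigr)^{\top}\mathbf{S}(\xi)^{-1}\bigl(\x-\m(\xi)\bigr).
\end{equation*}
Thus the sensitivity function is, up to the additive constant $1/\alpha$, exactly the squared Mahalanobis distance of $\x$ from the design mean.

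With this identity, both directions of the equivalence follow at once. The set $\{\x:\psi(\x,\xi_{\alpha}^{*})>c^{*}\}$ coincides with $\{\x:(\x-\m(\xi_{\alpha}^{*}))^{\top}\mathbf{S}(\xi_{\alpha}^{*})^{-1}(\x-\m(\xi_{\alpha}^{*}))>c\}$ for $c=c^{*}-1/\alpha$, and similarly for the sublevel sets, so the Kuhn--Tucker density characterization is equivalent to the stated form. Because $\X_{i}$ has a continuous distribution, the bounding ellipsoid $\{\psi=c^{*}\}$ is a null set; the value of the density there is immaterial, which is why the inequality may be written as non-strict. Finally, the constant $c$ is pinned down by the total-mass constraint $\int f_{\xi_{\alpha}^{*}}(\x)\diff\x=\alpha$: integrating $f_{\xi_{\alpha}^{*}}=f_{\X}$ over the acceptance region gives exactly the probability equation $\Prob\bigl((\X_{i}-\m(\xi_{\alpha}^{*}))^{\top}\mathbf{S}(\xi_{\alpha}^{*})^{-1}(\X_{i}-\m(\xi_{\alpha}^{*}))\ge c\bigr)=\alpha$.

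I expect the main obstacle to be conceptual rather than computational: the characterization is self-referential, since the acceptance ellipsoid is defined through $\m(\xi_{\alpha}^{*})$ and $\mathbf{S}(\xi_{\alpha}^{*})$, which are themselves moments of the very design being characterized. The equivalence theorem legitimizes this fixed-point description — it certifies optimality of any design of the stated form without requiring us to solve for $\m$, $\mathbf{S}$ and $c$ explicitly — but care is needed to verify that the hypotheses of \citet{sahm2001note} are met in this continuous, infinite-dimensional constrained setting, in particular the nonsingularity of $\M(\xi_{\alpha}^{*})$ (already guaranteed for continuous designs) and the validity of the bang-bang Kuhn--Tucker conditions under the bound $f_{\xi}\le f_{\X}$.
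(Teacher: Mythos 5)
Your proposal is correct and takes essentially the same route as the paper: both invoke the constrained equivalence theorem (Theorem~\ref{theorem:opt-design}) and reduce it to the identity $\psi(\x,\xi) = \alpha(\x-\m(\xi))^{\top}\mathbf{S}(\xi)^{-1}(\x-\m(\xi)) + 1$ (your version differs only by the immaterial normalization factor $\alpha$ in the definition of the sensitivity function), whose level sets are ellipsoid surfaces of Lebesgue measure zero, so the continuity hypothesis is satisfied and the threshold condition becomes the stated Mahalanobis criterion. Your explicit Schur-complement computation of $\M(\xi)^{-1}$ merely fills in a calculation the paper leaves implicit.
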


For the $D$-optimal subsampling design $\xi_{\alpha}^{*}$,
the resulting sampling procedure is deterministic:
units will be selected if their values of the covariates
lie outside the ellipsoid with center $\m(\xi_{\alpha}^{*})$, 
dispersion matrix $\mathbf{S}(\xi_{\alpha}^{*})$,
and ``radius'' $c$ as defined 
in equation~\eqref{eq:concentration ellipsoid-mean-scale}
and Theorem~\ref{theorem:opt-design-general-distribution}.
Units will be not included if they lie in the interior of that ellipsoid.

Note that $c$ is the $(1 - \alpha)$-quantile of the distribution of
$(\X_{i} - \m(\xi_{\alpha}^{*}))^{\top} \mathbf{S}(\xi_{\alpha}^{*})^{-1} (\X_{i} - \m(\xi_{\alpha}^{*}))$.
\vspace{3mm}

\subsection{A Single Covariate}
\label{subsec:continuous-subsample-single-covariate}
\mbox{}

Before we treat the case of multiple linear regression,
we briefly summarize results for the case of ordinary linear regression 
presented in \citet[Section~4]{reuter2023optimal}
where the single covariate $x$ has dimension $d = 1$.
There, the regression function $\f$
and the parameter vector $\B$ reduce 
to $\f(x) = (1, x)^{\top}$ and $\B = (\beta_{0}, \beta_{1})^{\top}$,
respectively. 
We assume that the distribution of the single covariate $\X_{i}$
has finite second moment ($\E[X_{i}^{2}] < \infty$). 
In this situation, the result 
of Theorem~\ref{theorem:opt-design-general-distribution}
simplifies.

\begin{corollary}
	\label{corollary:opt-design-one-cov-general}
	For $d = 1$, 
	the subsampling design $\xi_{\alpha}^{*}$ is $D$-optimal 
	if and only if
	$\xi_{\alpha}^{*}$ has density 
	\begin{equation*}
		\label{eq:opt-design-one-cov-general}
		f_{\xi_{\alpha}^{*}}(x) 
		= \left\{
				\begin{array}{cl}
					f_{X}(x) & \mbox{ for } x \leq a \mbox{ or } x \geq b \, ,
					\\
					0 & \mbox{ otherwise,}
				\end{array} 
		\right.  
	\end{equation*}
	where $(a + b)/2  = \alpha^{-1} \int x f_{\xi_{\alpha}^{*}}(x) \diff x$ 
	and 
	$\Prob(a < X_{i} < b) = 1 - \alpha$.
\end{corollary}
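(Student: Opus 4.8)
The plan is to obtain the corollary as a direct specialization of Theorem~\ref{theorem:opt-design-general-distribution} to the scalar case $d = 1$, tracking how each object in the general statement degenerates. When $d = 1$ the location measure $\m(\xi_{\alpha}^{*})$ reduces to the scalar $m = \alpha^{-1}\int x f_{\xi_{\alpha}^{*}}(x)\diff x$ and the dispersion $\mathbf{S}(\xi_{\alpha}^{*})$ reduces to the scalar $S = \int x^{2} f_{\xi_{\alpha}^{*}}(x)\diff x - \alpha m^{2}$. First I would record that $S > 0$, so that the inverse $\mathbf{S}(\xi_{\alpha}^{*})^{-1}$ appearing in the theorem is well defined. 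Indeed, for $d = 1$ the information matrix is the $2\times 2$ matrix with diagonal entries $\alpha$ and $\int x^{2} f_{\xi_{\alpha}^{*}}\diff x$ and off-diagonal entry $\alpha m$, whence $\det(\M(\xi_{\alpha}^{*})) = \alpha S$; since $\M(\xi_{\alpha}^{*})$ is nonsingular (as noted in Section~\ref{sec:design}) and $\alpha > 0$, we conclude $S > 0$.

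Next I would rewrite the ellipsoidal acceptance region of Theorem~\ref{theorem:opt-design-general-distribution} as a union of two half-lines. With $S > 0$ a scalar, the quadratic condition $(x - m)^{2}S^{-1} \geq c$ is equivalent to $|x - m| \geq \sqrt{cS}$. Here $c$ is the $(1-\alpha)$-quantile of the nonnegative random variable $(X_i - m)^{2}S^{-1}$, so $c \geq 0$; moreover, since $X_i$ is continuous and nondegenerate we have $\Prob((X_i-m)^2 S^{-1} < c) = 1 - \alpha > 0$, forcing $c > 0$ and hence $\sqrt{cS} > 0$. Setting $a = m - \sqrt{cS}$ and $b = m + \sqrt{cS}$, the set $\{x : |x - m| \geq \sqrt{cS}\}$ is precisely $\{x \leq a\}\cup\{x \geq b\}$ with $a < b$, so the density in Theorem~\ref{theorem:opt-design-general-distribution} becomes exactly the two-case density claimed in the corollary.

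It then remains to translate the two defining relations into the stated conditions on $a$ and $b$. By construction $(a + b)/2 = m = \alpha^{-1}\int x f_{\xi_{\alpha}^{*}}(x)\diff x$, which is the midpoint condition. For the second, the probability identity of Theorem~\ref{theorem:opt-design-general-distribution} reads $\Prob((X_i - m)^{2}S^{-1} \geq c) = \alpha$; since the complementary event $\{(X_i - m)^{2}S^{-1} < c\}$ coincides with $\{a < X_i < b\}$, this is equivalent to $\Prob(a < X_i < b) = 1 - \alpha$. Because every step is an equivalence and the scalar reductions are reversible, i.e.\ from any $a, b$ with the two stated properties one recovers $m = (a+b)/2$ and $\sqrt{cS} = (b-a)/2$, the ``if and only if'' of the corollary follows from the ``if and only if'' of the theorem.

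Since the argument is essentially bookkeeping, I do not anticipate a genuine obstacle. The only point requiring care is the positivity $S > 0$ together with $c > 0$, which guarantees that $\sqrt{cS}$ is real and that the interval $(a, b)$ is nondegenerate; I would handle this at the outset through the determinant identity $\det(\M(\xi_{\alpha}^{*})) = \alpha S$ and the continuity of the covariate distribution, as indicated above.
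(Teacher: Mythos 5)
Your proposal is correct and follows essentially the same route as the paper: the paper's own proof likewise reduces the quadratic (sensitivity/ellipsoid) condition in dimension $d=1$ to the exterior of an interval symmetric about $\m(\xi_{\alpha}^{*})$, with the probability condition coming from the proportion $\alpha$; you merely cite Theorem~\ref{theorem:opt-design-general-distribution} as a black box where the paper re-runs the sensitivity-function argument of Theorem~\ref{theorem:opt-design}. Your additional checks that $S>0$ (via $\det(\M(\xi_{\alpha}^{*}))=\alpha S$) and $c>0$ are sound and make explicit points the paper leaves implicit.
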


We can make use of symmetry considerations
to further simplify the characterization 
of the $D$-optimal subsampling design.
If the distribution of the covariate $X_{i}$ is symmetric at $0$,
then $\E[X_{i}] = 0$,
and the density is invariant with respect to sign change $g(x) = - x$,
i.\,e.\ $f_{X}(- x) = f_{X}(x)$. 
Moreover, the regression function $\f(x)$ is linearly equivariant
with respect to sign change, 
as 
$\f(g(x)) 
	= \begin{pmatrix}
			1 & 0
			\\
			0 & -1
		\end{pmatrix}
		\f(x)$ 
for all $x$.
For any subsampling design $\xi$,
its image $\xi^{g}$ under sign change,
i.\,e.\ $\xi^{g}(B) = \xi(- B)$ for any measurable set $B$,
is itself a subsampling design
as $\xi^{g}$ has mass $\alpha$ and
$f_{\xi^{g}}(x) = f_{\xi}(-x) \leq f_{X}(x)$
by the symmetry of $f_{X}$.
As a consequence, also the symmetrization 
$\bar{\xi} = (\xi + \xi^{g}) / 2$
is a subsampling design
satisfying $f_{\bar{\xi}}(x) = (f_\xi(x) + f_{\xi^{g}}(x))/2 \leq f_{X}(x)$.
Further, the $D$-criterion is invariant 
with respect to sign change,
i.\,e.\ $\Phi_D(\xi^{g}) = \Phi_D(\xi)$,
so that $\xi$ is dominated by its symmetrization $\bar{\xi}$,
i.\,e.\ $\Phi_D(\bar{\xi}) \leq \Phi_D(\xi)$,
because of the convexity of the $D$-criterion.
Thus we can restrict our search for a $D$-optimal subsampling design $\xi_{\alpha}^{*}$ 
to the essentially complete class of symmetric subsampling designs 
with $f_\xi(-x) = f_\xi(x)$ 
\citep[see][Chapter 13.11]{pukelsheim1993optimal}.

For any invariant subsampling design $ \xi$, 
the off-diagonal entry $\int x f_{\xi}(x) \diff x$ 
of the information matrix $\M(\xi)$
is equal to $0$.
The cut-off points $a$ and $b$ 
in Corollary~\ref{corollary:opt-design-one-cov-general}
are symmetric at $0$, i.\,e.\ $a = - b$,
and can be determined explicitly in terms
of the distribution of $X_{i}$.

\begin{corollary}
	\label{corollary:opt-design-one-cov-0}
	Let $d = 1$ and $f_{X}$ be symmetric at $0$.
	The subsampling design $\xi_{\alpha}^{*}$ is $D$-optimal 
	if and only if
	$\xi_{\alpha}^{*}$ has density 
	\begin{equation*}
		\label{eq:opt-design-one-cov-0}
		f_{\xi_{\alpha}^{*}}(x) 
			= \left\{
					\begin{array}{cl}
						f_{X}(x) & \mbox{ for } |x| \geq x_{1 - \alpha/2} \, ,
						\\
						0 & \mbox{ otherwise,}
					\end{array} 
				\right.  
	\end{equation*}
	where $x_{1 - \alpha/2}$ is the $ (1 - \alpha/2) $-quantile of $ X_{i} $.
\end{corollary}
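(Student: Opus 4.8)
The plan is to specialize Corollary~\ref{corollary:opt-design-one-cov-general} to the symmetric case by showing that the mean $\m(\xi_{\alpha}^{*})$ of any $D$-optimal design vanishes, so that the two cut-off points collapse to $a = -b$, and then to identify the common value $b$ with the $(1-\alpha/2)$-quantile through the tail-probability condition. I would treat the two implications of the ``if and only if'' separately.

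For the ``only if'' direction, let $\xi_{\alpha}^{*}$ be $D$-optimal. By Corollary~\ref{corollary:opt-design-one-cov-general} it has the stated two-sided form with cut-offs $a \le b$ satisfying $(a+b)/2 = \m(\xi_{\alpha}^{*})$ and $\Prob(a < X_i < b) = 1-\alpha$. The first step is to argue $\m(\xi_{\alpha}^{*}) = 0$. As recalled in the preceding discussion, the symmetrization argument places a symmetric design in the essentially complete class, and for a symmetric density the off-diagonal entry $\int x f_{\xi}(x)\diff x$ is the integral of an odd function, hence $0$; this produces a symmetric optimal design with $\m = 0$. Combining this with the strict convexity of $\Phi_D$ as a function of the information matrix — which forces all $D$-optimal designs to share one and the same $\M(\xi_{\alpha}^{*})$, and in particular the same off-diagonal entry — shows that $\m(\xi_{\alpha}^{*}) = 0$ for \emph{every} optimal design. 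Then $(a+b)/2 = 0$, i.e.\ $a = -b$. Substituting into the tail condition gives $\Prob(-b < X_i < b) = 1 - \alpha$; by symmetry of $f_X$ each tail carries mass $\alpha/2$, so $\Prob(X_i < b) = 1 - \alpha/2$ and $b = x_{1-\alpha/2}$, as claimed.

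For the ``if'' direction I would simply verify that the candidate design with $f_{\xi}(x) = f_X(x)$ on $\{|x| \ge x_{1-\alpha/2}\}$ and $f_{\xi}(x)=0$ otherwise meets both conditions of Corollary~\ref{corollary:opt-design-one-cov-general}, whence it is $D$-optimal: its support is symmetric, so $\int x f_{\xi}(x)\diff x = 0$ and the mean condition $(a+b)/2 = \m(\xi)$ holds with $a = -b = -x_{1-\alpha/2}$, while the defining property of the quantile yields $\Prob(-x_{1-\alpha/2} < X_i < x_{1-\alpha/2}) = 1-\alpha$.

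The routine part is the quantile bookkeeping in the last step; the one genuinely non-trivial point is ensuring $a = -b$ for every optimal design rather than merely for a symmetric representative. I expect this to be the main obstacle, and the cleanest resolution is the strict-convexity argument above, which yields uniqueness of the optimal information matrix and hence a vanishing off-diagonal entry for all optima. An alternative, more computational route would be to show directly that the system $(a+b)/2 = \alpha^{-1}(\int_{-\infty}^{a} x f_X(x)\diff x + \int_{b}^{\infty} x f_X(x)\diff x)$ together with $\Prob(a < X_i < b) = 1-\alpha$ admits $a = -b$ as its unique solution, via a monotonicity argument in the cut-off parameter; this avoids invoking strict convexity but requires more care with the implicit dependence of $b$ on $a$.
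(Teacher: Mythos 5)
Your proposal is correct and takes essentially the same route as the paper: the paper likewise specializes Corollary~\ref{corollary:opt-design-one-cov-general} by the symmetrization argument (restricting to the essentially complete class of sign-invariant designs, whose off-diagonal moment $\int x f_{\xi}(x)\diff x$ vanishes, forcing $a=-b$) and then the same quantile bookkeeping. The one point where you go beyond the paper's presentation is your strict-convexity argument showing that \emph{every} $D$-optimal design, not merely the symmetric representative, shares the unique optimal information matrix and hence has mean zero — the paper leaves this step implicit, so your treatment of the ``only if'' direction is, if anything, the more complete one.
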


Under the conditions of 
Corollary~\ref{corollary:opt-design-one-cov-0},
the information matrix $\M(\xi_{\alpha}^{*})$
of the $D$-optimal subsampling design $\xi_{\alpha}^{*}$
is diagonal,
\[
	\M(\xi_{\alpha}^{*})
	= \begin{pmatrix}
			\alpha & 0
			\\
			0 & m_{2}(\xi_{\alpha}^{*})
		\end{pmatrix} \, ,
\]
where $m_{2}(\xi_{\alpha}^{*}) = \int x^2 f_{\xi_{\alpha}^{*}} \diff x$
is the second moment of  $\xi_{\alpha}^{*}$.

By equivariance with respect to location shifts $g(x) = x + \mu$,
this result can be transferred 
to distributions symmetric at some location parameter $\mu$ 
($ f_{X}(\mu - x) = f_{X}(\mu + x) $),
see~\citet[Theorem~3.2]{reuter2023optimal}. 

\begin{corollary}
	\label{corollary:opt-design-one-cov-mu}
	Let $d=1$ and $f_{X}$ be symmetric at $\mu$.
	The subsampling design $\xi_{\alpha}^{*}$ is $D$-optimal 
	if and only if
	$\xi_{\alpha}^{*}$ has density 
	\begin{equation*}
		\label{eq:opt-design-one-cov-mu}
		f_{\xi_{\alpha}^{*}}(x) 
		= \left\{
				\begin{array}{cl}
					f_{X}(x) & \mbox{ for } x \leq x_{\alpha/2} \mbox{ or } x \geq x_{1 - \alpha/2} \, ,
					\\
					0 & \mbox{ otherwise,}
				\end{array} 
		\right.  
	\end{equation*}
	where $x_{\alpha/2}$ and $x_{1 - \alpha/2}$ are 
	the $(\alpha / 2)$- and $ (1 - \alpha / 2) $-quantiles of $X_{i}$, respectively.
\end{corollary}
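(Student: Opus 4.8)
The plan is to reduce the statement to the already-established symmetric-at-$0$ case of Corollary~\ref{corollary:opt-design-one-cov-0} by exploiting equivariance of the problem under the location shift $g(x) = x - \mu$. First I would introduce the centered covariate $X_{0} = X_{i} - \mu$, whose density $f_{X_{0}}(x) = f_{X}(x + \mu)$ is symmetric at $0$ by the assumed symmetry of $f_{X}$ at $\mu$. The key structural observation is that the regression function is linearly equivariant under the shift: one checks directly that
\[
	\f(x + \mu) = \begin{pmatrix} 1 & 0 \\ \mu & 1 \end{pmatrix} \f(x) \, ,
\]
so that the shift is implemented in regression-function space by a matrix $\Q_{\mu}$ with $\det \Q_{\mu} = 1$.

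Next I would set up a bijection between designs. To any subsampling design $\xi$ for $X_{i}$ I associate the shifted design $\xi_{0}$ with density $f_{\xi_{0}}(x) = f_{\xi}(x + \mu)$. Because $f_{\xi_{0}}(x) = f_{\xi}(x+\mu) \le f_{X}(x+\mu) = f_{X_{0}}(x)$ and $\int f_{\xi_{0}}(x) \diff x = \int f_{\xi}(x) \diff x = \alpha$, this is again a valid subsampling design (now bounded by $f_{X_{0}}$), and the correspondence $\xi \leftrightarrow \xi_{0}$ is a bijection onto the designs for $X_{0}$. Substituting $x = x_{0} + \mu$ in the defining integral of the information matrix and using the equivariance relation yields $\M(\xi) = \Q_{\mu} \M(\xi_{0}) \Q_{\mu}^{\top}$, whence $\det \M(\xi) = (\det \Q_{\mu})^{2} \det \M(\xi_{0}) = \det \M(\xi_{0})$. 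Thus $\xi$ is $D$-optimal for $X_{i}$ if and only if $\xi_{0}$ is $D$-optimal for $X_{0}$.

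Finally I would apply Corollary~\ref{corollary:opt-design-one-cov-0} to the centered covariate $X_{0}$: its $D$-optimal design places density $f_{X_{0}}$ exactly on $\{\, |x| \ge q \,\}$, where $q$ denotes the $(1 - \alpha/2)$-quantile of $X_{0}$, and by symmetry its $(\alpha/2)$-quantile equals $-q$. Transporting this support back through $x = x_{0} + \mu$ gives $\{\, x - \mu \le -q \,\} \cup \{\, x - \mu \ge q \,\}$. Identifying quantiles via the shift relation (the $p$-quantile of $X_{i}$ equals $\mu$ plus the $p$-quantile of $X_{0}$) converts the upper cut-off $\mu + q$ into $x_{1 - \alpha/2}$ and the lower cut-off $\mu - q$ into $x_{\alpha/2}$, which is precisely the characterization claimed in Corollary~\ref{corollary:opt-design-one-cov-mu}.

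The only slightly delicate step is the quantile bookkeeping in this last paragraph: one must verify that the location shift maps quantiles to quantiles and correctly pair the lower cut-off with $x_{\alpha/2}$ rather than with $x_{1-\alpha/2}$. No genuine obstacle arises, however, because the $D$-criterion value is \emph{exactly} preserved under the transformation (the shift matrix $\Q_{\mu}$ has unit determinant), so the entire argument is an equivariance transport of the already-proved symmetric case rather than a fresh optimization.
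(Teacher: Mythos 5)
Your proposal is correct and follows essentially the same route as the paper, which obtains Corollary~\ref{corollary:opt-design-one-cov-mu} precisely ``by equivariance with respect to location shifts $g(x) = x + \mu$'' applied to the symmetric-at-$0$ case of Corollary~\ref{corollary:opt-design-one-cov-0} (cf.\ also Lemma~\ref{lemma:opt-transformed-design}, which is the general affine version of your determinant-preservation argument). Your write-up merely spells out in full the bijection of designs, the unit-determinant shift matrix, and the quantile bookkeeping that the paper leaves implicit.
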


This procedure can be interpreted as the approximate counterpart to the
IBOSS method proposed by \cite{wang2019information} in one dimension
in which both those $n \alpha / 2$ units are selected
which have the largest values of the covariate  
as well as those $n \alpha / 2$ units 
which have the smallest values of the covariate. 
\vspace{3mm}

\subsection{Multiple Covariates With Elliptical Distribution}
\label{subsec:continuous-subsample-multiple-covariates-elliptical-distribution}
\mbox{}

We now extend the results for a single covariate to
the situation of multiple linear regression where
the covariates vector $\X_{i}$ has dimension $d > 1$.
Motivated by the shape of the support of the 
$D$-optimal subsampling design $\xi_{\alpha}^{*}$
in Theorem~\ref{theorem:opt-design-general-distribution}
and the symmetry property in
Corollary~\ref{corollary:opt-design-one-cov-0},
we start with the case that the multivariate covariates $\X_{i}$ 
have a centered spherical distribution,
i.\,e.\ the density $f_{\X}$ has
spherical contours such that 
$f_{\X}(\x) = f_{0}(\Vert \x \Vert^{2})$
for some univariate function $f_{0}$, 
where $\Vert \x \Vert = (\x^{\top} \x)^{1/2}$ 
denotes the Euclidean norm of the vector $\x$.
When the distribution of the covariates $\X_{i}$ 
is centered and spherical,  
this implies that  $\X_{i}$ has mean $\E[\X_{i}] = \mathbf{0}$
and covariance matrix 
$\Cov[\X_{i}] = \sigma^{2} \Id_{d}$, 
where $\Id_{d}$ denotes the 
identity matrix of dimension $d$.
Moreover, all $d$ single covariates $X_{ij}$ follow the same distribution symmetric at $0$.
The most prominent representative 
of these spherical distributions is the  
standard multivariate normal distribution 
with $\sigma^{2} = 1$ and $f_{0}(t) = (2 \pi)^{-d/2} \exp(-t/2)$.
But also multivariate $t$-distributions are covered.
The sphericity of the distribution of the covariates
provides symmetry properties which allow for a simple
characterization of optimal subsampling designs.
In particular, the distribution is invariant 
with respect to the special orthogonal group $SO(d)$ 
of rotations $\g$ in $\R^{d}$ about the origin $\mathbf{0}$,
i.\,e.\ $f_{\X}(\g(\x)) = f_{\X}(\x)$ for all $\g \in SO(d)$.

To make use of the rotational invariance, 
we characterize subsampling designs $\xi$ 
in their representation in hyperspherical (polar) coordinates, 
where a point $\x$ in $\R^{d}$ is represented 
by its radial coordinate $r = R(\x) = \Vert \x \Vert$ 
and a $(d - 1)$-dimensional vector of angular coordinates 
$\tht = (\theta_1, \ldots, \theta_{d-1})^{\top}$ 
indicating the direction in space. 
More details will be given in the Appendix.

The radius $r = R(\x)$
is invariant under transformations from $SO(d)$,
i.\,e.\ $R(\g(\x)) = R(\x)$ for any rotation $\g \in SO(d)$. 
For a subsampling design $\xi$, we denote by
$\xi_{(R, \Th)}$ 
its representation (image) in terms of
hyperspherical coordinates
and by $\xi_{R}$ the marginal subsampling design
(projection) on the radius $r$.
The marginal subsampling design $\xi_{R}$ 
has total mass $\alpha$
and is bounded by the marginal distribution of
$R(X_{i}) = \Vert \X_{i} \Vert$, $f_{\xi_{R}}(r) \leq f_{R}(r)$.
Let $\bar{\mu}$ be the uniform (Haar) measure 
on the angle $\tht$ with total mass $1$
which is invariant with respect to
transformations from $SO(d)$
under consideration that the radius $R$ 
constitutes a maximal invariant 
\citep[see~e.\,g.\ ][]{wijsman1990invariant}.

For any subsampling design $\xi$, 
denote by $\bar{\xi}$ 
its symmetrization which has representation 
$\bar{\xi}_{(R, \Th)} = \xi_{R} \otimes \bar{\mu}$ 
in hyperspherical coordinates,
where ``$\otimes$'' is the common product of measures.
The symmetrization $\bar{\xi}$
is invariant with respect to transformations in $SO(d)$
(Lemma~\ref{lemma:decomposition})
and is itself a subsampling design
(Lemma~\ref{lemma:sym-design-bounded}).
The regression function $\f$
is linearly equivariant with respect to 
transformations in $SO(d)$ 
(see equation~\eqref{eq:multiple-regression-equivariance-sod}).
The $D$-criterion is convex and invariant with respect to $SO(d)$.
Then, according to
Theorem~\ref{theorem:optimality-of-sym-design},
any subsampling design $\xi$ is dominated
by its symmetrization $\bar{\xi}$,
\begin{equation*}
	\label{eq:domination-by-symmetrization}
	\det(\M(\xi)) \leq \det(\M(\bar{\xi})) \, .
\end{equation*}
Hence, we may restrict our search
for a $D$-optimal subsampling design
to the essentially complete class of invariant designs
$\bar{\xi}$ with representation
$\xi_{R} \otimes \bar{\mu}$.
In particular, we only have to optimize
the marginal subsampling design $\xi_{R}$
on the radius.

For any invariant subsampling design $\bar{\xi}$, 
all first order moments  
$\int x_{j} f_{\bar{\xi}} \diff \x$ 
and all mixed second order moments
$\int x_{j} x_{j^{\prime}} f_{\bar{\xi}} \diff \x$ 
of $\bar{\xi}$ are equal to zero,  
$j, j^{\prime} = 1, \ldots, d$, $j \neq j^{\prime}$,
by the representation 
$\xi_{R} \otimes \bar{\mu}$.
Further, all pure second order moments
$\int x_{j}^2 f_{\bar{\xi}} \diff \x$ 
are equal to $m_{2}(\bar{\xi}) > 0$, say.
The corresponding $(d + 1) \times (d + 1)$ 
information matrix $\M(\bar{\xi})$ is diagonal,
\[
	\M(\bar{\xi}) 
	= \begin{pmatrix}
			\alpha & \mathbf{0}
			\\
			\mathbf{0} & m_{2}(\bar{\xi}) \Id_{d}
	\end{pmatrix}
\]
(cf.\ Lemma~\ref{lemma:mean-of-rotated-designs}).

We can conclude from
Theorem~\ref*{theorem:opt-design-general-distribution}
that the $D$-optimal subsampling design $\xi_{\alpha}^{*}$ is concentrated
outside a $d$-dimensional sphere 
of appropriate size centered at $\mathbf{0}$.

\begin{theorem}
	\label{theorem:dens-of-opt-design-spherical}
	Let $d \geq 2$ and let the distribution of the covariates $\X_{i}$
	be centered and spherical.
	The subsampling design $\xi_{\alpha}^{*}$ is $D$-optimal
	if and only if
	$\xi_{\alpha}^{*}$ has density 
	\begin{equation}
		\label{eq:dens-of-opt-design-spherical}
		f_{\xi_{\alpha}^{*}}(\x) 
		= \left\{
			\begin{array}{cl}
				f_{\X}(\x) & \mbox{ for } \Vert \x \Vert^{2} \geq q_{1 - \alpha} \, ,
				\\
				0 & \mbox{ otherwise,}
			\end{array} 
		\right. 
	\end{equation}
	where $q_{1 - \alpha}$ is the $(1 - \alpha)$-quantile of the distribution of 
	$R(\X_{i})^{2} = \sum_{j=1}^{d} X_{ij}^{2}$.
\end{theorem}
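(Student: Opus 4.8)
The plan is to exploit the rotational invariance already developed above so as to collapse the general elliptical characterization of Theorem~\ref{theorem:opt-design-general-distribution} into a statement about a centered sphere. First I would invoke the domination result (Theorem~\ref{theorem:optimality-of-sym-design}): since $f_{\X}$ is $SO(d)$-invariant, $\f$ is linearly equivariant, and $\Phi_D$ is convex and invariant, every design $\xi$ is dominated by its symmetrization $\bar{\xi} = \xi_{R} \otimes \bar{\mu}$. Hence a $D$-optimal design is attained within the essentially complete class of invariant designs, and for any such invariant design the information matrix is the diagonal matrix $\M(\bar{\xi}) = \diag(\alpha, m_{2}(\bar{\xi})\Id_{d})$ established above. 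From the definitions of $\m(\xi)$ and $\mathbf{S}(\xi)$ this gives $\m(\bar{\xi}) = \mathbf{0}$ and $\mathbf{S}(\bar{\xi}) = m_{2}(\bar{\xi})\Id_{d}$, so that the dispersion matrix of an invariant design is automatically a scalar multiple of the identity.

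Next I would argue that these values are forced for \emph{every} optimal design, not merely attainable by one. Because $\xi \mapsto \M(\xi)$ is linear, the set of attainable information matrices is convex, and $\M \mapsto -\ln\det(\M)$ is strictly convex on the positive-definite cone; since all $\M(\xi)$ are nonsingular, the optimal information matrix is therefore unique. As an invariant design attains the optimum, that unique optimal matrix must equal $\diag(\alpha, m_{2}^{*}\Id_{d})$ with $m_{2}^{*} = m_{2}(\xi_{\alpha}^{*})$. Consequently any $D$-optimal $\xi_{\alpha}^{*}$ has $\m(\xi_{\alpha}^{*}) = \mathbf{0}$ and $\mathbf{S}(\xi_{\alpha}^{*}) = m_{2}^{*}\Id_{d}$.

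With these identifications in hand, I would substitute into Theorem~\ref{theorem:opt-design-general-distribution}. The Mahalanobis form collapses to $(\x - \mathbf{0})^{\top}(m_{2}^{*}\Id_{d})^{-1}(\x - \mathbf{0}) = \Vert\x\Vert^{2}/m_{2}^{*} \geq c$, i.e.\ the acceptance region is the exterior of the centered sphere $\{\Vert\x\Vert^{2} \geq c\, m_{2}^{*}\}$. Setting $q_{1-\alpha} = c\, m_{2}^{*}$, the defining condition $\Prob(\Vert\X_{i}\Vert^{2}/m_{2}^{*} \geq c) = \alpha$ becomes $\Prob(\Vert\X_{i}\Vert^{2} \geq q_{1-\alpha}) = \alpha$, which identifies $q_{1-\alpha}$ as the $(1-\alpha)$-quantile of $R(\X_{i})^{2} = \sum_{j=1}^{d} X_{ij}^{2}$ and yields \eqref{eq:dens-of-opt-design-spherical}. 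The converse direction is then immediate: the design with density \eqref{eq:dens-of-opt-design-spherical} has rotation-invariant support, hence is itself invariant and reproduces $\m = \mathbf{0}$, $\mathbf{S} = m_{2}^{*}\Id_{d}$, so that it satisfies the equivalence condition of Theorem~\ref{theorem:opt-design-general-distribution} and is $D$-optimal.

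I expect the only genuinely delicate point to be the reduction ensuring that $\mathbf{S}(\xi_{\alpha}^{*})$ is a scalar multiple of $\Id_{d}$, equivalently that the optimal information matrix is $SO(d)$-invariant. The uniqueness-of-$\M^{*}$ argument in the second paragraph handles this cleanly; an alternative is to argue directly that $\M(\xi_{\alpha}^{*})$ commutes with every $\diag(1, \g)$, $\g \in SO(d)$, using that the only vector fixed by all rotations is $\mathbf{0}$ and that $SO(d)$ acts irreducibly on $\R^{d}$ for $d \geq 2$, but this route is slightly more involved. Everything else reduces to routine substitution and the quantile bookkeeping for $q_{1-\alpha}$.
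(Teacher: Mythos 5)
Your proposal is correct and takes essentially the same approach as the paper: reduction to the essentially complete class of $SO(d)$-invariant designs via Lemma~\ref{lemma:sym-design-bounded} and Theorem~\ref{theorem:optimality-of-sym-design}, followed by specialization of Theorem~\ref{theorem:opt-design-general-distribution} with $\m(\xi_{\alpha}^{*}) = \mathbf{0}$ and $\mathbf{S}(\xi_{\alpha}^{*})$ a multiple of $\Id_{d}$. Your additional strict-convexity argument for the uniqueness of the optimal information matrix merely makes explicit the ``only if'' step---that \emph{every} $D$-optimal design, not just an invariant one, must have a centered spherical acceptance region---which the paper's terse proof leaves implicit and addresses only in a remark on uniqueness placed after the proof.
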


Under the conditions of
Theorem~\ref{theorem:dens-of-opt-design-spherical},
the information matrix of the optimal subsampling design $\xi_{\alpha}^{*}$
has the shape
\begin{equation*}
	\label{eq:info-diagonal}
	\M(\xi_{\alpha}^{*}) 
	= \begin{pmatrix}
			\alpha & \mathbf{0}
			\\
			\mathbf{0} & m_{2}(\xi_{\alpha}^{*}) \Id_{d} \, .
		\end{pmatrix} \, .
\end{equation*}
The second moments 
$m_{2}(\xi_{\alpha}^{*}) =  \int x_{j}^2 f_{\xi_{\alpha}^{*}}(\x) \diff \x$
therein 
can be expressed in terms of the density $f_{R^{2}}$ of 
$R(\X_{i})^{2}$ as
\begin{equation}
	\label{eq:second-moment-spherical}
	m_{2}(\xi_{\alpha}^{*}) = \frac{1}{d} \int_{q_{1 - \alpha}}^{\infty} w f_{R^{2}}(w) \diff w \, .
\end{equation}
Obviously, 
$m_{2}(\xi_{\alpha}^{*}) > \alpha \sigma^{2}$ for all $\alpha \in (0,1)$.  

For $d = 1$, equation~\eqref{eq:dens-of-opt-design-spherical}
reduces to the condition for a $D$-optimal subsampling design 
in one covariate characterized in 
Corollary~\ref{corollary:opt-design-one-cov-0}.

\begin{example}[standard multivariate normal distribution]
	\label{example:standard-normal}
	In the case of a standard multivariate normal distribution
	of the covariates
	with mean $\mathbf{0}$ and covariance matrix $\Id_{d}$
	($ \X_{i}\sim \Nd_{d}\left(\mathbf{0}, \Id_{d} \right)$),
	the squared radius $R(\X_{i})^2$ is $\chi^2$-distributed
	with $d$ degrees of freedom. 
	Then, by Theorem~\ref{theorem:dens-of-opt-design-spherical}, 
	the $D$-optimal subsampling design $\xi_{\alpha}^{*}$
	includes all $\x$ outside the $d$-sphere
	with radius  $r^{*} = \sqrt{\chi^{2}_{d, 1 - \alpha}}$,
	where $\chi^{2}_{d, 1 - \alpha}$ is the $(1 - \alpha)$-quantile 
	of the $\chi^{2}$-distribution with $d$ degrees of freedom. 
	By the representation~\eqref{eq:second-moment-spherical},
	the second moments $m_{2}(\xi_{\alpha}^{*})$ 
	of the information matrix $\M(\xi_{\alpha}^{*})$
	can be calculated as
	\begin{equation}
		\label{eq:standard-normal-second-moment-optimal}
		m_{2}(\xi_{\alpha}^{*}) 
		= \alpha + \frac{2}{d} \chi^{2}_{d, 1 - \alpha} f_{\chi^2_d}(\chi^{2}_{d, 1 - \alpha}),
	\end{equation}
	where $f_{\chi^2_d}$ is the density of the $\chi^2$-distribution with $d$ degrees of freedom.
	In view of Corollary~\ref{corollary:opt-design-one-cov-0}, 
	we see that equation~\eqref{eq:standard-normal-second-moment-optimal} 
	also holds for $d = 1$.
	
	The second moment $m_{2}(\xi_{\alpha}^{*})$
	measures the percentage of information contained in
	the $D$-optimal subsampling design $\xi_{\alpha}^{*}$
	compared to the full data set,
	where the second moment is one,
	and to uniform random subsampling $\xi_{\mathrm{unif}}$,
	where the second moment is equal to $\alpha$.
	
	We plot these second moments 
	in Figure~\ref{Figure:second-moment}
	for various numbers $d$ of covariates
	in dependence on the subsampling proportion $\alpha$.
	As can be seen from the figure,
	all second moments are larger than $\alpha$
	in accordance with the remark following
	equation~\eqref{eq:second-moment-spherical}.
	For fixed dimension $d$,
	the second moment $m_{2}(\xi_{\alpha}^{*})$ decreases 
	when the sampling proportion $\alpha$ gets smaller
	which is obvious from the fact
	that the sample is getting smaller
	and, hence, estimation becomes less precise.
	In particular, $m_{2}(\xi_{\alpha}^{*})$ tends to $0$
	for $\alpha \to 0$. 
	For dimension $d = 1$,
	the second moment $m_{2}(\xi_{\alpha}^{*})$ 
	of $\xi_{\alpha}^{*}$ exceeds $\alpha$
	substantially for intermediate values of $\alpha$
	and, hence, the $D$-optimal subsampling design $\xi_{\alpha}^{*}$
	shows a substantially better performance than uniform random subsampling.
	This property is less pronounced for higher dimensions $d$.
	In particular, for fixed subsampling proportion $\alpha$,
	$m_{2}(\xi_{\alpha}^{*})$ decreases in the dimension $d$
	such that estimation becomes more difficult 
	when the dimension $d$ increases.
	For $d = 1\,000$,
	the second moment $m_{2}(\xi_{\alpha}^{*})$ is already rather close 
	to the value $\alpha$ for uniform random subsampling.
	We will discuss this behavior further 
	in terms of efficiency 
	in Example~\ref{example:normal-efficiency}
	below.
	\begin{figure}[h]
		\centering
		\includegraphics[width=.6\textwidth]{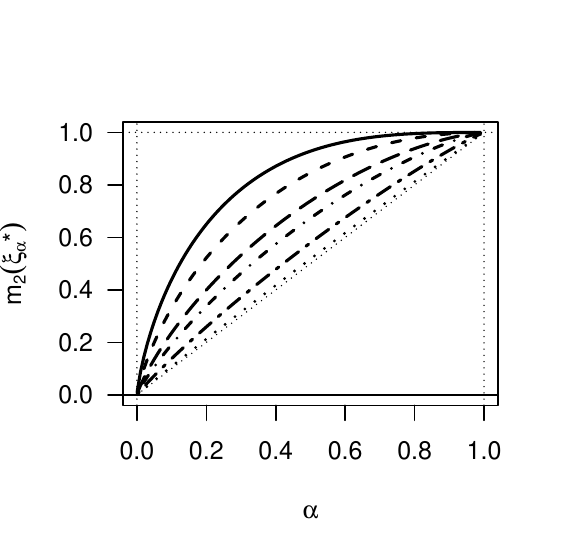}
		\caption{Second moment $m_{2}(\xi_{\alpha}^{*})$ 
			of the $D$-optimal subsampling design $\xi_{\alpha}^{*}$ 
			for standard (multivariate) normal distributions
			of dimensions $d = 1$ (solid), $2$ (dashes), $5$ (long dashes),
			$10$ (dashes and dots), $50$ (long and short dashes),
			and $1\,000$ (dots)
			in dependence on the subsampling proportion $\alpha$}
		\label{Figure:second-moment}
	\end{figure}
	
	To give an impression of the optimal subsampling design $\xi_{\alpha}^{*}$,
	we plot its marginal density $\xi_{R}^{*}$ on the radius
	in the case of standard bivariate normal covariates $\X_{i}$ 
	($d = 2$) 
	and subsampling proportion $\alpha = 0.1$ 
	in Figure~\ref{Figure:dens-of-radius}. 
	There, the solid line shows the density $\xi_{R}^{*}$
	on the radius for the subsampling design $\xi_{\alpha}^{*}$
	while the dashed line is
	the bounding density $f_{R(\X_{i})}$
	on the radius for the distribution of the covariates.
	The vertical line segment indicates the $(1 - \alpha)$-quantile 
	$\sqrt{\chi^{2}_{2, 0.9}} = 2.146$ of the marginal distribution 
	of the radius $R(\X_{i})$ of the covariates.

	\begin{figure}[h]
		\centering
		\includegraphics[width=.6\textwidth]{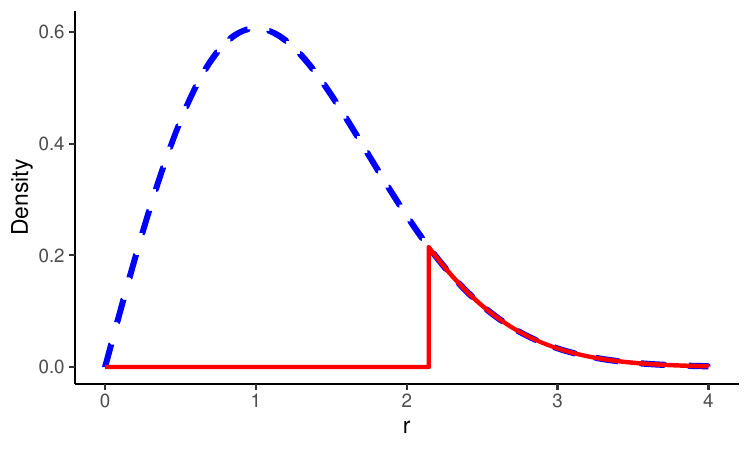}
		\caption{Density of the marginal optimal subsampling design $\xi_{R}^{*}$ 
			(solid) and the marginal distribution of the covariates $R(\X_{i})$ (dashed) 
			on the radius, 
			standard bivariate normal distribution, 
			subsampling proportion $\alpha = 0.1$}
		\label{Figure:dens-of-radius}
	\end{figure}
\end{example}
\begin{example}[multivariate $t$-distribution]
	\label{example:t-distribution}
	The distribution of $d$-dimensional $t$-distributed covariates $\X_{i}$
	with $\nu$ degrees of freedom
	may be defined by the ratio $\X_{i} = \Z_{i} / \sqrt{\mathbf{W}_{i} / {\nu}}$
	of a standard $d$-dimensional normal variate $\Z_{i}$
	and the square root of a standardized $\chi^2$ variate $\mathbf{W}_{i}$
	with $\nu$ degrees of freedom
	independent of each other.
	The covariates $\X_{i}$ are spherical and centered,
	and the standardized squared radius $R(\X_{i})^{2} / d$ 
	is $F$-distributed with $d$ and $\nu$ degrees of freedom.
	By Theorem~\ref{theorem:dens-of-opt-design-spherical}, 
	the $D$-optimal subsampling design $\xi_{\alpha}^{*}$
	includes all $\x$ outside the $d$-sphere
	with radius  $ r^{*} = \sqrt{F_{d, \nu, 1 - \alpha}} $,
	where $F_{d, \nu, 1 - \alpha}$ is the $(1 - \alpha)$-quantile 
	of the $F$-distribution with $d$ and $\nu$ degrees of freedom.
\end{example}

We will use the multivariate normal
and the multivariate $t$-distribution in Section~\ref{sec:method}
to examine the performance of 
subsampling procedures motivated by 
$D$-optimal subsampling designs. 

By equivariance considerations 
with respect to transformations of location and scatter
(see Lemma~\ref{lemma:opt-transformed-design}),
the result of Theorem~\ref{theorem:dens-of-opt-design-spherical}
can be extended to covariates $\X_{i}$ 
which have an elliptical distribution,
i.\,e.\ for which the density $f_{\X}$ has
elliptical contours such that 
$f_{\X}(\x) = f_0\left((\x - \mx)^{\top} \bm{\Sigma}^{-1} (\x - \mx)\right)$ 
for some univariate function $f_0$, 
location vector $\mx$, and
positive-definite dispersion matrix $\bm{\Sigma}$.
Note that
$\mx = \E[\X_{i}]$, and $\bm{\Sigma}$ can be chosen as
$\Cov[\X_{i}]$ so that
$(\x - \mx)^{\top} \bm{\Sigma}^{-1} (\x - \mx)$
is the Mahalanobis distance 
$\mathrm{d}_{\bm{\Sigma}}(\x, \mx)$ 
of $\x$ and $\mx$ with respect to $\bm{\Sigma}$.

\begin{theorem}
	\label{theorem:dens-of-opt-design-elliptical}
	Let $d \geq 2$ and let the distribution of the covariates $\X_{i}$
	be elliptical with mean $\mx$ 
	and covariance matrix $\bm{\Sigma}$.
	The subsampling design $\xi_{\alpha}^{*}$ is $D$-optimal
	if and only if
	$\xi_{\alpha}^{*}$ has density 
	\begin{equation*}
		\label{eq:dens-of-opt-design-elliptical}
		f_{\xi_{\alpha}^{*}}(\x) 
		= \left\{
		\begin{array}{cl}
			f_{\X}(\x) & \mbox{ for } (\x - \mx)^{\top} \bm{\Sigma}^{-1} (\x - \mx) \geq q_{1 - \alpha} \, ,
			\\
			0 & \mbox{ otherwise,}
		\end{array} 
		\right. 
	\end{equation*}
	where $q_{1 - \alpha}$ is the $(1 - \alpha)$-quantile of the distribution of 
	$(\X_{i} - \mx)^{\top} \bm{\Sigma}^{-1} (\X_{i} - \mx)$.
\end{theorem}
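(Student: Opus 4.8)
The plan is to derive Theorem~\ref{theorem:dens-of-opt-design-elliptical} from the spherical case in Theorem~\ref{theorem:dens-of-opt-design-spherical} by exploiting the equivariance of the $D$-criterion under affine transformations of location and scatter (the content of Lemma~\ref{lemma:opt-transformed-design}). Any elliptical distribution arises as the affine image of a centered spherical one: writing $\bm{\Sigma} = \sigma^{2}\mathbf{A}\mathbf{A}^\top$ for a nonsingular matrix $\mathbf{A}$, I would represent $\X_i = \mathbf{A}\Z_i + \mx$, where $\Z_i$ is centered and spherical with $\Cov[\Z_i] = \sigma^{2}\Id_d$ and density $f_{\Z}(\z) = f_0(\Vert\z\Vert^2)$. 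The affine bijection $g(\z) = \mathbf{A}\z + \mx$ then carries the spherical model onto the elliptical one, and the whole argument reduces to tracking how subsampling designs, the boundedness constraint, and the information matrix behave under $g$.

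First I would check that $g$ induces a bijection between subsampling designs on the $\z$-space (bounded by $f_{\Z}$) and those on the $\x$-space (bounded by $f_{\X}$). For a design $\xi$ with density $f_\xi$, the image $\xi^g$ has density $f_{\xi^g}(\x) = f_\xi(g^{-1}(\x))/|\det\mathbf{A}|$; since $f_{\X}(\x) = f_{\Z}(g^{-1}(\x))/|\det\mathbf{A}|$, the pointwise bound $f_\xi \leq f_{\Z}$ is equivalent to $f_{\xi^g} \leq f_{\X}$, and the total mass $\alpha$ is preserved by the change of variables. Next I would record the linear equivariance of the regression function: with
\[
\mathbf{Q}_g = \begin{pmatrix} 1 & \mathbf{0}^\top \\ \mx & \mathbf{A} \end{pmatrix},
\]
one has $\f(g(\z)) = \mathbf{Q}_g \f(\z)$ and $\det\mathbf{Q}_g = \det\mathbf{A} \neq 0$. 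Substituting into the information matrix and applying the change of variables gives $\M(\xi^g) = \mathbf{Q}_g \M(\xi)\mathbf{Q}_g^\top$, hence $\det(\M(\xi^g)) = (\det\mathbf{A})^2 \det(\M(\xi))$. Because the factor $(\det\mathbf{A})^2$ is positive and independent of the design, maximizing $\det(\M(\cdot))$ over designs in the $\x$-space is equivalent, through the bijection $\xi \mapsto \xi^g$, to maximizing it in the $\z$-space; consequently $\xi$ is $D$-optimal for the spherical model if and only if $\xi^g$ is $D$-optimal for the elliptical one, so the stated ``if and only if'' form is preserved for free.

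Finally I would invoke Theorem~\ref{theorem:dens-of-opt-design-spherical}, which gives the spherical optimum supported on $\{\Vert\z\Vert^2 \geq q'\}$ with $q'$ the $(1-\alpha)$-quantile of $\Vert\Z_i\Vert^2$, and push it forward by $g$. Since $\x - \mx = \mathbf{A}\z$ yields $(\x-\mx)^\top\bm{\Sigma}^{-1}(\x-\mx) = \sigma^{-2}\Vert\z\Vert^2$, the spherical ball complement maps exactly onto the ellipsoid complement $\{(\x-\mx)^\top\bm{\Sigma}^{-1}(\x-\mx) \geq q_{1-\alpha}\}$, and the same identity in distribution shows that $q_{1-\alpha}$ is the $(1-\alpha)$-quantile of $(\X_i-\mx)^\top\bm{\Sigma}^{-1}(\X_i-\mx)$, as claimed. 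The only genuinely delicate point is the bookkeeping of the Jacobian factors in the constraint-preserving bijection and the matching of the two quantiles under the scatter transformation; once these are set up correctly, the determinant identity makes the reduction immediate, so I expect this bookkeeping, rather than any deeper difficulty, to be the main obstacle.
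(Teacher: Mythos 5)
Your proposal is correct and takes essentially the same route as the paper: the paper's proof reduces to the spherical case of Theorem~\ref{theorem:dens-of-opt-design-spherical} by taking a square root $\bm{A}$ of $\bm{\Sigma}$ and invoking the affine-equivariance result (Lemma~\ref{lemma:opt-transformed-design}), whose content — the constraint-preserving bijection $\xi \mapsto \xi^{g}$ and the identity $\det(\M(\xi^{g})) = \det(\Q_{g})^{2}\det(\M(\xi))$ — is exactly the bookkeeping you carry out explicitly. Your additional verification that the ball complement maps onto the ellipsoid complement and that the quantiles match under the scatter transformation merely spells out details the paper leaves implicit.
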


The $D$-optimal subsampling design is, hence, concentrated 
on the complement of the interior of the concentration ellipsoid 
which contains mass $1 - \alpha$ of the distribution of $\X_{i}$.
Moreover, for elliptical distributions,
the optimality conditions 
in Theorem ~\ref{theorem:opt-design-general-distribution}
and Theorem~\ref{theorem:dens-of-opt-design-elliptical}
coincide 
whereat $\m(\xi_{\alpha}^{*}) = \mx$, 
$\mathbf{S}(\xi_{\alpha}^{*}) = s^{2}(\xi_{\alpha}^{*}) \bm{\Sigma}$,
$c = q_{1 - \alpha} / s^{2}(\xi_{\alpha}^{*})$,
and 
$s^{2}(\xi)
	= \frac{1}{d} \int (\x - \mx)^{\top} \bm{\Sigma}^{-1} (\x - \mx) f_{\xi}(\x) \diff \x$ 
is the scaled (per covariate) average Mahalanobis distance
under the subsampling design $\xi$.

\begin{example}[general multivariate normal distribution]
	\label{example:normal}
	We extend our findings from Example~\ref{example:standard-normal} 
	for the standard multivariate normal distribution of the covariates
	to the situation of a general multivariate normal distribution 
	$\X_{i} \sim \Nd_{d}\left(\mx, \bm{\Sigma} \right)$
	with mean $\mx$ and covariance matrix $\bm{\Sigma}$. 
	By Theorem~\ref{theorem:dens-of-opt-design-elliptical}, 
	the $D$-optimal subsampling design $\xi_{\alpha}^{*}$ is equal 
	to the distribution of the $\X_{i}$ 
	on the complement 
	$(\x - \mx)^{\top} \bm{\Sigma}^{-1} (\x - \mx) \geq \chi^{2}_{d, 1 - \alpha}$
	of the $(1 - \alpha)$ concentration ellipsoid. 
\end{example}

In the literature, prevalent interest is often
in estimating the slope parameters 
$\B_{\mathrm{slope}} = (\beta_{1}, \ldots, \beta_{d})^{\top}$ 
disregarding the intercept $\beta_{0}$ 
\citep[see e.\,g.][]{wang2019information}.
Then the quality of a subsample
is measured in terms 
of the asymptotic covariance matrix 
of the vector
$\hat{\B}_{\mathrm{slope}} = (\hat\beta_{1}, \ldots, \hat\beta_{d})^{\top}$ 
of slope parameter estimators.
For a subsampling design $\xi$
the asymptotic covariance matrix 
of $ \hat{\B}_{\mathrm{slope}}$
is proportional to the lower right $d \times d$ submatrix 
$\mathbf{S}(\xi)^{-1}$
of the inverse $\M(\xi)^{-1}$ of the information matrix $\M(\xi)$,
where 
$\mathbf{S}(\xi)$
is defined as in equation~\eqref{eq:concentration ellipsoid-mean-scale}.
The determinant 
$\det(\mathbf{S}(\xi))$ 
for the slopes $\B_{\mathrm{slope}}$ 
and the determinant
$\det(\M(\xi)) = \alpha \det(\mathbf{S}(\xi))$
for the full parameter vector $\B$ 
differ only by the constant factor $\alpha$.
Hence, the $D$-optimal subsampling design $\xi_{\alpha}^{*}$ 
for the full parameter vector $\B$ 
is also $D_{\mathrm{slope}}$-optimal
for the slope vector $\B_{\mathrm{slope}}$.

For the $D$-optimal subsampling design $\xi_{\alpha}^{*}$,
the slope estimator $\hat{\B}_{\mathrm{slope}}$
is asymptotically normal 
with asymptotic covariance matrix
\begin{equation}
	\label{eq:as-var-slope}
	\mathrm{as.}\!
	\Cov(\hat{\B}_{\mathrm{slope}})	
	= \frac{\sigma_{\e}^{2}}{s^{2}(\xi_{\alpha}^{*})} \bm{\Sigma}^{-1} .
\end{equation}

In particular,
when the distribution
of the covariates is spherical
with mean $\mx = (\mu_{1}, \ldots, \mu_{d})^{\top}$,
the slope related information matrix 
of the $D_{\mathrm{slope}}$-optimal subsampling design $\xi_{\alpha}^{*}$
is equal to 
$\mathbf{S}(\xi_{\alpha}^{*}) = s^{2}(\xi_{\alpha}^{*}) \Id_{d}$.
Then the asymptotic variance of $\hat{\beta}_{j}$
is $1 / s^{2}(\xi_{\alpha}^{*})$ for any component 
$\beta_{j}$ of the slope vector $\B_{\mathrm{slope}}$.
The quantity $s^{2}(\xi_{\alpha}^{*})$ 
may be interpreted as the marginal dispersion
$\int (x_{j} - \mu_{j})^{2} f_{\xi_{\alpha}^{*}}(\x) \diff \x$ 
of $\xi_{\alpha}^{*}$
in any direction $x_{j}$.
If, moreover, the distribution of the covariates is centered,
then the dispersion $s^{2}(\xi_{\alpha}^{*})$ 
is equal to the second moment $m_{2}(\xi_{\alpha}^{*})$
of $\xi_{\alpha}^{*}$ and, hence,
$\mathbf{S}(\xi_{\alpha}^{*}) = m_{2}(\xi_{\alpha}^{*}) \Id_{d}$.

For later use in Section~\ref{sec:method},
we add the following property of the dispersion measure 
$s^{2}(\xi_{\alpha}^{*})$.

\begin{lemma}
	\label{lemma:unbounded-distribution}
	If the distribution of the covariates 
	is elliptical and unbounded,
	then
	$\lim_{\alpha \to 0} s^{2}(\xi_{\alpha}^{*}) / \alpha = \infty$. 
\end{lemma}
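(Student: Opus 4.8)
The plan is to reduce the statement to an elementary fact about the conditional expectation of the squared Mahalanobis distance above a high quantile. Write $W = (\X_{i} - \mx)^{\top} \bm{\Sigma}^{-1} (\X_{i} - \mx)$ for the squared Mahalanobis distance of a covariate vector from its mean, and let $f_{W}$ denote its density. By Theorem~\ref{theorem:dens-of-opt-design-elliptical} the $D$-optimal design has density $f_{\xi_{\alpha}^{*}}(\x) = f_{\X}(\x)\,\1\!\left((\x-\mx)^{\top}\bm{\Sigma}^{-1}(\x-\mx)\geq q_{1-\alpha}\right)$, where $q_{1-\alpha}$ is the $(1-\alpha)$-quantile of $W$; in particular $\Prob(W\geq q_{1-\alpha}) = \alpha$. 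Substituting this density into $s^{2}(\xi) = \frac{1}{d}\int (\x-\mx)^{\top}\bm{\Sigma}^{-1}(\x-\mx)\,f_{\xi}(\x)\diff\x$ immediately yields the truncated-moment representation
\begin{equation*}
	s^{2}(\xi_{\alpha}^{*}) = \frac{1}{d}\int_{q_{1-\alpha}}^{\infty} w\, f_{W}(w)\diff w = \frac{1}{d}\,\E\!\left[W\,\1(W\geq q_{1-\alpha})\right].
\end{equation*}

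Dividing by $\alpha = \Prob(W\geq q_{1-\alpha})$, I would recognize the right-hand side as a conditional expectation and apply the trivial bound that conditioning on $\{W\geq q_{1-\alpha}\}$ forces $W\geq q_{1-\alpha}$:
\begin{equation*}
	\frac{s^{2}(\xi_{\alpha}^{*})}{\alpha} = \frac{1}{d}\,\frac{\E[W\,\1(W\geq q_{1-\alpha})]}{\Prob(W\geq q_{1-\alpha})} = \frac{1}{d}\,\E\!\left[W\mid W\geq q_{1-\alpha}\right] \geq \frac{q_{1-\alpha}}{d}.
\end{equation*}
Thus it suffices to show that the quantile $q_{1-\alpha}$ diverges as $\alpha\to 0$.

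The remaining point is to translate the unboundedness hypothesis into $q_{1-\alpha}\to\infty$. Since $\bm{\Sigma}$ is positive-definite, the distribution of $\X_{i}$ being unbounded is equivalent to $W$ having unbounded support, i.\,e.\ $\Prob(W\geq q) > 0$ for every $q > 0$. If $q_{1-\alpha}$ failed to diverge, there would exist a sequence $\alpha_{k}\to 0$ with $q_{1-\alpha_{k}}\leq M$ for some finite $M$; but then $\alpha_{k} = \Prob(W\geq q_{1-\alpha_{k}})\geq \Prob(W\geq M) > 0$ for all $k$, contradicting $\alpha_{k}\to 0$. Hence $q_{1-\alpha}\to\infty$, and combining with the previous display gives $s^{2}(\xi_{\alpha}^{*})/\alpha\geq q_{1-\alpha}/d\to\infty$, as claimed.

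I expect no serious obstacle here: the substance is the observation that $s^{2}(\xi_{\alpha}^{*})/\alpha$ equals $1/d$ times the mean residual of the squared Mahalanobis distance above its $(1-\alpha)$-quantile, after which the bound $\E[W\mid W\geq q]\geq q$ does the work. The only step requiring genuine care is making the equivalence ``unbounded distribution'' $\Leftrightarrow$ ``$\Prob(W\geq q)>0$ for all $q$'' precise, which relies on positive-definiteness of $\bm{\Sigma}$ so that boundedness in Mahalanobis distance and boundedness in $\R^{d}$ coincide.
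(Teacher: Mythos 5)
Your proposal is correct and follows essentially the same route as the paper: the same truncated-moment representation of $s^{2}(\xi_{\alpha}^{*})$, the same lower bound $w \geq q_{1-\alpha}$ on the integration region (which you phrase as $\E[W \mid W \geq q_{1-\alpha}] \geq q_{1-\alpha}$, an identical inequality), and the conclusion $s^{2}(\xi_{\alpha}^{*})/\alpha \geq q_{1-\alpha}/d \to \infty$. Your only addition is to spell out the sequence argument showing that unboundedness forces $q_{1-\alpha} \to \infty$, a step the paper simply asserts.
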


Note that
$s^{2}(\xi_{\alpha}^{*}) / \alpha$ remains bounded
when the covariates have a bounded distribution.

For measuring the quality of a subsampling design $\xi$
with subsampling proportion $\alpha$,
we make use of the $D_{\mathrm{slope}}$-efficiency 
\begin{equation}
	\label{eq:d-slope-efficiency}
	\eff_{D, \mathrm{slope}}(\xi) 
	= \left(\frac{\det(\mathbf{S}(\xi))}{\det(\mathbf{S}(\xi_{\alpha}^{*}))}\right)^{1/d} .
\end{equation}
Here, we employ the homogeneous version $\det(\mathbf{S}(\xi))^{1/d}$
of the $D_{\mathrm{slope}}$-criterion 
satisfying the homogeneity condition
$\det(\lambda \mathbf{S})^{1/d} = \lambda \det(\mathbf{S})^{1/d}$
for any $\lambda > 0$
\cite[see][Chapter~6.2]{pukelsheim1993optimal}.
The efficiency $\eff_{D, \mathrm{slope}}(\xi)$ 
might be interpreted straightforwardly 
in terms of the size $n$ of the full data set 
and, hence, of the size $\alpha n$ of the subsample:
When the subsampling design $\xi$ is used,
a full data set of size 
$n^{\prime} = n / \eff_{D, \mathrm{slope}}(\xi) \geq n$  
would be required to obtain the same value 
of the $D_{\mathrm{slope}}$-criterion 
as when the $D_{\mathrm{slope}}$-optimal 
subsampling design $\xi_{\alpha}^*$ would have been used
on a full data set of size $n$.
Accordingly, also the size $n^{\prime} \alpha \geq n \alpha$
of the subsample has to be increased 
when $\xi$ is used to maintain the precision
of the $D_{\mathrm{slope}}$-optimal 
subsampling design $\xi_{\alpha}^*$.
But the size $n$ of the full data set 
is typically not at the disposition of the examiner.

Nevertheless,
when we consider uniform random subsampling 
$\xi_{\mathrm{unif}}$ 
with density 
$f_{\xi_{\mathrm{unif}}}(\x) = \alpha f_{\X} (\x)$
for subsampling proportion $\alpha$
as a natural choice 
with which to compare 
the optimal subsampling design $\xi_{\alpha}^{*}$,
the efficiency $\eff_{D, \mathrm{slope}}(\xi_{\mathrm{unif}})$ 
can be nicely interpreted in terms of the subsampling proportion 
as has been pointed out in \cite{reuter2023optimal}: 
For a full data set of fixed size $n$, 
a uniform random subsampling design
with subsampling proportion 
$\alpha^{\prime} = \alpha / \eff_{D, \mathrm{slope}}(\xi_{\mathrm{unif}}) \geq \alpha$  
would be required 
to obtain the same precision 
in terms of the $D_{\mathrm{slope}}$-criterion 
as when the $D_{\mathrm{slope}}$-optimal 
subsampling design $\xi_{\alpha}^*$ 
of subsampling proportion $\alpha$
would have been used.
For example, if the efficiency $\eff_{D, \mathrm{slope}}(\xi_{\mathrm{unif}})$ is $0.5$, 
then twice as many units would be needed 
in the subsample
under uniform random subsampling 
than for the $D_{\mathrm{slope}}$-optimal subsampling design
to obtain the same precision 
in terms of the $D_{\mathrm{slope}}$-criterion.

In the case of a spherical centered distribution
of the covariates, the 
$D_{\mathrm{slope}}$-efficiency~\eqref{eq:d-slope-efficiency}
of uniform random subsampling reduces to
\begin{equation}
	\label{eq:d-slope-efficiency-spherical}
	\eff_{D, \mathrm{slope}}(\xi_{\mathrm{unif}}) 
	= \frac{\alpha \sigma^{2}}{m_{2}(\xi_{\alpha}^{*})} .
\end{equation}

By considerations of equivariance,
the $D_{\mathrm{slope}}$-efficiency
of uniform random subsampling is invariant  with respect to 
affine linear transformations of the covariates. 

\begin{example}[multivariate normal distribution]
	\label{example:normal-efficiency}
	When the covariates are multivariate normal 
	($ \X_{i} \sim \Nd_{d}\left(\mx, \bm{\Sigma} \right)$),
	the efficiency of uniform random subsampling is
	\begin{equation*}
		\label{eq:standard-normal-efficiency-uniform-random}
		\eff_{D, \mathrm{slope}}(\xi_{\mathrm{unif}}) 
		= \frac{d \alpha}{d \alpha + 2 \chi^{2}_{d, 1 - \alpha} f_{\chi^2_d}(\chi^{2}_{d, 1 - \alpha})},
	\end{equation*}
	by equations~\eqref{eq:standard-normal-second-moment-optimal}
	and~\eqref{eq:d-slope-efficiency-spherical}.
	In Figure~\ref{Figure:efficiency-uniform-normal},
	we plot the $D_{\mathrm{slope}}$-efficiency
	of uniform random subsampling
	for various numbers $d$ of covariates
	in dependence on the subsampling proportion $\alpha$.
	As can be seen from the figure,
	the $D_{\mathrm{slope}}$-efficiency
	of uniform random sampling is always larger than $\alpha$.
	This is in accordance with the argument 
	in~\cite{reuter2023optimal} 
	that uniform random subsampling 
	has relative efficiency $\alpha$
	compared to the full data set
	and the optimal subsampling design $\xi_{\alpha}^{*}$
	bears less information than full data.
	For fixed dimension $d$,
	the $D_{\mathrm{slope}}$-efficiency
	of uniform random sampling decreases 
	when the sampling proportion $\alpha$ gets smaller,
	and approaches zero for $\alpha \to 0$.
	For dimension $d = 1$,
	the $D_{\mathrm{slope}}$-efficiency
	of uniform random sampling is close to $\alpha$.
	This property is less pronounced for higher dimensions $d$.
	In particular, for fixed subsampling proportion $\alpha$,
	the $D_{\mathrm{slope}}$-efficiency
	of uniform random sampling increases in the dimension $d$
	and tends to $1$ for $d \to \infty$.
	For $d = 1\,000$,
	the $D_{\mathrm{slope}}$-efficiency
	of uniform random sampling is already quite high
	for reasonable values of the subsampling proportion $\alpha$
	($\eff_{D, \mathrm{slope}}(\xi_{\mathrm{unif}}) \geq 0.89$
	when $\alpha \geq 0.01$).
	Thus, there is a substantial gain in using the 
	$D$-optimal subsampling design $\xi_{\alpha}^{*}$
	instead of uniform random subsampling
	in the case of small to moderate dimension $d$.
	But the gain is less prominent for higher dimensions $d$.
	\begin{figure}[h]
		\centering
		\includegraphics[width=.6\textwidth]{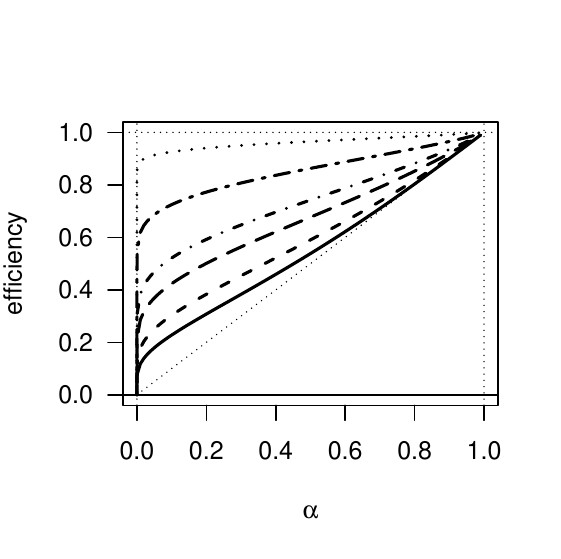}
		\caption{Efficiency of uniform random subsampling
			for multivariate normal distributions
			of dimensions $d = 1$ (solid), $2$ (dashes), $5$ (long dashes),
			$10$ (dashes and dots), $50$ (long and short dashes),
			and $1\,000$ (dots)
			in dependence on the subsampling proportion $\alpha$}
		\label{Figure:efficiency-uniform-normal}
	\end{figure}
\end{example}

\section{Subsampling Algorithms}
\label{sec:algorithms}

To implement a feasible subsampling procedure 
according to the $D$-optimal subsampling design 
$\xi_{\alpha}^{*}$ from Theorem~\ref{theorem:dens-of-opt-design-elliptical}, 
we first propose the following Algorithm~\ref{alg:quant}. 

\begin{algorithm}
	\caption{Subsample selection according to 
		the $D$-optimal subsampling design $\xi_{\alpha}^{*}$}
	\label{alg:quant}
	\KwData{Covariates $\x_{i}$, $i = 1, \ldots, n$, mean $\mx$, 
		covariance matrix $\bm{\Sigma}$.}
	Fix $\alpha$\;
	For $i = 1, \ldots, n$ do:
	\\
	Step 1: Calculate the Mahalanobis distance 
	$\mathrm{d}_{\bm{\Sigma}}(\x_{i}, \mx) = (\x_{i} - \mx)^{\top} \bm{\Sigma}^{-1} (\x_{i} - \mx)$\;
	Step 2: Select $ \x_{i} $ when $\mathrm{d}_{\bm{\Sigma}}(\x_{i}, \mx) \geq  q_{1 - \alpha}$\;
	Repeat\;
\end{algorithm}

Algorithm~\ref{alg:quant}
provides a simple acceptance-rejection method 
in which all data points are accepted into the subdata 
that lie in the support of $ \xi_{\alpha}^{*} $
while all other data points are rejected. 
This selection procedure can be performed
sequentially by looking at each data point once
and decide instantly on acceptance, 
irrespectively of all other data 
\citep[cf.][]{pronzato2021sequential}.

As the covariates $\X_{i}$ are random, the 
Algorithm~\ref{alg:quant}
results in a random size $K$ of 
the subsample $(\X^{\prime}_{1}, \ldots, \X^{\prime}_{K})$, say.
The subsample size $K$ is binomial, 
$K \sim \mathcal{B}(n, \alpha)$
with size $n$ of the full data set and 
subsampling proportion $\alpha$.
To assess the performance of the algorithm,
we consider the asymptotic behavior
when the size $n$ of the full data set 
and, hence, the subsample size $K = K_{n}$
go to infinity.
Then, by the Law of Large Numbers, the proportion
$K_{n} / n$ of data selected tends to $\alpha$.
The elements $\X^{\prime}_{i}$ of the subsample
are independent with density
$f_{\X^{\prime}_{i}}(\x) = \alpha^{-1} f_{\xi_{\alpha}^{*}}(\x)$,
and the standardized information
$n^{-1} \sum_{i=1}^{K_{n}} \f(\X^{\prime}_{i}) \f(\X^{\prime}_{i})^{\top}$
tends to $\M(\xi_{\alpha}^{*})$.
Moreover, the associated least squares estimator
$\hat{\B}_{n}$ is asymptotically normal
with asymptotic covariance matrix 
$\sigma_{\e}^{2} \M(\xi_{\alpha}^{*})^{-1}$
(see Lemma~\ref{lemma:asymptotic-normality}).

To achieve a deterministic subsample size $k$, say,
with subsampling proportion $k / n \approx \alpha$, 
one may adopt a strategy presented
in~\cite{pronzato2006construction}.
However, we propose the following, 
simpler nonsequential Algorithm~\ref{alg:topk}.
To state this algorithm, we introduce the notation $\x_{i:n}$ 
for the $i$th generalized (reverse) order statistics 
based on the Mahalanobis distance 
$\mathrm{d}_{\bm{\Sigma}}(\x, \mx) = (\x - \mx)^{\top} \bm{\Sigma}^{-1} (\x - \mx)$ 
such that $(\x_{1:n}, \ldots, \x_{n:n})$ 
is a permutation of $(\x_{1}, \ldots, \x_{n})$ and 
$\mathrm{d}_{\bm{\Sigma}}(\x_{i:n}, \mx) 
	\geq \mathrm{d}_{\bm{\Sigma}}(\x_{i+1:n}, \mx)$.
The latter inequalities are strict almost surely
by the continuity of the distribution of the covariates $\X_{i}$.

\begin{algorithm}
	\caption{Subsample selection according to maximal Mahalanobis distance}
	\label{alg:topk}
	\KwData{Covariates $\x_{i}$, $i = 1, \ldots, n$, mean $\mx$, 
		covariance matrix $\bm{\Sigma}$.}
	Fix $k$\;
	Step 1: For $i = 1, \ldots, n$ do: 
	\\
	\phantom{Step 1: }
	Calculate the Mahalanobis distance 
	$\mathrm{d}_{\bm{\Sigma}}(\x_{i}, \mx) = (\x_{i} - \mx)^{\top} \bm{\Sigma}^{-1} (\x_{i} - \mx)$\;
	\phantom{Step 1: }
	Repeat\;
	Step 2: Select $\x_{1:n}, \ldots, \x_{k:n}$ 
	corresponding to the $k$ largest values of $\mathrm{d}_{\bm{\Sigma}}(\x_{i}, \mx)$\;
\end{algorithm}

The selection Step~2 of Algorithm~\ref{alg:topk} 
can be done e.\,g.~by using partial quicksort 
\citep[see][]{martinez2004partial}.
Algorithm~\ref{alg:topk}
has the additional advantage
that it does not rely on the particular distribution 
of the covariates apart from ellipticity
and does not need calculation of any
quantile.  
Only, knowledge of the first and second moments 
is requested which may be estimated from the data.

To obtain a subsample with subsampling proportion 
approximately $\alpha$,
the subsample size $k$ may be chosen
as the integer part $k = k_{n} = [n \alpha]$ 
of $n \alpha$.
When the size $n$ of the full data increases,
the Mahalanobis distance
$\mathrm{d}_{\bm{\Sigma}}(\X_{k_{n}:n}, \mx)$ 
of the $k_{n}$th order statistics $\X_{k_{n}:n}$
tends to the $(1 - \alpha)$-quantile $q_{1 - \alpha}$,
and the asymptotic properties 
of the subsample obtained by Algorithm~\ref{alg:topk} 
are similar to those
of the subsample generated by Algorithm~\ref{alg:quant}. 
Thus, the inverse information matrix $\M(\xi_{\alpha}^{*})^{-1}$
may serve as an approximation to the asymptotic covariance
of the least squares estimator $\hat{\B}$
based on the observations in the subsample 
$\X_{1:n}, \ldots, \X_{k_{n}:n}$,
\begin{equation*}
	\label{eq:variance-approx}
	\Cov[\hat{\B}_{n} ; \X_{1:n}, \ldots, \X_{k_{n}:n}] 
	\approx \frac{1}{n} \sigma_{\e}^{2} \M(\xi_{\alpha}^{*})^{-1} . 
\end{equation*}
This approach will be supported by the simulation results below.

\section{Subsampling Design with Fixed Sample Size, Simulation}
\label{sec:method}

In contrast to the previous sections, 
where we aim at subsampling 
a certain proportion $\alpha$ of the full data, 
we now consider the case of selecting 
a fixed number $k$ of data points
as in \cite{wang2019information}
while the size $n$ of the full data may vary.
In this situation, the subsampling proportion $\alpha_{n} = k / n$ decreases 
when $ n $ increases.
Although there will be no straightforward 
asymptotic behavior in $n$ for $k$ fixed,
we propose to use the approximation 
by continuous subsampling designs
$\xi_{n}$ with total mass $\alpha_{n} = k / n $
as in Section~\ref{sec:design}
if the subsampling size $k$ is sufficiently large.

To allow for comparison of different sizes $n$
of the full data set, 
we will use the nonstandardized (per subsample) information matrix
$\M_{n}(\xi_{n}) = n \int \f(\x) \f(\x)^{\top} f_{\xi_{n}}(\x) \diff \x$ 
from now on
such that $n \int f_{\xi_{n}}(\x) \diff \x = k$
for fixed subsampling size $k$.
When $k$ is large, 
the asymptotic results of the previous sections 
give rise to consider the inverse information matrix 
$\M_n(\xi_{n})^{-1}$ as an approximation to the covariance matrix 
of the least squares estimator $\hat{\B}_{n}$ 
based on the subsample of $k$ out of $n$ data points
according to $\xi_{n}$. 
Hence, it is reasonable to make use 
of the optimal continuous subsampling design $\xi_{\alpha_{n}}^{*}$ 
for the proportion $\alpha_{n} = k / n$
as derived in 
Theorem~\ref{theorem:dens-of-opt-design-elliptical}.

In the following simulation study,
we will generate subsamples 
by Algorithm~\ref{alg:topk} 
with $k$ fixed for various values of $n$ for the full data size.
We obtain subsamples 
$\X_{1:n}, \ldots, \X_{k:n}$ 
which
consists of those $k$ data points with
largest Mahalanobis distance 
$\mathrm{d}_{\bm{\Sigma}}(\X_{i}, \mx)$
from the mean $\mx$.
Conditionally on 
$ \X_{1:n}, \ldots, \X_{k:n} $, 
these subsamples 
have \textit{observed} nonstandardized information matrix
$\M(\X_{1:n}, \ldots, \X_{k:n}) = \sum_{i=1}^{k} \f(\X_{i:n}) \f(\X_{i:n})^{\top}$,
and the mean information
$\E[\M(\X_{1:n}, \ldots, \X_{k:n})]$
may be approximated by
$\M_{n}(\xi_{k/n}^{*}) = n \M(\xi_{k/n}^{*})$.

Accordingly,
when we are interested in the slopes only,
the observed slope related information matrix
$\mathbf{S}(\X_{1:n}, \ldots, \X_{k:n})$
is the inverse of the lower right
$d \times d$ submatrix
of $\M(\X_{1:n}, \ldots, \X_{k:n})^{-1}$,
and its mean 
may be approximated by
$n \mathbf{S}(\xi_{k/n}^{*}) = n s^{2}(\xi_{k/n}^{*}) \bm{\Sigma}$.

Similar to other simulation studies in the literature,
we will consider the variances
of the slope estimates $\hat{\B}_{\mathrm{slope}}$.
The covariance matrix of $\hat{\B}_{\mathrm{slope}}$
may be decomposed,
\begin{equation}
	\label{eq:covariance-topk-decompose}
	\Cov[\hat{\B}_{\mathrm{slope}}]
	= \E\left[\Cov[\hat{\B}_{\mathrm{slope}} | \X_{1:n}, \ldots, \X_{k:n}]\right]
		+ \Cov\left[\E[\hat{\B}_{\mathrm{slope}} | \X_{1:n}, \ldots, \X_{k:n}]\right] \, ,
\end{equation}
into the expectation of the conditional covariance
and the covariance of the conditional expectation 
given the covariates, respectively.
Since the slope estimator $\hat{\B}_{\mathrm{slope}}$
is conditionally unbiased, 
the latter term in equation~\eqref{eq:covariance-topk-decompose}
vanishes,
and the conditional covariance 
$\Cov[\hat{\B}_{\mathrm{slope}} | \X_{1:n}, \ldots, \X_{k:n}]$
is proportional to the inverse 
of the slope related information
$\mathbf{S}(\X_{1:n}, \ldots, \X_{k:n})$.
Hence,
\begin{equation*}
	\label{eq:covariance-topk}
	\Cov[\hat{\B}_{\mathrm{slope}}]
		= \sigma_{\e}^{2} \E\left[\mathbf{S}(\X_{1:n}, \ldots, \X_{k:n})^{-1}\right] \, .
\end{equation*}
For $k$ large,
the covariance $\Cov[\hat{\B}_{\mathrm{slope}}]$ 
may be approximated by
its asymptotic counterpart~\eqref{eq:as-var-slope},
\begin{equation}
	\label{eq:var-of-slope-fixed-k}
	\Cov[\hat{\B}_{\mathrm{slope}}]
	\approx \frac{\sigma_{\e}^{2}}{n s^{2}(\xi_{k/n}^{*})} \bm{\Sigma}^{-1} .
\end{equation}
Note that, 
by Lemma~\ref{lemma:unbounded-distribution},
the leading term on the right hand side 
of equation~\eqref{eq:var-of-slope-fixed-k}
will tend to zero for $n$ to infinity
when the distribution of the covariates is unbounded.
This indicates a kind of consistency of $\hat{\B}_{\mathrm{slope}}$
in increasing size $n$ of the full data set
although the sample size $k$ remains fixed
as has been observed in
\cite{wang2019information}
for their subsampling method IBOSS
and will be supported by our simulations below.

\begin{example}[standard multivariate normal distribution]
	\label{example:standard-normal-mse-approximate}
	In the case of standard multivariate normally distributed covariates, 
	$\X_{i} \sim \Nd_{d}(\mathbf{0}, \Id_{d})$,
	we get the approximation 
	\begin{equation}
		\label{eq:var-of-slope-normal-dist}
		\Cov[\hat{\B}_{\mathrm{slope}}]
		\approx 
		\frac{1}{n s^{2}(\xi_{k/n}^{*})} \Id_{d}
		=
		\left( k 
			+ \frac{2 n}{d} \chi_{d, 1 - (k/n)}^{2} 
					f_{\chi_{d}^{2}}\left(\chi_{d, 1 - (k/n)}^{2}\right) 
		\right)^{-1} \Id_{d}.
	\end{equation}	
	by equations~\eqref{eq:var-of-slope-fixed-k}
	and \eqref{eq:standard-normal-second-moment-optimal}.
	The mean squared error 
	$ \MSE(\hat{\B}_{\mathrm{slope}}) = \sum_{j=1}^{d} \Var[\hat{\beta}_{j}]$
	considered in \cite{wang2019information}
	is the trace of $\Cov[\hat{\B}_{\mathrm{slope}}]$.
	In order to compare the behavior 
	for varying dimensions $d$, 
	we use the standardized (per dimension) 
	mean squared error,
	$\MSE(\hat{\B}_{\mathrm{slope}}) / d$ 
	which is, in the present situation, 
	equal to $\Var[\hat{\beta}_{j}]$ for estimating the slope 
	$\beta_{j}$ of any component of the covariates. 
	In Figure~\ref{Figure:Variances_ND}, the plotted lines 
	depict the approximation 
	$d / \left(d k + 2 n \chi_{d, 1 - (k/n)}^{2} f_{\chi_{d}^{2}}(\chi_{d, 1 - (k/n)}^{2})\right)$
	of $\MSE(\hat{\B}_{\mathrm{slope}}) / d$
	from equation~\eqref{eq:var-of-slope-normal-dist} 
	for $d = 2$, $5$, $10$, $25$, and $50$
	in dependence on the size $n$ of the full data 
	while the size $k = 1\,000$ of the subsample is fixed. 
	Values of the (approximated) standardized $\MSE$
	are indicated by the labels on the left vertical axis. 
	The results are in accordance with 
	Example~\ref{example:standard-normal}:
	For any dimension $d$,
	the $\MSE$ decreases in the full data size $n$
	and tends to $0$ for $n \to \infty$.
	This behavior is less pronounced 
	for larger dimension $d$
	because estimation becomes more difficult
	when the number of parameters increases.
	\begin{figure}[h]
		\centering
		\includegraphics[width=.6\textwidth]{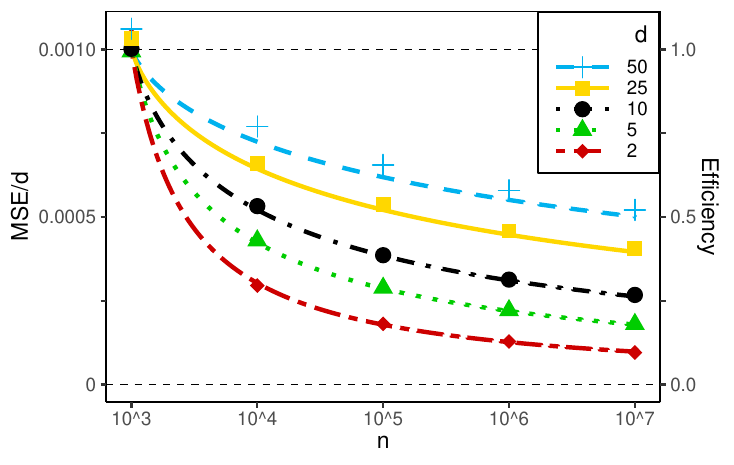}
		\caption{Approximate (lines) and simulated (symbols) 
			standardized mean squared errors 
			and approximate efficiency of uniform random subsampling 
			in dependence on full data size $n$, 
			subsample size $k = 1\,000$, 
			and various numbers $d$ of standard normal covariates}
		\label{Figure:Variances_ND}
	\end{figure}
	
	Additionally,
	in Figure~\ref{Figure:Variances_ND},
	the symbols represent corresponding simulated values
	of $\MSE / d$ for selected numbers 
	$n = 10^{k}$, $k = 3, \ldots, 7$,
	for the size of the full data set. 
	
	For this simulation, we generate complete full data sets
	and compute the simulated mean squared error as follows:
	In each iteration $v = 1, \ldots, V = 1\,000$, 
	\begin{itemize}
		\item 
			the parameter vector $\B^{(v)}$ is generated from 
			a standard multivariate normal distribution of dimension $d + 1$,
			$\B^{(v)} \sim \Nd_{d+1}(\mathbf{0}, \Id_{d+1})$,
		\item 
			the covariates $\x_{i}^{(v)}$ come from a 
			$d$-dimensional standard multivariate normal distribution,
			$\X_{i}^{(v)} \sim \Nd_{d}(\mathbf{0}, \Id_{d})$,
		\item 
			the error terms $\e_{i}^{(v)}$ come from
			a standard normal distribution,
			$\e_{i}^{(v)} \sim \Nd(0, 1)$,
		\item 
			and the values $y_{i}^{(v)}$ of the response variable
			are obtained by 
			$y_{i}^{(v)} 
				= \beta_{0}^{(v)} + {\x_{i}^{(v)}}^{\top} \B_{\mathrm{slope}}^{(v)} + \e_{i}^{(v)}$. 
		\item 
			For each size $n$, 
			we select subdata according to Algorithm~\ref{alg:topk}
			and compute the least squares estimate $ \hat{\B}_{n}^{(v)}$. 
		\item
			From these estimates, we calculate the simulated mean squared error
			\\
			$\MSE(\hat{\B}_{\mathrm{slope}}) 
				= \frac{1}{V} \sum_{v=1}^{V} \Vert \hat{\B}_{\mathrm{slope}}^{(v)} - \B^{(v)} \Vert^{2}$.	
	\end{itemize}
	
	From Figure~\ref{Figure:Variances_ND}
	we see that the simulated standardized mean squared error
	$\MSE(\hat{\B}_{\mathrm{slope}}) / d$
	tends to zero as $n$ goes to infinity. 
	While this decrease is evident for low dimensions $d$,
	it turns out to be substantially slower
	for higher dimensions 
	as more parameters need to be estimated
	from the same number $k$ of observations.
	It can be seen that the 
	approximated $\MSE$ values are close 
	to the simulated ones, at least, for small to moderate dimensions $d$.
	This justifies the approximation proposed in 
	equation~\eqref{eq:var-of-slope-normal-dist}.
	However, the simulated $\MSE$ 
	is systematically larger than 
	the approximate $\MSE$.
	This observation may be explained by noticing
	that the simulated covariance matrix estimates
	$\E[\M(\X_{1:n}, \ldots, \X_{k:n})^{-1}]$
	which is larger than the approximate covariance matrix 
	$\E[\M(\X_{1:n}, \ldots, \X_{k:n})]^{-1}$ 
	by Jensen's inequality.
	The exceedance is more pronounced
	for higher dimensions $d$.
	
	The relative efficiency
	of uniform random subsampling
	can be defined 
	in terms of $\MSE$ 
	as the ratio of the $\MSE$ under 
	$\X_{1:n}, \ldots, \X_{k:n}$
	divided by the $\MSE$ under
	uniform random subsampling.
	This ratio can be approximated
	by $k / \left(n s^{2}(\xi_{k/n}^{*})\right)$
	(see~\eqref{eq:d-slope-efficiency-spherical}).
	Hence,
	the efficiency of the
	uniform random subsampling design
	is $k$ times the approximation
	of the standardized $\MSE$ in equation~\eqref{eq:var-of-slope-normal-dist}.
	As a consequence, Figure~\ref{Figure:Variances_ND}
	also depicts the relative efficiency
	of uniform random subsampling,
	when the right vertical axis is used.
\end{example}

The $\MSE$ considered in
Example~\ref{example:standard-normal-mse-approximate}
corresponds to the $A$-criterion 
for estimating the slope parameters
in classical optimal design theory.
Hence, for spherical distributions of the covariates,
the $D$-optimal subsampling design $\xi_{\alpha}^{*}$
is also $A$-optimal for $\MSE(\hat{\B}_{\mathrm{slope}})$.
Then, under $\xi_{\alpha}^{*}$, 
the approximate standardized mean squared error 
$\MSE / d$ for the slopes
coincides with the inverse homogeneous version
$\det(\mathbf{S}(\xi_{\alpha}^{*}))^{-1/d}$
of the $D_{\mathrm{slope}}$-criterion.
However, in contrast to the $D$-criterion,
the $\MSE$-criterion is not equivariant
with respect to linear transformations,
and the $D$-optimal subsampling design $\xi_{\alpha}^{*}$
does not remain to be optimal 
with respect to the $\MSE$
when the elliptical distributions of the covariates
is nonspherical. 
For our proposed subsampling scheme
$\X_{1:n}, \ldots, \X_{k:n}$ of Algorithm~\ref{alg:topk},
we will thus consider the $D_{\mathrm{slope}}$-criterion
$\det(\mathbf{S}(\xi_{\alpha}^{*}))^{-1/d}$
instead of $\MSE / d$
in the subsequent simulation studies.

Further, note that, in the simulation of 
Example~\ref{example:standard-normal-mse-approximate},
the simulated values of the parameter vector $\B$ 
do not have any influence on the estimated variances
and, hence on the simulation results. 
Therefore, there is no need to generate $\B$
in the simulation.
To simplify the simulations even more,
we may simulate the
covariance matrix of $\hat{\B}$
by averaging the inverse observed information matrices
$\M(\x_{1:n}, \ldots, \x_{k:n})^{-1}$
as indicated in Example~\ref{example:standard-normal-mse-approximate}
and avoid generation of the responses $y_{i}$ 
and calculation of the estimates $\hat{\B}$.
We will use this approach below.

\subsection{Simulation Setup}
\label{subsec:simulation-setup}
\mbox{}

For fixed $k$,
we study the performance of the subsampling scheme 
$\X_{1:n}, \ldots, \X_{k:n}$ of Algorithm~\ref{alg:topk}
based on the $D$-optimal subsampling design
of Theorem~\ref{theorem:dens-of-opt-design-elliptical}
and a simplified version defined in 
Subsection~\ref{subsec:simplified-algorithm} below.
We compare them to other methods
with respect to the $D_{\mathrm{slope}}$-criterion.
The simulations are structured similarly to those 
in~\cite{wang2019information}
to allow for comparison with results in the literature.

In particular,
we consider covariates of dimension $d$ 
equal to fifty.
The covariates are either multivariate normal
or come from a multivariate $t$-distribution
with three degrees of freedom.
The choice of three degrees of freedom is
to maximize the dispersion of the covariates 
while the second moments still exist. 
Both uncorrelated and correlated covariates are considered. 
For the dispersion matrix $\bm{\Sigma}$,
we consider compound symmetry, 
i.\,e.\ $\bm{\Sigma}$ is of the form
$\bm{\Sigma}_{\rho} 
= (1 - \rho) \bm{\mathbb{I}}_{d} + \rho \bm{1}_{d} \bm{1}_{d}^{\top}$ 
with equal correlation $\rho$
between the covariates,
where $\bm{1}_{d} $ denotes a $d$-dimensional vector 
with all entries equal to one. 
In particular, we consider 
the uncorrelated case, $\rho = 0$, 
and a moderate correlation $\rho = 0.5$.

The subdata are of fixed size $k = 1\,000$ 
whereas the size $n$ of the full data varies from 
one thousand to ten millions. 
Note that for $n = 1\,000$
the full data set is selected as subdata
for either method
and that this size is included only for completeness.
The simulations contain $V = 10\,000$ iterations each. 

The simulations are performed as follows:
For each full data size $n$, 
we select subdata based on our approach 
by Algorithm~\ref{alg:topk} (``D-OPT'') 
or its simplified version defined in
Algorithm~\ref{alg:simplified} (``D-OPT-s'') 
and the IBOSS method (``IBOSS'') by \cite{wang2019information}
for comparison. 
Additionally, we select subdata by uniform random subsampling (``UNIF''). 
and compare further to estimates based on the full data (``FULL'') 
to put our approach and the IBOSS method into broader context. 

More precisely,
in each iteration $v$, 
we generate full data of size $n$
and form the $k \times d$ subsample matrix 
$\mathbf{X}^{(v)} = (\X_{1}^{(v)}, \ldots, \X_{k}^{(v)})^{\top}$ 
based on the respective method. 
We calculate the related (conditional) $d \times d$ 
covariance matrix 
$\mathbf{C}_{\mathrm{slope}}^{(v)} 
	= \left({\mathbf{X}^{(v)}}^{\top} \mathbf{X}^{(v)} 
		- k \bar{\X}^{(v)} {\bar{\X}^{(v)}{}}^{\top}\right)^{-1}$
for the slope parameters $\B_{\mathrm{slope}}$, 
where $\bar{\X}^{(v)} = 1/k \sum_{i=1}^{k} \X_{i}^{(v)}$ 
is the mean vector of the subsample. 
We then take the average 
$\mathbf{C}_{\mathrm{slope}} = 1/V \sum_{v = 1}^{V} \mathbf{C}_{\mathrm{slope}}^{(v)}$ 
as the simulated covariance matrix 
for $\hat{\B}_{\mathrm{slope}}$. 
To compare the performance of the methods,
we calculate the determinant of $\mathbf{C}_{\mathrm{slope}}$ 
and standardize it to the homogeneous version 
$\det(\mathbf{C}_{\mathrm{slope}})^{1/d}$. 
This quantity is reported for any of the methods.

\subsection{Simulation Results for Algorithm~\ref{alg:topk}}
\label{subsec:simulation-results-alg-topk}
\mbox{}

Figure~\ref{graphic:NDist} shows the simulation results 
for normally distributed covariates $\X_{i}$ 
with covariance matrices $\bm{\Sigma}_{0} = \Id_{50}$ 
and $\bm{\Sigma}_{0.5} = \frac{1}{2} (\Id_{50} + \bm{1} \bm{1}^{\top})$, 
respectively.
Figure~\ref{graphic:tDist} shows the corresponding results 
for the $t$-distribution with three degrees of freedom 
and the same dispersion matrices
$\bm{\Sigma}_{0}$ and $\bm{\Sigma}_{0.5}$.
In the latter figure, we suppress the uniformly selected subsample 
for focusing on the other methods 
because uniform subsampling performs substantially worse
and the determinant stays close to constant 
at about $4.6 \times 10^{-4}$ for all $n$ in the uncorrelated case 
and at about $8.5 \times 10^{-4}$ in the case with correlation $\rho = 0.5$.
\begin{figure}[htb]
	\centering
	\begin{subfigure}[t]{.475\textwidth}
		\centering
		\includegraphics[width=\linewidth]{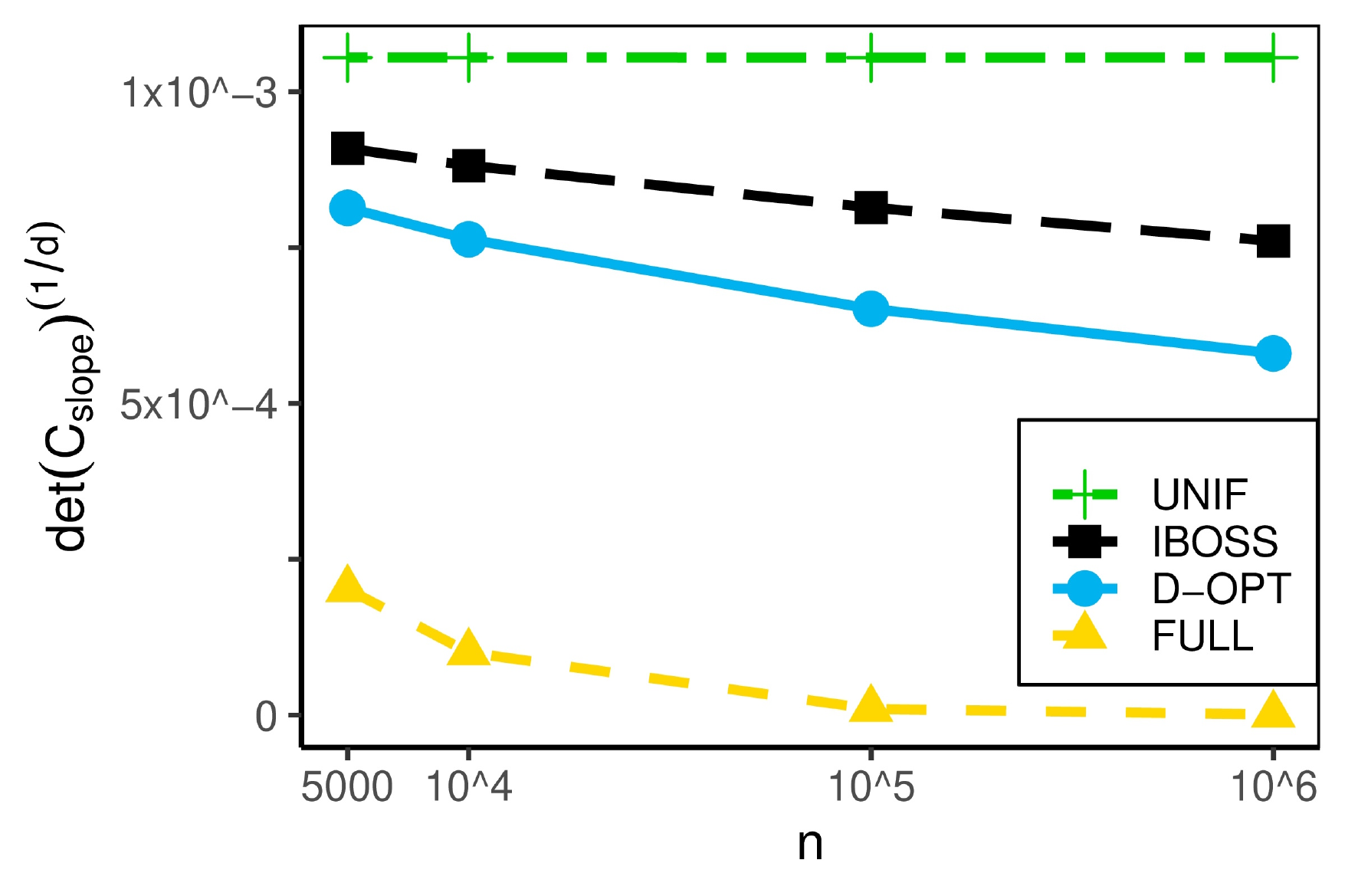}
		\caption{$\X_{i} \sim \Nd\left(\mathbf{0}, \Id_{50}\right)$}
	\end{subfigure}
	\hfill
	\begin{subfigure}[t]{.475\textwidth}
		\centering
		\includegraphics[width=\linewidth]{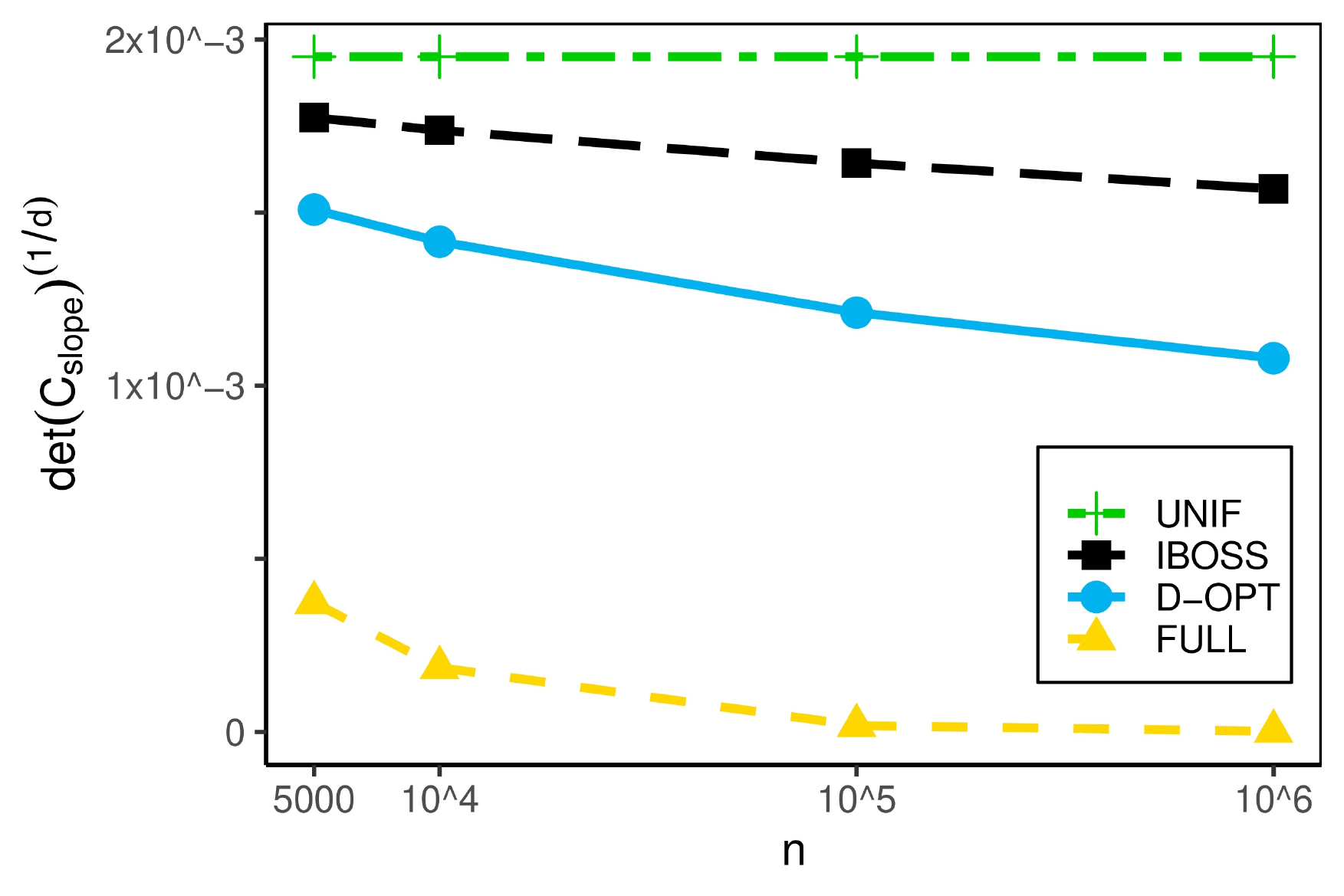}
		\caption{$\X_{i} \sim \Nd\left(\mathbf{0}, \bm{\Sigma}_{0.5}\right)$}
	\end{subfigure}
	\caption{Simulated standardized determinant 
		of the slope covariance matrix 
		for normally distributed covariates, 
		uncorrelated case (left) and correlation $\rho = 0.5$ (right)}
	\label{graphic:NDist}
\end{figure}
\begin{figure}[htb]
	\centering
	\begin{subfigure}[t]{.475\textwidth}
		\centering
		\includegraphics[width=\linewidth]{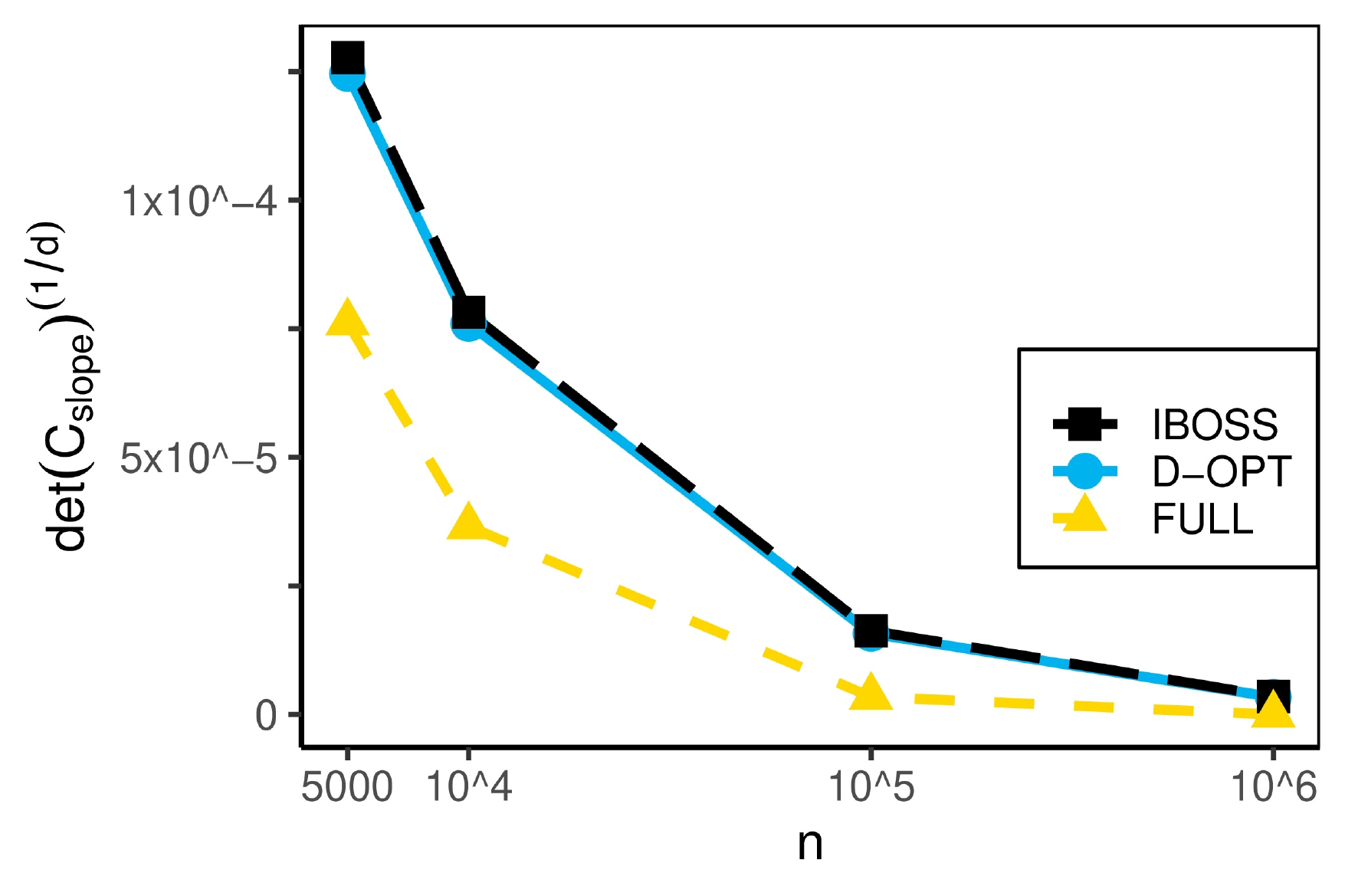}
		\caption{$ \X_{i} \sim t_{3} \left(\mathbf{0}, \Id_{50}\right)$}
	\end{subfigure}
	\hfill
	\begin{subfigure}[t]{.475\textwidth}
		\centering
		\includegraphics[width=\linewidth]{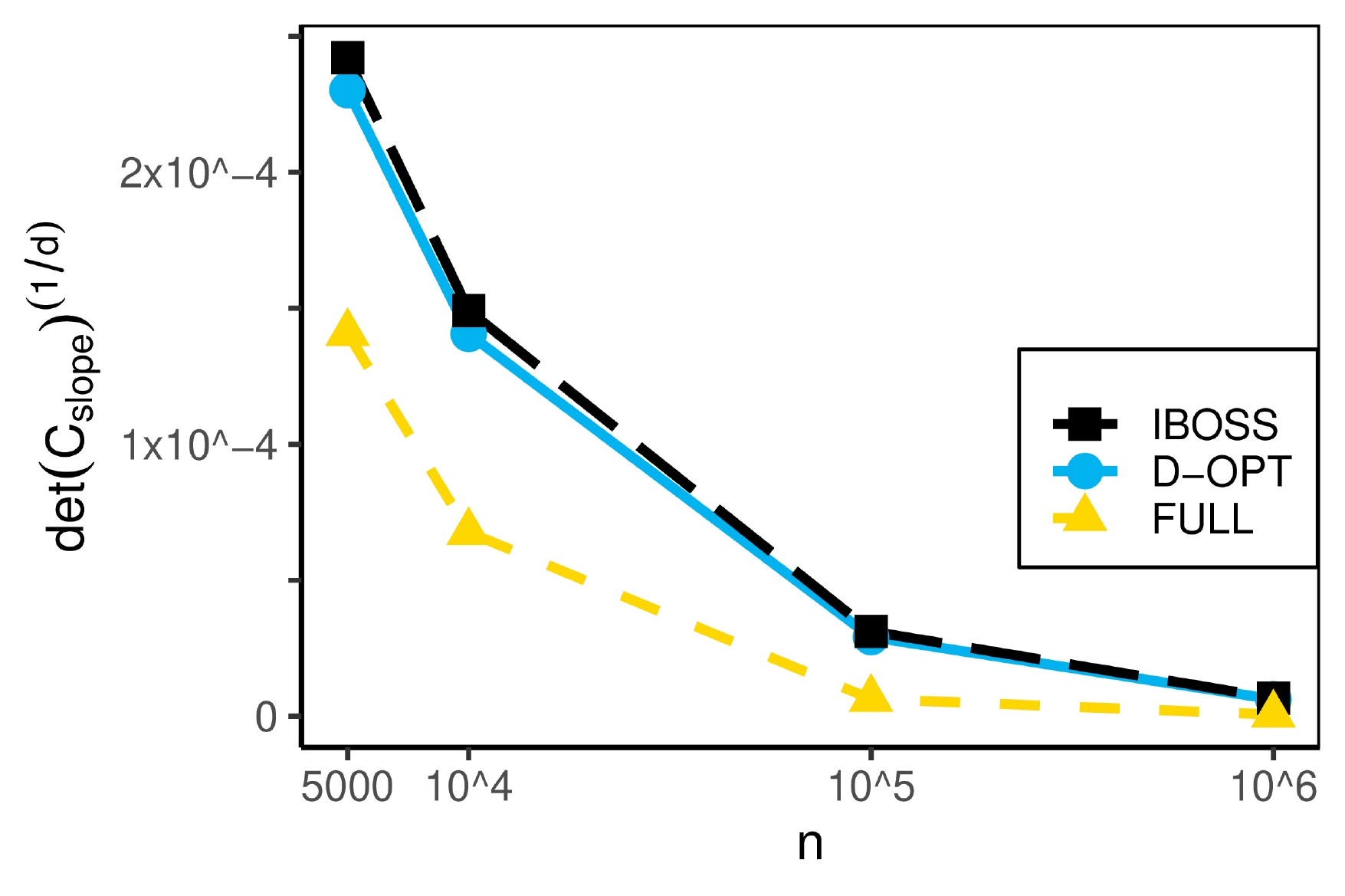}
		\caption{$\X_{i} \sim t_{3} \left(\mathbf{0}, \bm{\Sigma}_{0.5}\right)$}
	\end{subfigure}
	\caption{Simulated standardized determinant 
		of the slope covariance matrix 
		for $t$-distributed covariates with three degrees of freedom, 
		uncorrelated case (left) and correlation $\rho = 0.5$ (right)}
	\label{graphic:tDist}
\end{figure}

As can be seen from the figures, 
our method based on the $D$-optimal subsampling design
is able to outperform the IBOSS method
when the shape of the distribution of the covariates is known. 
Our approach is even more advantageous 
over the IBOSS method when the covariates are correlated. 
In that case, 
the relative efficiency of the IBOSS method with respect to the D-OPT method 
ranges from approximately $0.951$ to $0.928$ 
depending on the full sample size $n$. 
The benefit is however less in the case of the heavy-tailed $t$-distribution 
where both methods perform substantially closer to the full data.
In particular, for large full data size $n$,
both methods work nearly as good as the full data.

\subsection{Computational Complexity}
\label{subsec:mathematical-complexity}
\mbox{}

To judge the computational complexity of 
statistical inference based on subsamples obtained by
the subsampling scheme of Algorithm~\ref{alg:topk}, 
we first notice that the selection 
of $\x_{1:n}, \ldots, \x_{k:n}$ is of order $\Or(nd^{2})$,
where the computation of the inverse 
of the $d \times d$ covariance matrix $\bm{\Sigma}$ is negligible 
for $d \ll n$. 
Computing the least squares estimator $\hat{\B}_{n}$
based on $k$ observations has computational complexity $\Or(kd^{2})$. 
As $k \leq n$, 
the computational complexity is thus $\Or(nd^{2})$
for the entire procedure. 
This is the same order as for computing the least squares estimator on the full data,
but presumably with some smaller constant. 
Because there is no gain in the order of computational complexity,
the subsampling procedure is of practical use only in scenarios, 
where the focus is on the expense of observing the response variable $Y_{i}$,
and not for reducing the computational effort.

\subsection{Simplified Algorithm}
\label{subsec:simplified-algorithm}
\mbox{}

For scenarios where computational complexity is a major issue, 
we, alternatively, propose a simplified method
in which we disregard correlation. 
There we standardize each covariate $X_{ij}$ 
merely by its standard deviation $\sigma_{j}$.

Formally, for transformation of the data, we use the diagonal matrix 
$\tilde{\bm{\Sigma}} = \diag(\sigma_{1}^{2}, \ldots, \sigma_{d}^{2})$ 
containing the diagonal entries of the covariance matrix $\bm{\Sigma}$.
For implementation, we adapt Algorithm~\ref{alg:topk} by replacing 
the Mahalanobis distance 
$\mathrm{d}_{\bm{\Sigma}}(\x_{i}, \mx)$ 
by its simplified counterpart 
$\mathrm{d}_{\tilde{\bm{\Sigma}}}(\x_{i}, \mx) 
	= (\x_{i} - \mx)^{\top} \tilde{\bm{\Sigma}}_{\X}^{-1} (\x_{i} - \mx)$. 
We select those $k$ points
with the largest values of 
$\mathrm{d}_{\tilde{\bm{\Sigma}}}(\x_{i}, \mx)$ 
and denote the resulting subsample by 
$(\tilde{\x}_{1:n}, \ldots, \tilde{\x}_{k:n})$. 
The matrix multiplication in
$\mathrm{d}_{\tilde{\bm{\Sigma}}}(\x_{i}, \mx)$ 
has computational complexity $\Or(nd)$ because 
$\tilde{\bm{\Sigma}}^{-1}$ is a diagonal matrix.  
For a proper subsample,
it is reasonable to assume $k \leq n / d$.
Then the entire subsampling procedure 
has computational complexity $\Or(nd)$, 
\begin{algorithm}
	\caption{Subsample selection according to simplified maximal distance}
	\label{alg:simplified}
	\KwData{Covariates $\x_{i}$, $i = 1, \ldots, n$, mean $\mx$, 
		diagonal matrix $\tilde{\bm{\Sigma}}$ of variances.}
	Fix $k$\;
	Step 1: For $i = 1, \ldots, n$ do: 
	\\
	\phantom{Step 1: }
	Calculate the simplified distance 
	$\mathrm{d}_{\tilde{\bm{\Sigma}}}(\x_{i}, \mx) = (\x_{i} - \mx)^{\top} \tilde{\bm{\Sigma}}^{-1} (\x_{i} - \mx)$\;
	\phantom{Step 1: }
	Repeat\;
	Step 2: Select $\x_{1:n}, \ldots, \x_{k:n}$ 
	corresponding to the $k$ largest values of $\mathrm{d}_{\tilde{\bm{\Sigma}}}(\x_{i}, \mx)$\;
\end{algorithm}

The simplified method has the additional advantage
that it is easier to implement in practice 
when there is no exact knowledge of the covariance matrix of the covariates 
since estimating only the variances on a small uniform random subsample 
(prior to the actual subsampling procedure) 
is much easier than estimating the entire covariance matrix. 
We will see in the simulation study 
that this simplified method is indeed viable. 

We examine the simplified method in the case of normally distributed covariates 
and refer to it as ``D-OPT-s" in the figures. 
First, we note that in the case of uncorrelated covariates, 
the simplified method coincides with D-OPT treated before. 
Thus, in the case of uncorrelated covariates, 
results can be inherited for ``D-OPT-s" 
from Figure~\ref{graphic:NDist}~(A).

In the subsequent simulation, we consider compound symmetry 
of the covariance of the covariates with small ($\rho=0.05$) 
and moderate ($\rho=0.5$) correlation.
Figure~\ref{graphic:oaV} shows the results 
for normally distributed covariates $\X_{i}$ 
with  covariance matrix $\bm{\Sigma}_{0.05}$ and $\bm{\Sigma}_{0.5}$, respectively. 
\begin{figure}[htb]
	\centering
	\begin{subfigure}[t]{.475\textwidth}
		\centering
		\includegraphics[width=\linewidth]{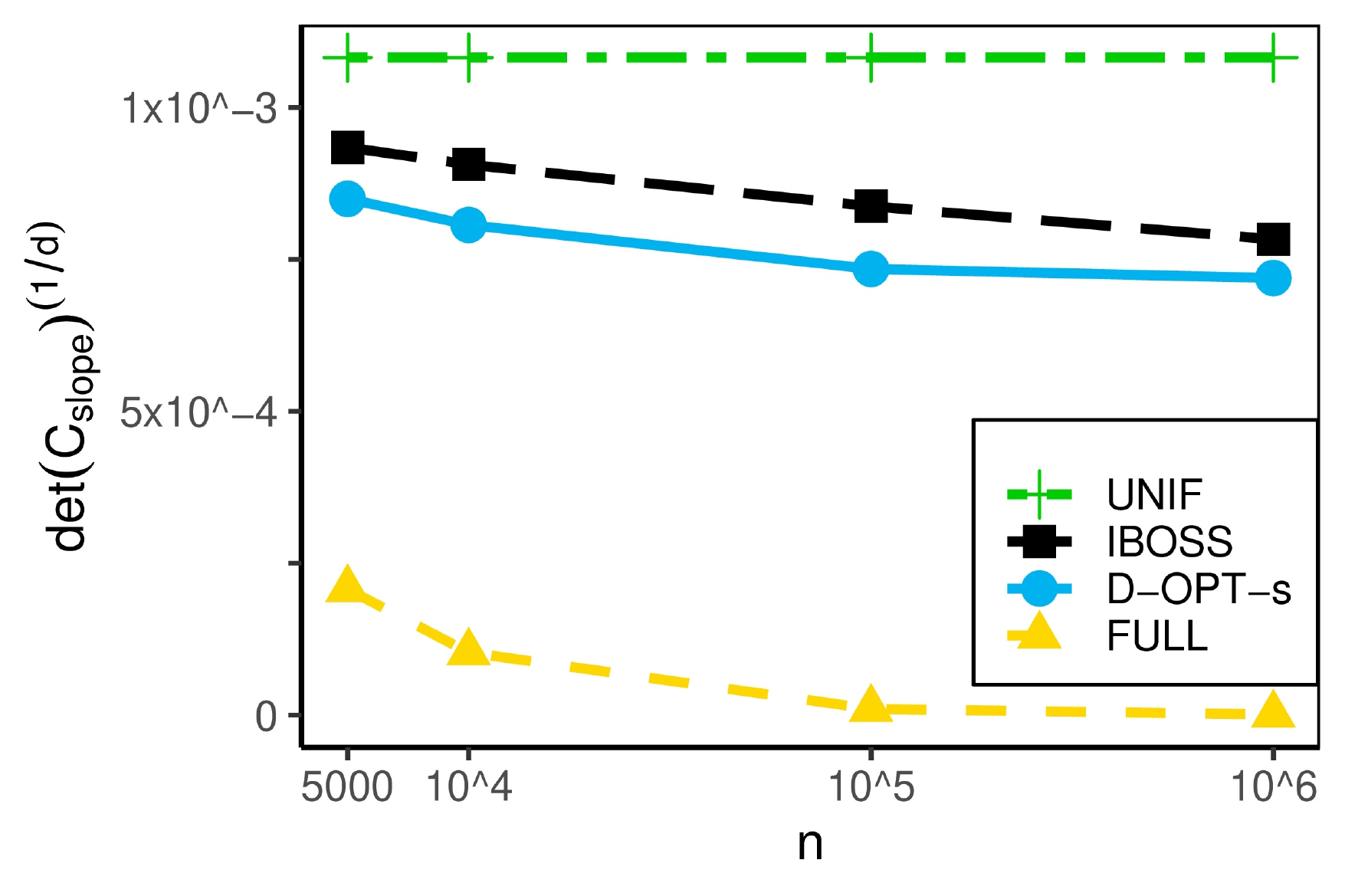}
		\caption{$\X_{i} \sim \Nd\left(\mathbf{0}, \bm{\Sigma}_{0.05} \right)$}
	\end{subfigure}
	\hfill
	\begin{subfigure}[t]{.475\textwidth}
		\centering
		\includegraphics[width=\linewidth]{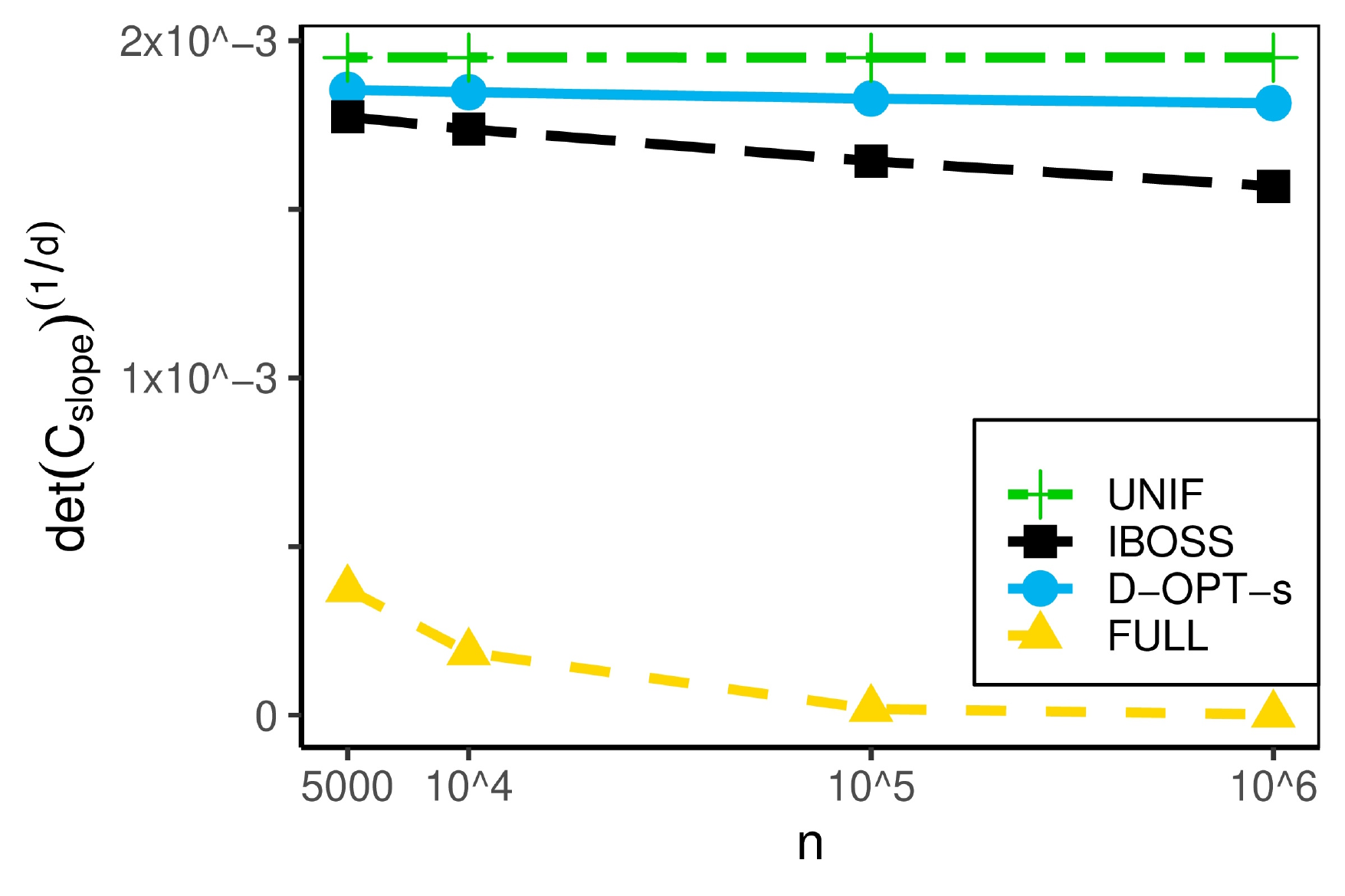}
		\caption{$\X_{i} \sim \Nd\left(\mathbf{0}, \bm{\Sigma}_{0.5} \right)$}
	\end{subfigure}
	\caption{Simulated standardized determinant 
		of the slope covariance matrix 
		for the simplified D-OPT-s method
		in the case of normally distributed covariates,
		small (left) and moderate correlation (right)}
	\label{graphic:oaV}
\end{figure}
While  the advantage of the $D$-optimal subsampling design  
over the IBOSS method seems to be reduced, 
there are still scenarios where D-OPT-s can outperform the IBOSS method 
as, for example, in the case of the covariance matrix $\bm{\Sigma}_{0.05}$ 
when the correlation is small.
However, if correlation is larger, 
as in the case of the covariance matrix $\bm{\Sigma}_{0.5}$, 
the simplified method D-OPT-s seems to perform inferior to IBOSS
and only slightly better than uniform random subsampling. 

For quantification of the variability in the simulation, 
we also report the standard deviation 
alongside with the mean 
of the standardized determinant 
$\det\left(\mathbf{C}_{\mathrm{slope}}^{(v)}\right)^{1/d}$ 
of the simulated slope covariance matrix $\mathbf{C}_{\mathrm{slope}}^{(v)}$
for all five methods in Table~\ref{table:all-methods}. 
\begin{table}[h]
	\begin{center}
		\caption{Mean and standard deviation 
			of the standardized determinant 
			$\det\left(\mathbf{C}_{\mathrm{slope}}^{(v)}\right)^{1/d}$ 
			of the simulated slope covariance matrix 
			for covariates $\X_{i} \sim \Nd\left(\mathbf{0}, \bm{\Sigma}_{0.5} \right)$ 
			and full sample size $n = 10^4$, $n = 10^6$}
		\begin{tabular}{l|l|ccccc} \toprule
			{$n$} & {} & {FULL} & {D-OPT} & {D-OPT-s} & {IBOSS} & {UNIF} 
			\\ 
			\midrule
			\multirow{2}{*}{$10^4$} & mean 
				& $1.854 \times 10^{-4}$ & $1.380 \times 10^{-3}$ & $1.799 \times 10^{-3}$ 
				& $1.693 \times 10^{-3}$ & $1.899 \times 10^{-3}$ 
			\\ 
			& std 
				& $3.736 \times 10^{-7}$ & $5.161 \times 10^{-6}$ & $1.142 \times 10^{-5}$ 
				& $1.030 \times 10^{-5}$ & $1.226 \times 10^{-5}$ 
			\\ 
			\midrule
			\multirow{2}{*}{$10^6$} & mean 
				& $1.849 \times 10^{-6}$ & $1.052 \times 10^{-3}$ & $1.768 \times 10^{-3}$ 
				& $1.529 \times 10^{-3}$ & $1.899 \times 10^{-3}$ 
			\\
			& std 
				& $3.689 \times 10^{-10}$ & $2.471 \times 10^{-6}$ & $1.118 \times 10^{-5}$ 
				& $8.814 \times 10^{-6}$ & $1.225 \times 10^{-5}$ 
			\\ 
			\bottomrule
		\end{tabular}
		\label{table:all-methods}
	\end{center}
\end{table}
Here, we consider again normally distributed covariates of dimension $d = 50$ 
and moderate correlation ($\bm{\Sigma}_{0.5}$). 
Finally, for the same setting, we showcase the computing times
of the simulations in milliseconds 
for the D-OPT, D-OPT-s, and IBOSS methods, respectively,
 in Figure~\ref{graphic:boxplot_ct}. 
\begin{figure}[htb]
	\centering
	\includegraphics[width=\linewidth]{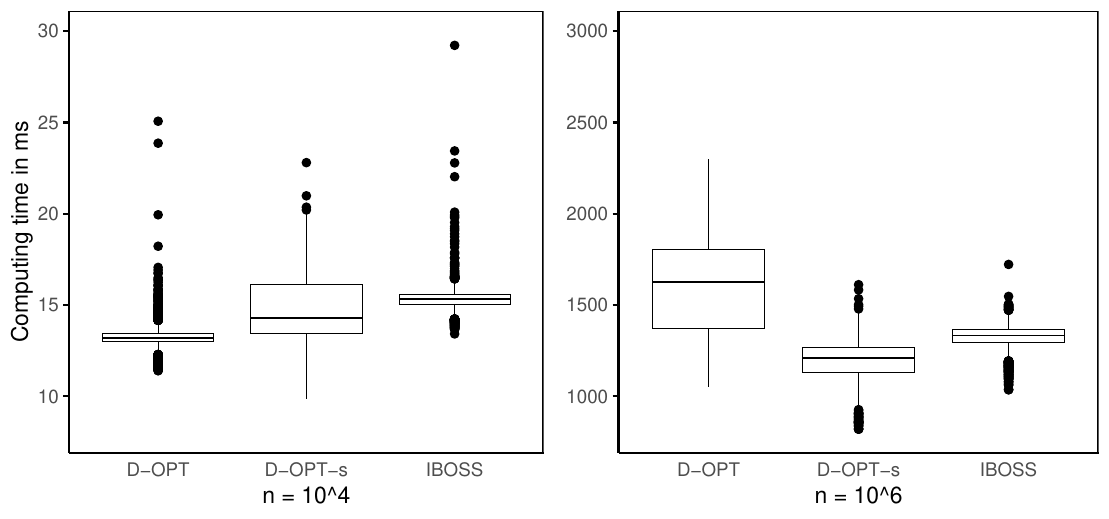}
	\caption{Computing times of the simulations 
		for full data size $n = 10^4$ (left) and $n = 10^6$ (right), 
		$\X_{i} \sim \Nd\left(\mathbf{0}, \bm{\Sigma}_{0.5} \right)$}
	\label{graphic:boxplot_ct}
\end{figure}
We find that the D-OPT-s method is consistently faster than the IBOSS method 
in our simulations, even though both methods 
share the same computational complexity $\Or(nd)$.

\section{Discussion}
\label{sec:discussion}
In the present paper, we have characterized 
$D$-optimal subsampling designs $\xi_{\alpha}^{*}$
 for multiple linear regression, 
 first for centered spherical distributions. 
Then, we have extended the characterization to elliptical distributions 
by location-scale transformations. 
Thereby, we have generalized the results in \cite{reuter2023optimal} 
on ordinary linear regression to multiple covariates.

We have presented two different methods of  
subsampling and discussed their computational complexity. 
The D-OPT method based on the Mahalanobis distance 
with respect to the full covariance matrix
has complexity of order $\Or(nd^{2})$
whereas the simplified version D-OPT-s neglecting correlation 
can be performed with a computational complexity of order $\Or(nd)$. 
We have compared both methods with IBOSS 
proposed by~\cite{wang2019information} in simulation studies. 
These simulations illustrate the expected property 
that the full D-OPT method outperforms IBOSS.
Further, the simplified method D-OPT-s may perform better than IBOSS in settings 
when the correlation between covariates is small,
but may be less efficient when the correlation becomes larger. 

In addition to the simulation of
the standardized determinant
$\det(\mathbf{C}_{\mathrm{slope}})^{1/d}$
based on the \textit{observed} information matrices,
we have also simulated the mean squared error 
of the slope estimates $\hat{\B}_{\mathrm{slope}}$ by
$1/V \sum_{v = 1}^{V} \Vert \hat{\B}_{\mathrm{slope}}^{(v)} - \B_{\mathrm{slope}} \Vert^{2}$, 
to compare the different methods
with each other.
In all cases, 
results were very similar 
to those for $\det(\mathbf{C}_{\mathrm{slope}})^{1/d}$
and, what is more important, 
the ranking in the performance
of the different methods does not change. 
Beside applications where the covariance matrix of the covariates is known, 
the full method can be used as a benchmark for other methods 
proposed in the literature.

To construct subsamples in real data situations
according to the full D-OPT method,
those units are selected which have largest 
Mahalanobis distance $\mathrm{d}_{\bm{\Sigma}}(\x, \mx)$
from the mean.
Thus, only the mean $\mx$
and the dispersion matrix $\bm{\Sigma}$
of the underlying elliptical distribution have to be known
to create the subsample.
When the mean $\mx$
and the dispersion $\bm{\Sigma}$
are not known in advance, they may be substituted by 
their empirical counterparts
$\bar{\x} = \frac{1}{n} \sum_{i=1}^{n} \x_{i}$
and
$\mathbf{S}_{\x} = \frac{1}{n} \sum_{i=1}^{n} (\x_{i} - \bar{\x}) (\x_{i} - \bar{\x})^{\top}$
of the full data.
The resulting observed Mahalanobis distance 
$\mathrm{d}_{\mathbf{S}_{\x}}(\x_{i}, \bar{\x})
	= (\x_{i} - \bar{\x})^{\top} \mathbf{S}_{\x}^{-1} (\x_{i} - \bar{\x})$
differs from the leverage 
$h_{i} = \f(\x_{i})^{\top} \M(\x_{1}, \ldots, \x_{n})^{-1} \f(\x_{i})$
only by some constants,
$h_{i} 	= (\mathrm{d}_{\mathbf{S}_{\x}}(\x_{i}, \bar{\x}) + 1) / n$,
where
$\M(\x_{1}, \ldots, \x_{n}) = \sum_{i=1}^{n} \f(\x_{i}) \f(\x_{i})^{\top}$
is the observed information matrix of the full data.
Hence, selecting the units with largest
observed Mahalanobis distance
$\mathrm{d}_{\mathbf{S}_{\x}}(\x_{i}, \bar{\x})$
is equivalent to selecting those units
with highest leverage $h_{i}$.

Note that this approach differs from the method
of subsampling via algorithmic leveraging 
as described in~\cite{ma2014statistical} 
where a sampling distribution 
proportional to the leverage scores $h_{i}$ is used
with replacement.
Hence, algorithmic leveraging does not fit 
into the present framework of subsampling designs
and may suffer from the undesirable property 
of multiple selection of the same units.

If the empirical mean $\bar{\x}$ and dispersion $\mathbf{S}_{\x}$
are not readily available, they may be replaced 
in Algorithm~\ref{alg:topk}
by estimates based on a prior random subsample of the full data.

For other convex, differentiable optimality criteria 
like Kiefer's $\Phi_{q}$-criteria
of matrix means
including the $A$-criterion for $q = 1$
\citep [see, e.\,g.,][Chapter~6]{pukelsheim1993optimal}, 
corresponding versions of
Theorem~\ref{theorem:opt-design-general-distribution}
apply.
In particular,
when the covariates have a
centered spherical distribution, 
these criteria are also rotationally invariant.
Hence, the $D$-optimal subsampling design of
Theorem~\ref{theorem:dens-of-opt-design-spherical}
is also optimal with respect to any $\Phi_{q}$-criterion.
Thus, real subsamples generated
by Algorithm~\ref{alg:topk}
meet these criteria, too.

However, these criteria are not equivariant
with respect to linear transformations, in general,
so that the $D$-optimal subsampling design
will no longer be optimal for other criteria.
Although, by the corresponding equivalence theorems,
the particular optimal subsampling design 
will accept all units outside some ellipsoid
as in Theorem~\ref{theorem:opt-design-general-distribution},
the scaling matrix defining the ellipsoid will differ.
For example, in the case of the $\MSE$ of the slope estimates 
($A_{\mathrm{slope}}$-criterion)
considered in Section~\ref{sec:method}
when the distribution is elliptical,
the $A_{\mathrm{slope}}$-optimal subsampling design $\xi_{\alpha}^{*}$
has density
$f_{\xi_{\alpha}^{*}}(\x) = f_{\X}(\x)$ 
for $(\x - \mx)^{\top} \mathbf{S}(\xi_{\alpha}^{*})^{-2} (\x - \mx) \geq q_{1 - \alpha}(\xi_{\alpha}^{*})$,
and $f_{\xi_{\alpha}^{*}}(\x) = 0$ otherwise,
where $q_{1 - \alpha}(\xi_{\alpha}^{*})$ is the $(1 - \alpha)$-quantile of the distribution of 
$(\X_{i} - \mx)^{\top} \mathbf{S}(\xi_{\alpha}^{*})^{-2} (\X_{i} - \mx)$.
In contrast to the situation
of Theorem~\ref{theorem:dens-of-opt-design-elliptical}
for $D$-optimality,
the boundary
$\{\x; \, (\x - \mx)^{\top} \mathbf{S}(\xi_{\alpha}^{*})^{-2} (\x - \mx) = q_{1 - \alpha}\}$
is not a contour of the density $f_{\X}$
when the distribution is not spherical.
Then both the scaling matrix $\mathbf{S}(\xi_{\alpha}^{*})^{2}$ 
and the quantile $q_{1 - \alpha}(\xi_{\alpha}^{*})$
will be difficult to be determined.

As an alternative, we may consider 
the expected mean squared error (EMSE) criterion
$\mathrm{EMSE}(\xi) = \int \f(\x)^{\top} \M(\xi)^{-1} \f(\x) \, f_{\X}(\x) \diff \x$
which measures the average of the prediction variance
$\Var[\f(\x)^{\top} \hat{\B}] = \f(\x)^{\top} \M(\xi)^{-1} \f(\x)$
for estimating the mean response $\E[Y(\x)] = \f(\x)^{\top} \B$ 
of further observations $Y(\x)$ at $\x$,
where the average is taken
according to the distribution of the covariates $\X_{i}$.
Similar to the $D$-criterion, 
the EMSE-criterion
is equivariant with respect to linear transformations.
In particular,
when the covariates have a
centered spherical distribution, 
the $D$-optimal subsampling design of
Theorem~\ref{theorem:dens-of-opt-design-spherical}
is seen to be EMSE-optimal.
Then, by equivariance,
the $D$-optimal subsampling design $\xi_{\alpha}^{*}$
of Theorem~\ref{theorem:dens-of-opt-design-elliptical}
is also EMSE-optimal for elliptical distributions,
and Algorithm~\ref{alg:topk} 
provides a suitable method
to  generate real subsamples with minimal
expected prediction variance.
These findings may be
readily extended to the general class 
of criteria based on powers of the prediction variance
by \cite{dette1999predictedvariance}
when averaging is
according to the distribution of the covariates.
For a recent study on subsampling with a focus on prediction error, see the work by \cite{ciamina2025}. The authors introduce a new optimality criterion that extends the goal of minimizing the Random–X prediction error by also accounting for the joint distribution of the covariates.

For multiple quadratic regression,
$Y_{i} = \beta_{0} + \sum_{j=1}^{d} \beta_{j} X_{ij} 
					+ \sum_{j=1}^{d} \beta_{jj} X_{ij}^{2}
					+ \sum_{j<j^{\prime}} \beta_{jj^{\prime}} X_{ij} X_{ij^{\prime}} 
					+ \e_{i}$,
invariance and equivariance considerations
may be used as in the case of multiple linear regression
\citep[see, e.\,g.,][Chapter~15]{pukelsheim1993optimal}.
Similar to the results in one dimension 
by \cite{reuter2023optimal},
$D$-optimal subsampling designs $\xi_{\alpha}^{*}$ 
may be obtained which have density 
$f_{\xi_{\alpha}^{*}}(\x) = f_{\X}(\x)$ 
for $\mathrm{d}_{\bm{\Sigma}}(\x, \mx) \leq q_{\alpha_{1}}$
or $\mathrm{d}_{\bm{\Sigma}}(\x, \mx) \geq q_{1 - \alpha_{2}}$,
and $f_{\xi_{\alpha}^{*}}(\x) = 0$ otherwise,
where $q_{\alpha_{1}}$ and $q_{1 - \alpha_{2}}$
are suitable quantiles of the Mahalanobis distance
$\mathrm{d}_{\bm{\Sigma}}(\X_{i}, \mx)$
satisfying $\alpha_{1} + \alpha_{2} = \alpha$
and a second, nonlinear equation 
arising from the equivalence theorem
(Theorem~\ref{theorem:opt-design}).
Hence, in real subsampling,
those units will be selected
which either have a large or which have a small Mahalanobis distance
$\mathrm{d}_{\bm{\Sigma}}(\x_{i}, \mx)$
to the mean.
As in the multiple linear case,
the quantiles $q_{\alpha_{1}}$ and $q_{1 - \alpha_{2}}$
do not depend on the location and scaling 
parameters $\mx$ and $\bm{\Sigma}$.
in particular,
for multivariate normal distribution of the covariates,
$q_{\alpha^{\prime}} = \chi_{d, \alpha^{\prime}}^{2}$ 
is again the $\alpha^{\prime}$-quantile of the $\chi^{2}$-distribution 
with $d$ degrees of freedom.
In contrast to that,
the partial proportions $\alpha_{1}$ and $\alpha_{2}$
vary with the distribution of the covariates.
as exhibited in \cite{reuter2023optimal} 
in the case of a single covariate ($d = 1$).
In particular,
the interior region may vanish ($\alpha_{1} = 0$)
for heavy-tailed distributions
while the exterior region is always required ($\alpha_{2} > 0$).
Also, for higher order polynomials,
the structural results 
by \cite{reuter2023optimal}
can be extended to multiple covariates:
When the polynomial model contains 
all terms up to order $q$,
the $D$-optimal subsampling design
is concentrated on, at most, $(q + 1) / 2$ 
concentric elliptical shells 
when $q$ is odd,
and on, at most, $(q + 2) / 2$ 
concentric elliptical shells 
when $q$ is even.

The optimal subsampling designs 
considered in the present paper
depend both on the distribution of the covariates 
and on the model 
which relates the response variable $y_{i}$ to the covariates $\X_{i}$. 
If either of them is not correctly specified, 
the proposed subsampling designs will no longer be optimal.
Related work 
on subsampling for model discrimination
is done by~\cite{yu2022subdata}.

\section*{Acknowledgments}
The work of the first author is supported by the Deutsche Forschungsgemeinschaft (DFG, German Research Foundation) - 314838170, GR 2297 Math Core.
The authors are grateful to Norbert Gaffe
for communicating the proof of Lemma~\ref{lemma:asymptotic-normality}.

\appendix 

\section{Technical Details}

Denote by $\1_{A}$ the indicator function on a set $A$.

For asymptotic properties, 
we consider sequences of random variables.

\begin{lemma}
	\label{lemma:asymptotic-normality}
	Let $Y_{i} = \f(\X_{i})^{\top} \B + \e_{i}$
	be a general linear model in $p$ parameters 
	with i.\,i.\,d.\ covariates $\X_{i}$ 
	satisfying $\E[\Vert \f(\X_{i}) \Vert^{2}] < \infty$
	and i.\,i.\,d.\ observational errors $\e_{i}$
	with variance $\sigma_{\e}^{2}$,
	$i \geq 1$, independent of each other.
	Let $\hat{\B}_{n}$ be the least squares estimator based on a subsample
	of $(Y_{1}, \X_{1}), \ldots, (Y_{n}, \X_{n})$ 
	generated according to a continuous subsampling design $\xi$
	with positive definite information matrix
	$\M(\xi) = \int \f(\x) \f(\x)^{\top} f_{\xi}(\x) \diff \x$.
	Then
	\[
		\sqrt{n}(\hat{\B}_{n} - \B) \stackrel{\mathcal{D}}{\to} 
			\Nd_{p}\left(\mathbf{0}, \sigma_{\e}^{2} \M(\xi)^{-1}\right) \, .
	\]
\end{lemma}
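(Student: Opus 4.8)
The plan is to write the subsample least squares estimator explicitly in terms of selection indicators and then combine the law of large numbers, a central limit theorem, and Slutsky's lemma. To this end, let $U_{i}$ be auxiliary i.\,i.\,d.\ uniform random variables on $[0,1]$, independent of the $\X_{i}$ and $\e_{i}$, and set the selection indicator $D_{i} = \1\{U_{i} \leq f_{\xi}(\X_{i}) / f_{\X}(\X_{i})\}$, so that unit $i$ enters the subsample precisely when $D_{i} = 1$, as prescribed by the sampling mechanism. Then the triples $(\X_{i}, \e_{i}, D_{i})$ are i.\,i.\,d., and the least squares estimator on the subsample satisfies
\[
	\hat{\B}_{n} - \B
	= \left(\sum_{i=1}^{n} D_{i} \f(\X_{i}) \f(\X_{i})^{\top}\right)^{-1}
		\sum_{i=1}^{n} D_{i} \f(\X_{i}) \e_{i} \, ,
\]
after substituting $Y_{i} = \f(\X_{i})^{\top}\B + \e_{i}$. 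Rescaling gives $\sqrt{n}(\hat{\B}_{n} - \B) = \M_{n}^{-1} \Z_{n}$, where $\M_{n} = \frac{1}{n}\sum_{i=1}^{n} D_{i} \f(\X_{i}) \f(\X_{i})^{\top}$ and $\Z_{n} = \frac{1}{\sqrt{n}}\sum_{i=1}^{n} D_{i} \f(\X_{i}) \e_{i}$.

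First I would treat the matrix factor $\M_{n}$. Conditioning on $\X_{i}$ yields $\E[D_{i} \mid \X_{i}] = f_{\xi}(\X_{i})/f_{\X}(\X_{i})$, so that
\[
	\E\!\left[D_{i} \f(\X_{i}) \f(\X_{i})^{\top}\right]
	= \int \frac{f_{\xi}(\x)}{f_{\X}(\x)} \f(\x) \f(\x)^{\top} f_{\X}(\x) \diff \x
	= \M(\xi) \, ,
\]
which has finite entries because $\E[\Vert \f(\X_{i}) \Vert^{2}] < \infty$. The strong law of large numbers then gives $\M_{n} \to \M(\xi)$ almost surely, and since $\M(\xi)$ is positive definite by assumption, the continuous mapping theorem yields $\M_{n}^{-1} \to \M(\xi)^{-1}$.

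Next I would apply a central limit theorem to $\Z_{n}$. The summands $D_{i} \f(\X_{i}) \e_{i}$ are i.\,i.\,d.; because $\e_{i}$ is independent of $(\X_{i}, D_{i})$ with $\E[\e_{i}] = 0$, each has mean zero, and using $D_{i}^{2} = D_{i}$ together with $\Var[\e_{i}] = \sigma_{\e}^{2}$ their common covariance matrix is $\E[D_{i} \f(\X_{i}) \f(\X_{i})^{\top}]\, \sigma_{\e}^{2} = \sigma_{\e}^{2} \M(\xi)$. The required finite second moments hold since $\E[\Vert D_{i} \f(\X_{i}) \e_{i}\Vert^{2}] \leq \sigma_{\e}^{2}\, \E[\Vert \f(\X_{i}) \Vert^{2}] < \infty$. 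Hence the multivariate central limit theorem gives $\Z_{n} \stackrel{\mathcal{D}}{\to} \Nd_{p}(\mathbf{0}, \sigma_{\e}^{2} \M(\xi))$. Finally, Slutsky's lemma combines the two factors, and the limiting covariance is $\M(\xi)^{-1} \sigma_{\e}^{2} \M(\xi) \M(\xi)^{-1} = \sigma_{\e}^{2} \M(\xi)^{-1}$, as claimed.

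The steps are individually standard, so the main obstacle is conceptual rather than technical: one must verify that the random, covariate-dependent selection still produces i.\,i.\,d.\ summands and, crucially, that the normalization by the \emph{deterministic} full data size $n$ (rather than the random subsample size $K_{n}$) is the correct one. The conditioning argument resolves both points and pins down the limiting object as the \emph{unstandardized} information matrix $\M(\xi)$ of Section~\ref{sec:design}; this is what makes the factor $\sigma_{\e}^{2} \M(\xi)^{-1}$ appear rather than the standardized version scaled by $\alpha$.
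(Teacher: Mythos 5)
Your proposal is correct and follows essentially the same route as the paper's own proof: the same auxiliary-uniform selection indicators (the paper's $Z_{i}$, your $D_{i}$), the same decomposition of $\sqrt{n}(\hat{\B}_{n}-\B)$ into a matrix factor handled by the strong law of large numbers and a score term handled by the multivariate central limit theorem, concluded by Slutsky's theorem. The only cosmetic difference is that you spell out a few details the paper leaves implicit (the continuous mapping step for $\M_{n}^{-1}$ and the finite-second-moment check), while the paper additionally sets $\varphi(\x)=0$ where $f_{\X}(\x)=0$, a measure-zero point your definition glosses over.
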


\begin{proof}[Proof (following \cite{gaffke2024asymptotics})]
	For $f_{\X}(\x) > 0$,
	let $\varphi(\x) = f_{\xi}(\x) / f_{\X}(\x)$
	be the conditional probability for selecting
	a unit $i$ when $\X_{i} = \x$,
	and let $\varphi(\x) = 0$ otherwise.
	To practically generate a subsample,
	let $U_{i}$, $i \geq 1$, be a sequence of i.\,i\,d.\ 
	random variables uniform on $[0, 1]$,
	independent of all $\X_{i}$ and $\e_{i}$.
	Set $Z_{i} = \1_{U_{i} \leq \varphi(\X_{i})}$.
	Then $Z_{i}$ is a Bernoulli variable with
	success probability $\alpha$,
	and the subsample can be generated 
	by selecting those units $i$
	for which $Z_{i} = 1$.
	
	The least squares estimator $\hat{\B}_{n}$
	based on the subsample
	can be defined to minimize
	$\sum_{i=1}^{n} Z_{i} (Y_{i} - \f(\X_{i})^{\top} \B)^{2}$.
	For $n$ large enough,
	\begin{align*}
		\hat{\B}_{n}
			& = \left(\sum_{i=1}^{n} Z_{i} \f(\X_{i}) \f(\X_{i})^{\top}\right)^{-1}
						\sum_{i=1}^{n} Z_{i} \f(\X_{i}) Y_{i}
			& = \B + \left(\sum_{i=1}^{n} Z_{i} \f(\X_{i}) \f(\X_{i})^{\top}\right)^{-1}
						\sum_{i=1}^{n} Z_{i} \f(\X_{i}) \e_{i} \, .
	\end{align*}
	By the Strong Law of Large Numbers, we obtain
	\[
		\frac{1}{n} \sum_{i=1}^{n} Z_{i} \f(\X_{i}) \f(\X_{i})^{\top}
			\to \E[\varphi(\X_{i}) \f(\X_{i}) \f(\X_{i})^{\top}]
					= \M(\xi)
	\]
	almost surely.
	Further, 
	$\E[Z_{i} \f(\X_{i}) \e_{i}] = \mathbf{0}$ and
	$\Cov[Z_{i} \f(\X_{i}) \e_{i}]  
		= \E[Z_{i} \f(\X_{i}) \f(\X_{i})^{\top} \e_{i}^{2}] 
		= \sigma_{\e}^{2} \M(\xi)$.
	Hence,
	by the multivariate Central Limit Theorem,
	we get
	\[
		\frac{1}{\sqrt{n}} \sum_{i=1}^{n} Z_{i} \f(\X_{i}) \e_{i}
			\stackrel{\mathcal{D}}{\to} 
				\Nd_{p}\left(\mathbf{0}, \sigma_{\e}^{2} \M(\xi)\right) \, .
	\]
	Then, the result follows by Slutsky's theorem.
\end{proof}

For stating the equivalence theorem
to characterize $D$-optimality,
we introduce the sensitivity function
\begin{equation}
	\label{eq:sensitivity}
	\psi(\x, \xi) = \alpha \f(\x)^{\top} \M(\xi)^{-1} \f(\x) 
\end{equation}
of a subsampling design $\xi$.
The sensitivity function $\psi(\x, \xi)$ constitutes 
the essential part of the directional derivative
of the $D$-criterion in the direction of a one-point design $\xi_{\x}$ 
with total mass $\alpha$ at $\x$.
Note that $\xi_{\x}$  
is not a continuous subsampling design itself.
Similar to Theorem~3.1. in \cite{reuter2023optimal},
we can paraphrase 
Corollary~1~(c) in \cite{sahm2001note} 
for the present purposes.

\begin{theorem}
	\label{theorem:opt-design}
	Let $\xi_{\alpha}^{*}$ be a subsampling design 
	and let the distribution of $\psi(\X_{i}, \xi_{\alpha}^{*})$ be continuous.
	Then $\xi_{\alpha}^{*}$ is $D$-optimal if and only if
	there exists $c^{*}$ such that
		\[	
			f_{\xi_{\alpha}^{*}}(\x) = f_{\X}(\x) \1_{\{\psi(\x, \xi_{\alpha}^{*}) \geq c^{*}\}} \, .
		\]	
\end{theorem}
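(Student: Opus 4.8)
The plan is to read off the statement as a direct specialization of the constrained equivalence theorem of \cite{sahm2001note} to the $D$-criterion, so the bulk of the work is to compute the relevant directional derivative and to match the abstract Kuhn--Tucker characterization to the concrete indicator form. First I would record that the feasible set of subsampling designs --- densities $f_\xi$ with $0 \le f_\xi \le f_\X$ pointwise and $\int f_\xi(\x)\diff\x = \alpha$ --- is convex, and that $\Phi_D(\xi) = -\ln\det(\M(\xi))$ is convex and Gâteaux differentiable on the subset of designs with nonsingular $\M(\xi)$. For a competing subsampling design $\eta$, since $\M((1-t)\xi + t\eta) = (1-t)\M(\xi) + t\M(\eta)$ is affine in $t$, the directional derivative is
\[
	\frac{\diff}{\diff t}\Big|_{t=0^{+}} \Phi_D((1-t)\xi + t\eta)
		= (d+1) - \trace\!\left(\M(\xi)^{-1}\M(\eta)\right) .
\]

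Next I would rewrite the trace as an integral against $f_\eta$. Because $\M(\eta) = \int \f(\x)\f(\x)^{\top} f_\eta(\x)\diff\x$, one has $\trace(\M(\xi)^{-1}\M(\eta)) = \int \f(\x)^{\top} \M(\xi)^{-1}\f(\x)\, f_\eta(\x)\diff\x = \alpha^{-1}\int \psi(\x,\xi)\, f_\eta(\x)\diff\x$, which identifies the sensitivity function~\eqref{eq:sensitivity} as the kernel of the directional derivative --- exactly the role played by the derivative function in the abstract theorem of \cite{sahm2001note}. The Kuhn--Tucker conditions for this bounded design problem then assert that $\xi_{\alpha}^{*}$ is optimal if and only if there is a threshold so that, at the optimum, mass sits at the upper bound $f_\X$ wherever the sensitivity strictly exceeds the threshold and at the lower bound $0$ wherever it falls strictly below. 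This ``bang--bang'' structure is forced by the linearity of the directional derivative in $f_\eta$ together with the box constraint $0 \le f_\xi \le f_\X$; the threshold itself is, up to the factor $\alpha$, the Lagrange multiplier attached to the mass constraint $\int f_\xi = \alpha$.

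Finally I would translate this into the displayed characterization. Writing $c^{*}$ for the threshold value of $\psi(\cdot,\xi_{\alpha}^{*})$, the optimal density is $f_{\xi_{\alpha}^{*}}(\x) = f_\X(\x)\,\1_{\{\psi(\x,\xi_{\alpha}^{*})\ge c^{*}\}}$, and the mass constraint pins down $c^{*}$ through $\Prob(\psi(\X_{i},\xi_{\alpha}^{*})\ge c^{*}) = \alpha$, i.e.\ as the $(1-\alpha)$-quantile of the distribution of $\psi(\X_{i},\xi_{\alpha}^{*})$. The continuity assumption on that distribution enters precisely here: it guarantees that the contour $\{\psi(\x,\xi_{\alpha}^{*})=c^{*}\}$ carries no mass, so the intermediate values permitted there by the general theorem are irrelevant and a threshold achieving exactly total mass $\alpha$ exists and is essentially unique.

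I expect the main obstacle to be the careful verification that the regularity hypotheses of the Sahm--Schwabe corollary hold in this infinite-dimensional bounded setting --- in particular that the one-sided directional derivative interchanges with the integral and that the multiplier of the mass constraint is correctly identified with the threshold $c^{*}$ --- rather than any of the routine determinant calculus. Once the sensitivity function is recognised as the derivative kernel, the equivalence follows by quoting Corollary~1~(c) of \cite{sahm2001note}, exactly as for the one-dimensional analogue in \citet[Theorem~3.1]{reuter2023optimal}.
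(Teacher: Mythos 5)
Your proposal is correct and follows essentially the same route as the paper: the paper gives no independent proof but presents this theorem as a paraphrase of Corollary~1~(c) of \cite{sahm2001note} (in analogy to Theorem~3.1 of \cite{reuter2023optimal}), which is exactly the result you quote after identifying the sensitivity function $\psi(\x,\xi)$ as the kernel of the directional derivative of $\Phi_D$. Your added computation of the derivative $(d+1)-\trace\left(\M(\xi)^{-1}\M(\eta)\right)$ and the role of the continuity assumption in removing the intermediate values on the level set $\{\psi(\x,\xi_{\alpha}^{*})=c^{*}\}$ correctly fills in the reduction that the paper leaves implicit.
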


\begin{proof}[Proof of Theorem~\ref{theorem:opt-design-general-distribution}]
	In the multiple linear regression
	model~\eqref{eq:model-multiple-regression},
	the sensitivity function~\eqref{eq:sensitivity}
	can be rewritten as
	\begin{equation*}
		\label{eq:sensitivity-linear}
		\psi(\x, \xi) = \alpha (\x - \m(\xi))^{\top} \mathbf{S}(\xi)^{-1} (\x - \m(\xi)) + 1 
	\end{equation*}
	and is a quadratic form in $\x$ 
	(up to the additive constant $1$).
	For each $s$, the level set $\{\psi(\x, \xi) = s\}$
	is, at most, the surface of an ellipsoid
	and has Lebesgue measure zero.
	Thus the continuity condition on the distribution
	of $\psi(\X_{i}, \xi)$ is satisfied,
	and the result follows from Theorem~\ref{theorem:opt-design}.
\end{proof}

\begin{proof}[Proof of Corollary~\ref{corollary:opt-design-one-cov-general}]
 	For $d = 1$,
 	the sensitivity function 
 	$\psi(x, \xi) = \alpha (x - m(\xi))^{2} / s^{2}(\xi) + 1$
 	is a polynomial of degree two in $x$ 
 	with positive leading term,
 	where $s^{2}(\xi) = \int x^{2} f_{\xi}(x) \diff x - \alpha m(\xi)^{2}$.
 	The support $\{\psi(x, \xi_{\alpha}^{*}) \geq c^{*}\}$
 	of the $D$-optimal subsampling design $\xi_{\alpha}^{*}$ 
 	reduces to the exterior of an interval $(a, b)$
 	which is symmetric with respect to 
 	$\m(\xi_{\alpha}^{*}) = \alpha^{-1} \int x f_{\xi_{\alpha}^{*}}(x) \diff x$.
 	Further, $\Prob(X_{i} \leq a \mbox{ or } X_{i} \geq b) = \alpha$
 	because $\xi_{\alpha}^{*}$ is a 
 	subsampling design of proportion $\alpha$.
\end{proof}

To extend the concept of symmetrization 
to multiple covariates ($d \geq 2$),
we notice that
the regression model is linearly equivariant 
with respect to 
affine linear transformations 
$\g_{\bm{A}, \bm{\mu}}(\x) = \bm{A} \x + \bm{\mu}$
of the covariates as
\begin{equation*}
	\label{eq:multiple-regression-equivariance-affine-linear}
	\f(\g_{\bm{A}, \bm{\mu}}(\x)) = \Q_{\bm{A}, \bm{\mu}} \f(\x) \, ,
	\qquad
	\Q_{\bm{A}, \bm{\mu}} 
		= \begin{pmatrix}
				1 & \mathbf{0}         
				\\
				\bm{\mu} & \bm{A} 
			\end{pmatrix} \, ,
\end{equation*}
with nonsingular transformation matrix $\bm{A}$.
In particular,
the model is linearly equivariant 
with respect to rotations $\g \in SO(d)$ as
\begin{equation}
	\label{eq:multiple-regression-equivariance-sod}
	\f(\g(\x)) = \Q_{\g} \f(\x) \, ,
	\qquad
	\Q_{\g} 
		= \begin{pmatrix}
				1 & \mathbf{0}         
				\\
				\mathbf{0} & \Rho_{\g} 
			\end{pmatrix}
			\f(\x) \, ,
\end{equation}
where $\Rho_{\g}$ is the orthogonal rotation matrix on $\x$
corresponding to $\g$, 
i.\,e.\ $\g(\x) = \Rho_{\g} \x$,
so that the transformation matrix $\Q_{\g}$ of 
the regression function $\f$ has determinant one.
Further,
$\g$ induces the transformation  
$\M(\xi^{\g}) = \Q_{\g} \M(\xi) \Q_{\g}^{\top}$
of the information matrix,
where $\xi^{\g}$ denotes the image
of $\xi$ under $\g$.
Hence, the $D$-criterion is invariant 
with respect to transformations $\g \in SO(d)$, 
$\det(\M(\xi^{\g})) = \det(\M(\xi))$.

To make use of the rotational invariance,
we consider the representation of $\R^{d}$ 
in hyperspherical coordinates
$(r, \tht) \in [0,\infty) \times \mathbb{B}$,
where 
$\mathbb{B} = [0, \pi)^{d-2} \times [0, 2 \pi)$
is the sample space
of the angular vector
$\tht = (\theta_{1}, \ldots, \theta_{d-1})^{\top}$.
For matching Cartesian and hyperspherical representation,
we can use the transformation 
$\bm{T}: [0,\infty) \times \mathbb{B} \to \R^{d}$, 
$\bm{T}(r, \tht) = \x$, 
where 
$x_{k} = r \cos(\theta_{k}) \prod_{j=1}^{k-1} \sin(\theta_{j})$, 
$k = 1, \ldots, d-1$, 
and 
$x_{d} = r \prod_{j=1}^{d-1} \sin(\theta_{j})$. 
We identify all points of radius zero 
with the origin ($\x = \mathbf{0}$)
and denote the inverse of the transformation $\bm{T}$ 
by $\bm{U} = \bm{T}^{-1}$. 
	
For any subsampling design $\xi$ on $\R^{d}$, 
the induced subsampling design 
$\xi^{\bm{U}} = \xi_{(R, \Th)}$
is a joint design on the radius $r$
and the angles $\tht$
in hyperspherical coordinates. 

By the Radon-Nikodym theorem,
the design $\xi_{(R, \Th)}$
can be decomposed into the 
measure theoretic product 
$\xi_{R} \otimes \xi_{\Th | R}$ 
of the marginal subsampling design $\xi_{R}$ 
of mass $\alpha$ on the radius  
and the conditional design 
$\xi_{\Th | R=r}$ 
on the vector $\theta$ of angles 
given the radius $R = r$.
By standardization of the conditional design
$\xi_{\Th | R=r}$ 
as a Markov kernel,
the sample space
$\mathbb{B}$
of the angles has mass one,
$\xi_{\Th | R=r}(\mathbb{B}) = 1$, 
for any radius $r$. 
It follows from $f_{\xi} \leq f_{\X}$
that the density $f_{R}$ of the marginal design $\xi_{R}$ is bounded 
by the marginal density $f_{R(\X)}$ of $\X_{i}$ on the radius.

\begin{lemma}
	\label{lemma:decomposition}
	$\xi$ is invariant with respect to $SO(d)$ 
	if and only if $\xi^{\bm{U}} = \xi_{R} \otimes \bar{\mu}$.
\end{lemma}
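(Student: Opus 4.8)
The plan is to exploit the essential uniqueness of the disintegration $\xi_{(R,\Th)} = \xi_R \otimes \xi_{\Th | R}$ together with the characterization of $\bar\mu$ as the \emph{unique} $SO(d)$-invariant probability measure on the angular space $\mathbb{B}$. First I would describe how a rotation $\g \in SO(d)$ acts in hyperspherical coordinates. Since $\Rho_\g$ is orthogonal, one has $R(\g(\x)) = R(\x)$, so the map induced by $\g$ on $(r, \tht)$ fixes the radial coordinate and acts only on the angle, $(r, \tht) \mapsto (r, \g^{*}(\tht))$ for a measurable action $\g^{*}$ of $SO(d)$ on $\mathbb{B}$. Because $SO(d)$ acts transitively on the sphere $S^{d-1}$ of directions, this angular action is transitive, and hence $\bar\mu$ is the \emph{only} probability measure on $\mathbb{B}$ left fixed by every $\g^{*}$.

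For the easier reverse implication, I would assume $\xi^{\bm{U}} = \xi_R \otimes \bar\mu$ and fix an arbitrary $\g \in SO(d)$. In hyperspherical coordinates the image $\xi^{\g}$ is the pushforward of $\xi_R \otimes \bar\mu$ under $(r, \tht) \mapsto (r, \g^{*}(\tht))$; since this map leaves $r$ untouched and $\bar\mu$ is $\g^{*}$-invariant, the pushforward is again $\xi_R \otimes \bar\mu = \xi^{\bm{U}}$. As $\bm{U}$ is a bijection off the null origin, $\xi^{\g}$ and $\xi$ agree, and since $\g$ was arbitrary, $\xi$ is $SO(d)$-invariant.

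For the forward implication, I would assume $\xi$ is invariant, so that $\xi_{(R,\Th)}$ is fixed by every $(r, \tht) \mapsto (r, \g^{*}(\tht))$. Since $r$ is untouched, the marginal $\xi_R$ is automatically preserved, and by essential uniqueness of the disintegration the conditional kernel $\xi_{\Th | R = r}$ must be $\g^{*}$-invariant for $\xi_R$-almost every $r$. Invoking uniqueness of the invariant probability measure under the transitive angular action then forces $\xi_{\Th | R = r} = \bar\mu$ for $\xi_R$-almost every $r$, which is exactly $\xi_{(R,\Th)} = \xi_R \otimes \bar\mu$.

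I expect the forward direction to be the main obstacle, specifically the passage from invariance of the \emph{joint} law $\xi_{(R,\Th)}$ to $\g^{*}$-invariance of the \emph{conditional} kernel $\xi_{\Th | R = r}$ for almost every $r$. This needs the essential uniqueness of disintegrations plus a Fubini-type argument to move the rotation inside the kernel; moreover the exceptional $\xi_R$-null set a priori depends on $\g$, so I would first establish invariance simultaneously for a countable dense subgroup of $SO(d)$ (giving a single null set) and then extend to all of $SO(d)$ by a continuity and approximation argument, thereby obtaining $\xi_{\Th | R = r} = \bar\mu$ outside one fixed null set.
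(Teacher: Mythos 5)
Your proposal is correct and takes essentially the same route as the paper, whose proof simply cites the two facts you build on: that $\bar{\mu}$ is the unique invariant probability measure on $\mathbb{B}$, and that the Borel $\sigma$-algebra on $[0,\infty)\times\mathbb{B}$ is the product $\sigma$-algebra (so the disintegration $\xi_R\otimes\xi_{\Th|R}$ applies). Your extra care with the $\g$-dependent null sets — passing through a countable dense subgroup of $SO(d)$ and a continuity argument — is a rigorous filling-in of details the paper leaves implicit, not a different approach.
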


\begin{proof}
	This follows from the fact
	that $\bar{\mu}$ is the unique invariant measure
	of mass one on $\mathbb{B}$
	and that the Borel $\sigma$-algebra 
	on $[0,\infty) \times \mathbb{B}$
	is the product $\sigma$-algebra of
	the Borel $\sigma$-algebras on $[0,\infty)$ and $\mathbb{B}$,
	respectively.
\end{proof}

\begin{lemma}
	\label{lemma:sym-design-bounded}
	Let the covariates $\X_{i}$ have a centered spherical distribution.
	If the design $\xi$ has density $f_{\xi} \leq f_{\X}$,
	then its symmetrization 
	$\bar{\xi} = \xi_{R} \otimes \bar{\mu}$ 
	has also a density which satisfies $f_{\bar{\xi}} \leq f_{\X}$.
\end{lemma}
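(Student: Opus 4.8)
The plan is to carry out the change of variables into the hyperspherical coordinates of Lemma~\ref{lemma:decomposition} and to track the Jacobian of $\bm T$ carefully, since the whole statement rests on how the angular part of the Jacobian cancels against the density of $\bar\mu$. I would write the Jacobian of $\bm T$ as $|J(r,\tht)| = r^{d-1} J_\Theta(\tht)$, where $J_\Theta(\tht) = \prod_{j=1}^{d-2}(\sin\theta_j)^{\,d-1-j}$ collects the angular factors, and set $c_d = \int_{\mathbb B} J_\Theta(\tht)\diff\tht$ for the surface area of the unit sphere. Then $\bar\mu$ has Lebesgue density $J_\Theta(\tht)/c_d$ on $\mathbb B$. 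Because the distribution is centered and spherical, $f_{\X}(\x) = f_0(\Vert\x\Vert^2)$, so the covariate density in hyperspherical coordinates is $f_0(r^2)\, r^{d-1} J_\Theta(\tht)$; its radial marginal is $f_{R(\X)}(r) = c_d\, r^{d-1} f_0(r^2)$, and its conditional angular law given $R=r$ is exactly $\bar\mu$, independent of $r$. (This last fact re-expresses the earlier remark that $\X_i$ itself is already invariant in the sense of Lemma~\ref{lemma:decomposition}.)

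Next I would compute the density of $\bar\xi$ on $\R^d$. Since $\bar\xi^{\bm U} = \xi_R\otimes\bar\mu$ has density $f_{\xi_R}(r)\,J_\Theta(\tht)/c_d$ with respect to Lebesgue measure on $[0,\infty)\times\mathbb B$, pulling this back through the diffeomorphism $\bm T$ (off the Lebesgue-null set consisting of the origin $\{\x=\mathbf 0\}$ together with the coordinate boundaries, where the chart degenerates) and dividing by $|J(r,\tht)|$ gives
\[
	f_{\bar\xi}(\x)
		= \frac{f_{\xi_R}(r)\,J_\Theta(\tht)/c_d}{r^{d-1} J_\Theta(\tht)}
		= \frac{f_{\xi_R}(r)}{c_d\, r^{d-1}},
	\qquad r = \Vert \x \Vert .
\]
The cancellation of the angular factor $J_\Theta(\tht)$ confirms both that $\bar\xi$ is absolutely continuous, so that the density $f_{\bar\xi}$ indeed exists, and that $f_{\bar\xi}$ is itself spherically symmetric.

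Finally I would invoke the bound $f_{\xi_R}(r) \le f_{R(\X)}(r)$ already recorded before Lemma~\ref{lemma:decomposition} together with the expression $f_{R(\X)}(r) = c_d\, r^{d-1} f_0(r^2)$ from the first step, to conclude
\[
	f_{\bar\xi}(\x)
		= \frac{f_{\xi_R}(r)}{c_d\, r^{d-1}}
		\le \frac{f_{R(\X)}(r)}{c_d\, r^{d-1}}
		= f_0(r^2)
		= f_{\X}(\x) ,
\]
which is the assertion. The only delicate point, and the step I would expect to require the most care, is the change-of-variables bookkeeping: ensuring that the angular Jacobian $J_\Theta(\tht)$ arising in the back-transformation cancels exactly against the density of $\bar\mu$, and that the set on which the hyperspherical chart degenerates is genuinely Lebesgue-null so that it can be ignored when identifying $f_{\bar\xi}$.
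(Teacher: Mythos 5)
Your proof is correct and follows essentially the same route as the paper's: both rest on the product structure $\bar{\xi}^{\bm{U}} = \xi_{R} \otimes \bar{\mu}$, the factorization of the covariate distribution in hyperspherical coordinates (which is where sphericity enters), and the radial marginal bound $f_{\xi_{R}} \leq f_{R(\X)}$. The only difference is one of presentation: you establish the factorization by explicit Jacobian bookkeeping, obtaining as a by-product the closed form $f_{\bar{\xi}}(\x) = f_{\xi_{R}}(\Vert \x \Vert)/(c_{d} \Vert \x \Vert^{d-1})$, whereas the paper reaches the same factorization abstractly via the uniqueness of the invariant measure as in Lemma~\ref{lemma:decomposition}.
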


\begin{proof}
	Boundedness is retained under
	the transformation to hyperspherical coordinates
	such that $f_{\xi^{\bm{U}}} \leq f_{\bm{U}(\X)}$.
	By integrating the angles $\tht$ out,
	this carries over to the marginal densities in the radius,
	$f_{R} \leq f_{R(\X)}$.
	Further, the distribution of $\X_{i}$ is invariant 
	with respect to $SO(d)$. 
	By the same arguments 
	as in Lemma~\ref{lemma:decomposition}, 
	the transformed vector $\bm{U}(\X_{i})$ has density
	$f_{\bm{U}(\X)}(\x) = f_{R(\X)}(r) f_{\bar{\mu}}(\tht)$,
	and the result follows. 
\end{proof}

For $d \geq 2$, 
let $\G \subset SO(d)$ be the finite group 
of rotations $\g$
which map the $d$-dimensional cross-polytope 
with vertices at the axes onto itself,
and let $\bar{\xi}_{\G} = \frac{1}{|\G|} \sum_{\g \in \G} \xi^{\g}$
be the symmetrization of $\xi$ with respect to $\G$.

\begin{lemma}
	\label{lemma:mean-of-rotated-designs}
	Let $\xi$ be invariant with respect to $\G$.
	Then
	\begin{equation}
		\label{eq:mean-of-rotated-designs}
		\M(\xi) 
		= \begin{pmatrix}
			\alpha & \bm{0}
			\\
			\bm{0} & \frac{1}{d} \int r^{2} \xi_{R}(\diff r) \Id_{d}
		\end{pmatrix} \, .
	\end{equation}
\end{lemma}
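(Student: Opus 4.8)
The plan is to translate the invariance of the design into an invariance of its information matrix and then exploit the symmetry group $\G$ to force the block structure. Since $\xi$ is invariant with respect to $\G$, we have $\xi^{\g} = \xi$ for every $\g \in \G$, and by the equivariance of the regression function~\eqref{eq:multiple-regression-equivariance-sod} the information matrix transforms as $\M(\xi) = \M(\xi^{\g}) = \Q_{\g}\,\M(\xi)\,\Q_{\g}^{\top}$ with $\Q_{\g} = \left(\begin{smallmatrix} 1 & \mathbf{0} \\ \mathbf{0} & \Rho_{\g}\end{smallmatrix}\right)$. Writing the information matrix in block form as
\[
	\M(\xi) = \begin{pmatrix} \alpha & \bm{v}^{\top} \\ \bm{v} & \bm{B} \end{pmatrix},
	\qquad
	\bm{v} = \int \x\, f_{\xi}(\x)\diff\x,
	\qquad
	\bm{B} = \int \x\x^{\top} f_{\xi}(\x)\diff\x,
\]
the top-left entry equals $\alpha$ by definition of a subsampling design, while the relation $\M(\xi) = \Q_{\g}\M(\xi)\Q_{\g}^{\top}$ decouples into $\bm{v} = \Rho_{\g}\bm{v}$ and $\bm{B} = \Rho_{\g}\bm{B}\Rho_{\g}^{\top}$ for every $\g \in \G$.

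The key observation is that, for each pair of indices $j \neq k$, the group $\G$ contains the rotation $\g_{jk}$ acting as a $90^{\circ}$ rotation in the $(x_j, x_k)$-plane and as the identity on the remaining coordinates; this map permutes the vertices $\pm e_j, \pm e_k$ of the cross-polytope among themselves, fixes all other vertices, and has determinant one, so $\g_{jk} \in \G \subset SO(d)$. Restricting $\Rho_{\g_{jk}}$ to the $(j,k)$-coordinates as $\left(\begin{smallmatrix} 0 & -1 \\ 1 & 0\end{smallmatrix}\right)$, the identity $\bm{v} = \Rho_{\g_{jk}}\bm{v}$ reads $v_j = -v_k$ and $v_k = v_j$, whence $v_j = v_k = 0$; ranging over all pairs shows that every first moment vanishes. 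Applying $\bm{B} = \Rho_{\g_{jk}}\bm{B}\Rho_{\g_{jk}}^{\top}$ to the $2 \times 2$ submatrix of $\bm{B}$ indexed by $(j,k)$ simultaneously forces the equality $B_{jj} = B_{kk}$ of the two diagonal entries and the vanishing $B_{jk} = 0$ of the off-diagonal entry. Since every pair is covered, $\bm{B} = m_{2}\,\Id_{d}$ for a common value $m_{2}$.

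It remains to identify $m_{2}$. Taking traces gives $d\,m_{2} = \Tr(\bm{B}) = \int \Vert\x\Vert^{2} f_{\xi}(\x)\diff\x = \int r^{2}\,\xi_{R}(\diff r)$, where the last equality follows by passing to hyperspherical coordinates, using $\Vert\x\Vert^{2} = r^{2}$, and integrating out the angular part so that only the radial marginal $\xi_{R}$ survives (this step needs no invariance, as the integrand depends on $r$ alone). Hence $m_{2} = \frac{1}{d}\int r^{2}\,\xi_{R}(\diff r)$, precisely the entry appearing in~\eqref{eq:mean-of-rotated-designs}. The one point demanding attention is the determinant constraint: because $\G \subset SO(d)$ we cannot invoke a single coordinate sign flip, which has determinant $-1$; however, the in-plane $90^{\circ}$ rotations already have determinant one and are by themselves enough to annihilate all first moments and all off-diagonal second moments while equating the diagonal ones, so no further group elements are required.
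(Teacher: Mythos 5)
Your proof is correct and follows essentially the same route as the paper's: invariance of $\xi$ under $\G$ forces all first moments and mixed second moments to vanish and the pure second moments to coincide, and the common value is then identified via the trace identity $\sum_{j} \int x_{j}^{2}\,\xi(\diff \x) = \int r^{2}\,\xi_{R}(\diff r)$. If anything, you are more careful than the paper: by exhibiting the in-plane $90^{\circ}$ rotations as explicit elements of $\G \subset SO(d)$ and working from the conjugation identity $\M(\xi) = \Q_{\g}\,\M(\xi)\,\Q_{\g}^{\top}$, you address the determinant-one subtlety (single coordinate sign flips are not available) that the paper's terse appeal to ``sign change and exchangeability'' leaves implicit.
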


\begin{proof}
	When $\xi$ is invariant with respect to $\G$,
	then all components $x_{j}$ are invariant with respect to sign change,
	and any two components $x_{j}$ and $x_{j^{\prime}}$
	are exchangeable.
	Hence, the off-diagonal entries
	$\int x_{j} \xi(\diff \x)$ and $\int x_{j} x_{j^{\prime}} \xi(\diff \x)$, 
	$j \neq j^{\prime}$, are equal to zero,
	while all diagonal entries $\int x_{j}^{2} \xi(\diff \x)$
	 are equal to each other.
	 Further, 
	 $\sum_{j=1}^{d} \int x_{j}^{2} \xi(\diff \x)
	 	= \int R(\x)^{2} \xi(\diff \x)$,
	 	and the representation~\eqref{eq:mean-of-rotated-designs}
	 	follows
	 	\citep[cf.][Lemma 4.9.]{gaffke1996approximate}. 
\end{proof}

A design criterion $\Phi$ is invariant with respect to $SO(d)$
if $\Phi(\xi^{\g}) = \Phi(\xi)$ for any $\g \in SO(d)$ and any $\xi$.

\begin{theorem}
	\label{theorem:optimality-of-sym-design}
	For the multiple linear regression model with $d \geq 2$ covariates, 
	let $\Phi$ be a convex optimality criterion 
	that is invariant with respect to $SO(d)$. 
	Then for any design $\xi$ it holds that 
	\begin{equation*}
		\label{eq:optimality-of-sym-design}
		\Phi(\bar\xi) \leq \Phi(\xi) \, ,
	\end{equation*} 
	where $\bar{\xi} = \xi_{R} \otimes \bar{\mu}$
	is the symmetrization of $\xi$ with respect to $SO(d)$
	and $\xi_{R}$ is the marginal design of $\xi$ on the radius $r$.
\end{theorem}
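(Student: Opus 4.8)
The plan is to realize the symmetrization $\bar{\xi}$ as a Haar average of the rotated designs $\xi^{\g}$ and then exploit convexity. Let $\nu$ denote the Haar probability measure on the compact group $SO(d)$, and define the averaged design $\tilde{\xi} = \int_{SO(d)} \xi^{\g}\,\nu(\diff\g)$, understood setwise through $\tilde{\xi}(B) = \int_{SO(d)} \xi^{\g}(B)\,\nu(\diff\g)$ for measurable $B$. First I would verify that $\tilde{\xi}$ is invariant with respect to $SO(d)$: since $(\xi^{\g})^{\bm{h}} = \xi^{\bm{h}\g}$, for any $\bm{h}\in SO(d)$ one has $(\tilde{\xi})^{\bm{h}} = \int_{SO(d)} \xi^{\bm{h}\g}\,\nu(\diff\g) = \tilde{\xi}$ by the left-invariance of Haar measure.

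Second, I would identify $\tilde{\xi}$ with $\bar{\xi}$. Because $\tilde{\xi}$ is $SO(d)$-invariant, Lemma~\ref{lemma:decomposition} yields $\tilde{\xi}^{\bm{U}} = \tilde{\xi}_{R} \otimes \bar{\mu}$. Since the radius is invariant under rotations, $R(\g(\x)) = R(\x)$, each $\xi^{\g}$ has the same radial marginal as $\xi$, so $\tilde{\xi}_{R} = \xi_{R}$; hence $\tilde{\xi}^{\bm{U}} = \xi_{R} \otimes \bar{\mu} = \bar{\xi}^{\bm{U}}$, and therefore $\tilde{\xi} = \bar{\xi}$.

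Third, I would pass to information matrices. As $\M(\cdot)$ is linear in the design, Fubini's theorem gives $\M(\bar{\xi}) = \M(\tilde{\xi}) = \int_{SO(d)} \M(\xi^{\g})\,\nu(\diff\g)$, where $\M(\xi^{\g}) = \Q_{\g}\M(\xi)\Q_{\g}^{\top}$ by the equivariance~\eqref{eq:multiple-regression-equivariance-sod}. Convexity of $\Phi$ as a function of the information matrix, together with Jensen's inequality, then gives $\Phi(\bar{\xi}) = \Phi\left(\int_{SO(d)} \M(\xi^{\g})\,\nu(\diff\g)\right) \leq \int_{SO(d)} \Phi(\M(\xi^{\g}))\,\nu(\diff\g)$, and by the $SO(d)$-invariance of $\Phi$ each integrand equals $\Phi(\xi^{\g}) = \Phi(\xi)$, so the right-hand side collapses to $\Phi(\xi)$. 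This establishes $\Phi(\bar{\xi}) \leq \Phi(\xi)$.

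The main obstacle is the identification $\tilde{\xi} = \bar{\xi}$, since $\bar{\xi}$ is defined through its hyperspherical representation while $\tilde{\xi}$ is a group average in Cartesian coordinates; reconciling the two rests on Lemma~\ref{lemma:decomposition} and on the rotation-invariance of the radial marginal, and the interchange $\M(\tilde{\xi}) = \int_{SO(d)} \M(\xi^{\g})\,\nu(\diff\g)$ needs a Fubini/measurability justification. The remaining ingredients—equivariance of $\M$ under $\Q_{\g}$ and the invariance and convexity of $\Phi$—are already available.
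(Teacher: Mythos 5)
Your proof is correct, but it takes a genuinely different route from the paper's. The paper never touches the continuous group average: it works with the \emph{finite} subgroup $\G \subset SO(d)$ of symmetries of the cross-polytope (sign changes and coordinate permutations) and forms $\bar{\xi}_{\G} = \frac{1}{|\G|}\sum_{\g \in \G}\xi^{\g}$, so convexity is needed only for finite convex combinations --- no Haar measure, no Fubini, no integral form of Jensen's inequality. Crucially, the paper does \emph{not} claim $\bar{\xi}_{\G} = \bar{\xi}$ as measures (this is false in general, since the finite symmetrization is not fully rotation-invariant); instead it invokes Lemma~\ref{lemma:mean-of-rotated-designs} to show that $\bar{\xi}_{\G}$ and $\bar{\xi}$, being both $\G$-invariant with the same radial marginal $\xi_{R}$, have \emph{identical} information matrices of the form $\operatorname{diag}\bigl(\alpha, \tfrac{1}{d}\int r^{2}\,\xi_{R}(\diff r)\,\Id_{d}\bigr)$, which suffices because $\Phi$ depends on the design only through $\M$. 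Your approach is the classical invariant-design argument: it establishes the stronger and conceptually cleaner identity $\bar{\xi} = \int_{SO(d)}\xi^{\g}\,\nu(\diff\g)$ via Lemma~\ref{lemma:decomposition} and the rotation-invariance of the radial marginal, at the price of exactly the measure-theoretic overhead you flag --- measurability of $\g \mapsto \xi^{\g}(B)$, the Fubini interchange $\M(\bar{\xi}) = \int \M(\xi^{\g})\,\nu(\diff\g)$, and Jensen's inequality for the integral average (which is unproblematic here since all $\M(\xi^{\g}) = \Q_{\g}\M(\xi)\Q_{\g}^{\top}$ share the eigenvalues of $\M(\xi)$, so the barycenter stays in the positive-definite cone). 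In short: the paper buys elementarity by reducing to a finite group that is just large enough to diagonalize the information matrix; you buy a stronger structural statement about $\bar{\xi}$ itself at the cost of heavier machinery. Both proofs share the same skeleton --- convexity plus $SO(d)$-invariance of $\Phi$, and the implicit standing assumption that $\Phi$ is a function of the information matrix.
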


\begin{proof}
	By the convexity of $\Phi$, we have
	\begin{equation}
		\label{eq:convexity}
		\Phi(\bar{\xi}_{\G}) \leq \frac{1}{|\G|} \sum_{\g \in \G} \Phi(\xi^{\g}) \, .
	\end{equation}
	Because $\G \subset SO(d)$,	
	$\Phi$ is invariant with respect to $\G$
	and, hence, $\Phi(\xi^{\g}) = \Phi(\xi)$ for all $\g \in \G$.
	As a consequence, the right hand side of
	the inequality~\eqref{eq:convexity} equals $\Phi(\xi)$.
	Further,
	notice that both $\bar{\xi}$ 
	and $\bar{\xi}_{\G}$ 
	have marginal $\xi_{R}$ on the radius
	and are invariant with respect to $\G$.
	Thus, the left hand side of
	the inequality~\eqref{eq:convexity} equals  
	$\Phi(\bar{\xi})$ by 
	Lemma \ref{lemma:mean-of-rotated-designs}. 
\end{proof}

\begin{proof}[Proof of Theorem \ref{theorem:dens-of-opt-design-spherical}]
	By Lemma~\ref{lemma:sym-design-bounded} 
	and Theorem~\ref{theorem:optimality-of-sym-design},
	we may restrict our search for a $D$-optimal subsampling design
	to the essentially complete class of invariant designs $\bar{\xi}$.
	By symmetry considerations,
	$\m(\bar{\xi}) = \mathbf{0}$
	and $\mathbf{S}(\bar{\xi})$ is a multiple of the identity matrix.
	Hence, the result follows from
	Theorem~\ref{theorem:opt-design-general-distribution}.
\end{proof}

The particular shape of the $D$-optimal subsampling design 
ensures that $\xi_{\alpha}^{*}$ is unique.	

\begin{proof}[Proof of equation~\eqref{eq:standard-normal-second-moment-optimal}]
	As in Lemma~\ref{lemma:mean-of-rotated-designs},
	we see that the information matrix of $\xi_{\alpha}^{*}$
	is of the form
	\[
		\M(\xi_{\alpha}^{*})
		= \begin{pmatrix}
				\alpha & \mathbf{0}
				\\
				\mathbf{0} 
					& m_{2}(\xi_{\alpha}^{*}) \Id_{d}
		\end{pmatrix} \, ,
	\]
	where $m_{2}(\xi_{\alpha}^{*}) 
		= \E\left[R^{2} \1_{\{R^{2} \geq \chi^{2}_{d, 1-\alpha}\}}\right] / d$
	by Theorem~\ref{theorem:dens-of-opt-design-spherical}.
	The squared radius $W = R^{2}$ has a $\chi^{2}$-distribution with $d$ degrees of freedom. 
	The truncated moment 
	$\E\left[W \1_{\{W \geq \chi^{2}_{d, 1-\alpha}\}}\right]
		= \int_{\chi^{2}_{d, 1-\alpha} }^{\infty} w f_{\chi^{2}_{d}}(w) \diff w$
	can be calculated by using the density
	$f_{\chi^{2}_{d}}(w) =  2^{-d/2} \Gamma(d/2)^{-1} w^{(d/2) - 1} \exp(-w/2)$
	of the $\chi^{2}$-distribution.
	Integration by parts yields
	\[
		m_{2}(\xi_{\alpha}^{*}) 
		= \frac{(\chi^{2}_{d, 1-\alpha})^{d/2} \exp(- \chi^{2}_{d, 1-\alpha} / 2)}%
					{d 2^{(d/2) - 1} \Gamma(d/2)} 
				+ \int_{\chi^{2}_{d, 1-\alpha}}^{\infty} \frac{w^{(d/2)-1} \exp(-w/2)}%
					{2^{d/2} \Gamma(d/2)} \diff w \, . 
	\]
	The first term on the right hand side can be written as
	$2 \chi^{2}_{d, 1-\alpha} f_{\chi^{2}_{d}}(\chi^{2}_{d, 1-\alpha}) / d$,
	while the second term simplifies to $\alpha$ 
	as the expression under the integral is the density 
	of the $\chi^{2}$-distribution. 
\end{proof}

\begin{lemma}
	\label{lemma:opt-transformed-design}
	Let the covariates $\X_{i}$ have density $f_{\X}$,
	let $\bm{A}$ be nonsingular,
	and let $\Z_{i} = \bm{A} \X_{i} + \mx$ be
	affine linearly transformed covariates.
	If $\xi_{\alpha}^{*}$ is a $D$-optimal subsampling design
	for the covariates $\X_{i}$, 
	then the transformed design 
	$\zeta_{\alpha}^{*} = (\xi_{\alpha}^{*})^{\g_{\bm{A}, \mx}}$ 
	is a $D$-optimal subsampling design for the covariates $Z_{i}$
	with density $f_{\Z}(z) = f_{\X}(\bm{A}^{-1}(\z - \mx)) / |\det(\bm{A})|$.
\end{lemma}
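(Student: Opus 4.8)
The plan is to show that the affine map $\g_{\bm{A},\mx}$ sets up a bijection between the feasible subsampling designs for $\X_i$ and those for $\Z_i$, under which the $D$-criterion changes only by a fixed multiplicative constant; optimality then transfers immediately. First I would record the behavior of the image measure: for any subsampling design $\xi$ for $\X_i$, its image $\zeta = \xi^{\g_{\bm{A},\mx}}$ is defined by $\zeta(B) = \xi(\g_{\bm{A},\mx}^{-1}(B))$, so $\zeta$ again has total mass $\alpha$, and by the change-of-variables formula it has density $f_{\zeta}(\z) = f_{\xi}(\bm{A}^{-1}(\z - \mx)) / |\det(\bm{A})|$.

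Next I would verify feasibility. Since $f_{\xi} \leq f_{\X}$ pointwise, the density formula gives $f_{\zeta}(\z) \leq f_{\X}(\bm{A}^{-1}(\z - \mx)) / |\det(\bm{A})| = f_{\Z}(\z)$, so $\zeta$ is a bounded design for $\Z_i$. Applying the same argument to the inverse transformation $\g_{\bm{A},\mx}^{-1} = \g_{\bm{A}^{-1}, -\bm{A}^{-1}\mx}$ shows that every bounded $\Z$-design is the image of exactly one bounded $\X$-design; hence $\xi \mapsto \xi^{\g_{\bm{A},\mx}}$ is a bijection between the two feasible classes.

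The central computation is the transformation of the information matrix. Substituting $\z = \g_{\bm{A},\mx}(\x)$ in $\M(\zeta) = \int \f(\z) \f(\z)^{\top} f_{\zeta}(\z) \diff \z$ and using the linear equivariance $\f(\g_{\bm{A},\mx}(\x)) = \Q_{\bm{A},\mx} \f(\x)$ together with the density formula, the two Jacobian factors $|\det(\bm{A})|$ cancel and I obtain $\M(\zeta) = \Q_{\bm{A},\mx} \M(\xi) \Q_{\bm{A},\mx}^{\top}$. Because $\Q_{\bm{A},\mx}$ is block lower triangular with diagonal blocks $1$ and $\bm{A}$, its determinant equals $\det(\bm{A})$, so $\det(\M(\zeta)) = \det(\bm{A})^{2} \det(\M(\xi))$.

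Finally I would conclude. Maximizing $\det(\M(\zeta))$ over all bounded $\Z$-designs is, via the bijection and the identity above, the same as maximizing $\det(\bm{A})^{2} \det(\M(\xi))$ over all bounded $\X$-designs; since the factor $\det(\bm{A})^{2} > 0$ does not depend on the design, the two problems share the same maximizers. As $\xi_{\alpha}^{*}$ is $D$-optimal for $\X_i$ by hypothesis, its image $\zeta_{\alpha}^{*} = (\xi_{\alpha}^{*})^{\g_{\bm{A},\mx}}$ is $D$-optimal for $\Z_i$. The proof is essentially routine equivariance; the only point needing genuine care is the surjectivity of the bijection onto the feasible $\Z$-designs, that is, checking the density bound $f_{\zeta} \leq f_{\Z}$ in both directions, which is exactly what the inverse-transformation argument supplies.
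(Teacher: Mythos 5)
Your proof is correct and follows essentially the same route as the paper's: establish the bijection between the feasible design classes under the affine map, use equivariance to get $\M(\zeta) = \Q_{\bm{A},\mx}\,\M(\xi)\,\Q_{\bm{A},\mx}^{\top}$ and hence $\det(\M(\zeta)) = \det(\Q_{\bm{A},\mx})^{2}\det(\M(\xi))$, and conclude that optimality transfers. You merely spell out the details (change of variables, the density bound in both directions, the block-triangular determinant) that the paper's terser proof leaves implicit.
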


\begin{proof}
	First, note that 
	$\zeta_{\alpha} = \xi_{\alpha}^{\g_{\bm{A}, \mx}}$
	is a subsampling design for covariates $\Z_{i}$
	if and only if
	$\xi_{\alpha}$
	is a subsampling design for covariates $\X_{i}$.
	Further, by considerations of equivariance,
	$\M(\zeta_{\alpha}) 
		= \Q_{\bm{A}, \bm{\mu}} \M(\xi_{\alpha}) \Q_{\bm{A}, \bm{\mu}}^{\top}$
	and, hence,
	$\det(\M(\zeta_{\alpha})) 
		= \det(\Q_{\bm{A}, \bm{\mu}})^{2} \det(\M(\xi_{\alpha}))$.
	Thus, $\zeta_{\alpha}^{*}$ is $D$-optimal
	if and only if 
	$\xi_{\alpha}^{*}$ is $D$-optimal.
\end{proof}

\begin{proof}[Proof of Theorem \ref{theorem:dens-of-opt-design-elliptical}]
	Let $\bm{A}$ be a square root of $\bm{\Sigma}$,
	i.\,e.\ $\bm{A} \bm{A}^{\top} = \bm{\Sigma}$.
	If the distribution of $\X_{i}$ is elliptical with mean $\mx$ 
	and covariance matrix $\bm{\Sigma}$,
	then the distribution of $\bm{A}^{-1} (\X_{i} - \mx)$
	is spherical and centered.
	By Lemma~\ref{lemma:opt-transformed-design},
	the result follows.
\end{proof}

\begin{proof}[proof of Lemma~\ref{lemma:unbounded-distribution}]
	Let $f_{W}$ be the density of $W_{i} = R(\X_{i})^{2}$.
	Then 
	$d s^{2}(\xi_{\alpha}^{*}) 
		= \int_{q_{1 - \alpha}}^{\infty} w f_{W}(w) \diff w
		\geq \int_{q_{1 - \alpha}}^{\infty} q_{1 - \alpha} f_{W}(w) \diff w
		= \alpha q_{1 - \alpha}$.
	Hence,
	$s^{2}(\xi_{\alpha}^{*}) / \alpha \geq q_{1 - \alpha} / d$,
	and the right hand side tends to infinity 	for $\alpha \to 0$ 
	when the distribution of $\X_{i}$ is unbounded.
\end{proof}

\bibliographystyle{plainnat}  
\bibliography{ref_Linear}

\end{document}